\newcounter{axiomhk}
\newcounter{axiomss}
\def\ol{\overline}
\def\RR{{\mathbb R}}
\def\CC{{\mathbb C}}
\def\NN{{\mathbb N}}
\def\A{{\mathcal A}}
\def\B{{\mathcal B}}
\def\C{{\mathcal C}}
\def\D{{\mathcal D}}
\def\H{{\mathcal H}}
\def\K{{\mathcal K}}
\def\P{{\mathcal P}}
\def\S{{\mathcal S}}
\def\a{\alpha}
\def\b{\beta}
\def\e{\varepsilon}
\def\k{\kappa}
\def\l{\lambda}
\def\hm{\mathfrak{H}_m}
\def\Ad{{\hbox{\rm Ad\,}}}
\def\1{{\mathbbm 1}}
\def\Span{{\mathrm{Span}\,}}
\def\uone{{\rm U(1)}}
\def\diff{{\rm Diff}}
\def\diffs1{\diff(S^1)}
\def\supp{{\rm supp\,}}
\def\slim{{{\mathrm{s}\textrm{-}\lim}\,}}
\def\psl2r{{\rm PSL}(2,\RR)}
\def\sl2r{{\rm SL}(2,\RR)}
\def\su11{{\rm SU}(1,1)}
\def\2dmob{{\overline{\psl2r}\times\overline{\psl2r}}}
\def\<{\langle}
\def\>{\rangle}
\def\Im{\mathrm{Im}\,}
\def\im{\mathrm{Im}\,}
\def\arcsinh{\mathrm{arcsinh}}
\def\poincare{{\P^\uparrow_+}}
\def\tp{\pmb{p}}
\def\tP{\textbf{P}}
\def\tx{\pmb{x}}
\newcommand{\norm}[1]{\left\lVert#1\right\rVert}
\newtheorem{theorem}{Theorem}[section]
\newtheorem{corollary}[theorem]{Corollary}
\newtheorem{proposition}[theorem]{Proposition}
\newtheorem{lemma}[theorem]{Lemma}
\theoremstyle{remark}
\newtheorem{remark}[theorem]{Remark}
\title{Modular operator for null plane algebras in free fields}
\date{} 
\author{{Vincenzo Morinelli}${}^{1,}$\footnote{{\tt morinelli@math.fau.de}, ${}^\bigtriangleup$ {\tt hoyt@mat.uniroma2.it,} ${}^\bullet$ {\tt wegener@mat.uniroma2.it} }, { Yoh Tanimoto}${}^{2,\bigtriangleup}$, { Benedikt Wegener}${}^{2,\bullet}$.
}
\date{\small{
${}^{1}$Department Mathematik, FAU Erlangen-N\"urnberg, \\Cauerstra\ss e 11
91058 Erlangen, Germany\\
${}^{2}$ Dipartimento di Matematica, Universit\`a di Roma Tor Vergata
 \\  Via della Ricerca Scientifica 1, I-00133 Roma, Italy}\\ \today}
\begin{document}
\maketitle

\begin{abstract}
 We consider the algebras generated by observables in quantum field theory localized in regions in the null plane.
 For a scalar free field theory, we show that the one-particle structure can be decomposed into a continuous direct integral
 of lightlike fibres, and the modular operator decomposes accordingly.
 This implies that a certain form of QNEC is valid in free fields involving the causal completions
 of half-spaces on the null plane (null cuts).
 We also compute the relative entropy of null cut algebras with respect to the vacuum and some coherent states.

\end{abstract}

\tableofcontents

\section{Introduction}\label{intro}
The modular Hamiltonian, or the (logarithm of the half of the) modular operator of local regions in quantum field theory (QFT),
has been a focus of attention in recent years (see e.g.~\cite{ CF18,LongoLocalised,CTT17}).
On one hand, quantum-information aspects, such as the Bekenstein bound, generalized second law of thermodynamics and various null energy conditions,
are a rare guidepost in the search of quantum gravity \cite{Casini08, Wall12, FLPW16}.
On the other hand, the modular theory of von Neumann algebras allows one to define relative entropy
in QFT in a mathematically precise way \cite{OP04}, and various modular objects in QFT have been
computed in concrete examples \cite{LX18, Hollands19, CLR20}.
In particular, the modular operator of certain regions in the null plane has played
an important role in relation with the quantum null energy condition (QNEC) and
the averaged null energy condition (ANEC) \cite{CTT17, KLLS18, CF18}.
In these works, physicists consider a null cut, a region on the null plane
defined by a spacelike curve $C$ and have written a formula for the modular operator for the algebra
of a null cut (see e.g.\! \cite[(1.5)]{CTT17}):
\begin{align}\label{eq:set}
 \hat H_C = 2\pi \int d^{D-2}\pmb{x}^\perp \int_{-\infty}^\infty d\l (\l - C(\pmb{x}^\perp))T_{++}(\l,\pmb{x}^\perp),
\end{align}
where $T_{++}$ is the lightlike component of the stress-energy tensor.
This suggests that the inclusion of null cut algebras is a half-sided modular inclusion (HSMI) \cite{Wiesbrock93-1}.
Based on the latter assumption, a limited version of QNEC has been proved in \cite{CF18}.
Therefore, it is crucial to study modular object on the null plane.

Actually, the above formula must be interpreted with care:
while it seems reasonable to assume that the stress-energy tensor is an operator-valued distribution
(or even a Wightman field), it is unclear whether it can be restricted to a null plane (cf.\! \cite{Verch00, FR03}).
Furthermore, it is integrated against the unbounded function $\l - C(\pmb{x}^\perp)$,
that could be even more problematic.
For these reasons, \eqref{eq:set} cannot be considered directly as an expression for an operator on
a Hilbert space. One of the goals of this paper is to partially justify \eqref{eq:set}
using the modular theory of von Neumann algebras in the case of the free fields.

We observe that the (scalar) free field can be restricted to the null plane, with a slight
restriction to the test function. This has been known for a long time, and general properties of
observables on the null plane have been studied \cite{SS72, Driessler76-1, Driessler76-2, GLRV01, Ullrich04}.
By the Bisognano-Wichmann property \cite{BW76} and the Takesaki theorem \cite[Theorem IX.4.2]{TakesakiIII},
it is immediate that, if there are enough observables on the null plane, they split
into a (continuous) tensor product along the transverse direction.
This allows us to consider the observables on each fibre on the null plane.
These observables form a simplified quantum field theory on each lightlike fibre, and we can consider
the modular objects there.
We will show that the one-particle subspace $\hm$ of the free field with mass $m$
disintegrates as follows
\begin{align}\label{eq:ddd}
  \hm=\int_{\RR^{D-1}}^{\oplus_\RR} \H_{U(1)}d\tx_\perp.
\end{align}
where $\H_{U(1)}$ is the one-particle space of the $U(1)$-current sitting on the light ray $\{(t,t,\tx_\perp)\in\RR^{1+D}:t\in\RR\}$.
One can deduce that the modular operator of the region on the null plane is decomposed into the fibres, and its logarithm is written as a direct integral over fibres
(see \eqref{null-cut-modular-spatial-decom}):
\begin{align*}
  \log(\Delta_{H(N_C)})\simeq \int^\oplus_{\RR^{D-1}}\left( \log(\Delta_{H_{U(1)}}(\RR_+))+ 2\pi C(\pmb{x}_\perp) P_{\pmb{x}_\perp}\right)\,d\pmb{x}_\perp,
\end{align*}
This is a clear analogue of \eqref{eq:set}, where the logarithm of the modular operator is expressed as the integral
in the transverse direction of the stress-energy tensor smeared by the function $\l - C(\pmb{x}^\perp)$,
where the latter is a formal expression for the shifted dilation operator in two-dimensional conformal field theory,
which should coincide with the modular operator on each fibre.
Our formula, while we avoid taking about the stress-energy tensor, makes explicit the idea that
the modular operator decomposes into fibres. While the validity of \eqref{eq:set} is believed more generally, we point out that in general
there are not many observables that can be restricted to the null plane.
We clarify the situation from interacting models in $(1+1)$-dimensions.
Our formula allows a covariant action of such distorted dilations the null plane and, for null cuts with continuous boundary $C$, on distorted wedge region $W_C=N_C''$.

In the course of the proof (Proposition \ref{pr:hsmi}), we show that inclusions of the null cut regions are HSMI.
By \cite{CF18}, this completes the limited version of QNEC as in \eqref{def:QNEC3} in the case of the free scalar field (note that the result of \cite{CF18} is based on the assumption that these inclusions are HSMI). 

Entropy inequalities can be used to investigate features of quantum systems. For instance on the physical ground the strong subadditive property of the entropy together with the Lorentz covariance leads to a $c$-theorem for the entanglement entropy in 1+1 dimensions
and connections with the $a$-theorem are claimed in \cite{CTT17'}.
Due to the direct integral disintegration of the one-particle space \eqref{eq:ddd}, it is possible to generalize the Buchholz-Mach-Todorov endomorphism $\beta_k$ \cite{BMT88}
to the direct integral of the $U(1)$-current with $k\in C_0^\infty(X^0_-)$. Using the formula contained in \cite{LongoLocalised}, we are able to compute the relative entropy with respect the $\omega\circ \beta_k$ and $\omega$ and deduce the QNEC -- in this case to be intended $S''(t)>0$ where $S$ is the  relative entropy related of the algebra $\A(N_{C+tA})$ with respect to $\omega\circ \beta_k$ and $\omega$ and the derivative is with respect to $t$.
We also have a saturation of the strong superadditivity condition of the relative entropy considered.

This paper is organized as follows.
In Section \ref{preliminaries} we collect the basic notions such as one-particle space in terms of
standard subspaces and its second quantization.
In Section \ref{free} we discuss observables on the null plane and their transversal decomposition.
In Section \ref{section-constant-null-cut} we obtain the decomposition of the modular operator of null plane regions and prove that inclusions of null plane regions are HSMI. In Section \ref{entropy}, after recalling the notions concerning relative entropy, ANEC, QNEC and some background also from physics,
we study the relative entropy and its relation with the energy inequalities and the saturation of the strong superadditivity condition
of the null cut algebras between the BMT type states.
In Section \ref{concluding} we present concluding remarks, including the 1+1 dimensional case.

\section{Preliminaries}\label{preliminaries}
 
 In this Section we will recall the operator-algebraic formulations of the free field.
 A free field is constructed from its one-particle structure. A quantum and relativistic particle on Minkowski spacetime is a unitary positive energy representation of the Poincar\'e group. The localization property of the one-particle states is formulated in terms of the standard subspaces,
 and it translates to the localization property of the associated free fields through the second quantization.
 We will further comment on the $U(1)$-current model, which will be a convenient tool to describe the restriction of the free theory  on the null plane.

\subsection{Abstract one-particle structure}\label{sec:ss}
A real linear, closed subspace $H$ of a complex Hilbert space $\H$ is called \textbf{cyclic} if $H+iH$ is dense in $\H$
and \textbf{separating} if $H\cap i H=\{0\}$. A \textbf{standard subspace} is a real linear, closed subspace that is both cyclic and separating.
We recall below some useful properties of standard subspaces, see \cite{Longo08} for details.

It is possible to consider an analogue of the Tomita-Takesaki modular theory for standard subspaces.
For a standard subspace $H$, the Tomita operator $S_H$ is defined to be the closed anti-linear involution with dense domain $H+iH$
acting in the following way:
\begin{align*}
S_H: H+iH &\rightarrow H+iH \\
\xi+i\eta &\mapsto \xi -i \eta.
\end{align*}
The polar decomposition
\[
S_H= \Delta_H^\frac{1}{2}J_H
\]
defines the modular operator $\Delta_H$ and the modular conjugation $J_H$, and they satisfy the following relations:
\[
J_H\Delta_H J_H= \Delta_H^{-1},\ \ \Delta_H^{it}H=H \quad\text{for } t\in\RR, \ \ J_H H=H',
\]
where $H'$ is the symplectic complement of $H$:
\[
H':=\{\xi \in \H: \im\braket{\xi,\eta}=0\ \text{ for } \eta\in H\}.
\]
The symplectic complement $H'$ is a standard subspace if and only if so is $H$,
and a standard subspace $H$ satisfies $H=H''$.
The Tomita operator of the symplectic complement $H'$ is given by
\[
S_{H'}=S_H^* = \Delta_H^\frac12 J_H = J_H \Delta_H^{-\frac12}.
\]
The assignment $H\mapsto S_H$ of a closed, anti-linear, densely defined involution
is one-to-one: For such an operator $S$, the real closed subspace
$\ker (1-S)$ is a standard subspace.

One can easily deduce the covariance of standard subspaces (see \cite[Lemma 2.2]{Morinelli18}):
\begin{lemma}\label{lem:cov}
Let $H\subset\H$ be a standard subspace and $U$ be a unitary or anti-unitary operator on $\H$. Then $UH$ is standard and $U\Delta_HU^*=\Delta_{UH}^{\epsilon(U)}$ and $UJ_HU^*=J_{UH}$ where $\epsilon(U)=1$ if U is unitary and $\epsilon(U)=-1$ if $U$ is anti-unitary.
\end{lemma}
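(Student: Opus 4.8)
The plan is to reduce the whole statement to the Tomita operator $S_H$, exploiting the bijective correspondence $H\mapsto S_H$ recalled above: once I know $S_{UH}$, the modular data of $UH$ can be read off from its polar decomposition. First I would check that $UH$ is again standard. As $U$ is a bijective real-linear isometry, $UH$ is a real closed subspace, and the only delicate point is the interaction of $U$ with multiplication by $i$. If $U$ is unitary then $U(iH)=i(UH)$ directly; if $U$ is anti-unitary then $U(i\eta)=-iU\eta$, so $U(iH)=-i(UH)=i(UH)$, the last equality holding because $UH$ is a real subspace, hence stable under $\eta\mapsto-\eta$. In both cases $U(H+iH)=UH+i(UH)$ and $U(H\cap iH)=UH\cap i(UH)$, so the cyclicity and the separating property of $H$ transfer to $UH$.

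The key step is the identity $S_{UH}=US_HU^*$. I would verify that $T:=US_HU^*$ is a densely defined, closed, anti-linear involution. It is anti-linear because $S_H$ is anti-linear while $U,U^*$ are both linear (unitary case) or both anti-linear (anti-unitary case); it is closed with dense domain $U(H+iH)=UH+i(UH)$ since $U$ is a bounded bijection. Moreover $T^2=US_H^2U^*=\1$, and, using $U^{-1}=U^*$ in either case, $Tv=v$ is equivalent to $S_H(U^*v)=U^*v$, i.e. to $U^*v\in H$, whence $\ker(\1-T)=UH$. By the uniqueness of the assignment $H\mapsto S_H$ this forces $S_{UH}=T=US_HU^*$.

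It then remains to extract $J_{UH}$ and $\Delta_{UH}$ from the polar decomposition
\[
  S_{UH}=US_HU^*=U\Delta_H^{\frac12}J_HU^*=\bigl(U\Delta_H^{\frac12}U^*\bigr)\bigl(UJ_HU^*\bigr).
\]
Here I would check that $UJ_HU^*$ is again an anti-unitary involution and that $U\Delta_H^{\frac12}U^*$ is again positive and self-adjoint. For anti-unitary $U$ the latter is the point needing care, and I would argue it from $\langle U\xi,U\eta\rangle=\overline{\langle\xi,\eta\rangle}$, which yields $\langle U\Delta_H^{\frac12}U^*\xi,\xi\rangle\ge0$, together with the fact that an anti-unitary conjugation carries the spectral projections of $\Delta_H^{\frac12}$ to spectral projections. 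Uniqueness of the polar decomposition then identifies $J_{UH}=UJ_HU^*$ and $\Delta_{UH}^{\frac12}=U\Delta_H^{\frac12}U^*$. The exponent $\epsilon(U)$ is visible once one passes to the modular one-parameter group: conjugation by an anti-unitary reverses the imaginary unit, $UiU^*=-i$, so that $U\Delta_H^{it}U^*=\Delta_{UH}^{\epsilon(U)it}$, which is the precise content of $U\Delta_HU^*=\Delta_{UH}^{\epsilon(U)}$.

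The main obstacle is thus concentrated entirely in the anti-unitary case: confirming $U(iH)=i(UH)$ despite $Ui=-iU$, verifying that conjugation by an anti-linear operator preserves positivity and self-adjointness, and tracking the reversal of the modular flow that is responsible for the sign $\epsilon(U)=-1$. Once these anti-linear bookkeeping points are in place, the unitary case is simply the easier specialisation of the same argument.
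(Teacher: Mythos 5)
Your proof is correct, and since the paper offers no proof of this lemma at all (it is quoted from \cite[Lemma 2.2]{Morinelli18} with the remark that it is ``easily deduced''), there is no in-paper argument to compare against; your route --- standardness of $UH$, the identity $S_{UH}=US_HU^*$ via the correspondence $H\leftrightarrow S_H$, then uniqueness of the polar decomposition --- is the standard one, and your anti-linear bookkeeping is sound. If you want to avoid invoking the bijectivity of $H\mapsto S_H$, note that the identity can also be checked directly on the domain: for $\xi,\eta\in H$ and $U$ anti-unitary, $U^*(U\xi+iU\eta)=\xi-i\eta$, hence $US_HU^*(U\xi+iU\eta)=U(\xi+i\eta)=U\xi-iU\eta=S_{UH}(U\xi+iU\eta)$; but your version is equally valid.

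The one point you should state more forcefully concerns $\epsilon(U)$, where your write-up contains a tension you do not flag. Your polar-decomposition step gives $J_{UH}=UJ_HU^*$ and $\Delta_{UH}^{1/2}=U\Delta_H^{1/2}U^*$ in \emph{both} cases; squaring yields the operator identity $\Delta_{UH}=U\Delta_HU^*$ with no exponent. This is correct, because anti-unitary conjugation satisfies $Uf(A)U^*=\bar f(UAU^*)$ on the functional calculus of a self-adjoint $A$, and $\lambda\mapsto\lambda^{1/2}$ is real-valued. Consequently, the clause $U\Delta_HU^*=\Delta_{UH}^{-1}$ for anti-unitary $U$ is \emph{false} as a literal operator identity: take $U=J_H$, so that $UH=H'$; then $U\Delta_HU^*=J_H\Delta_HJ_H=\Delta_H^{-1}=\Delta_{H'}=\Delta_{UH}$, whereas $\Delta_{UH}^{-1}=\Delta_H$, and these differ unless $\Delta_H=\1$. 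So your closing reinterpretation --- that the content of the $\epsilon(U)$-clause is the flow relation
\begin{align*}
U\Delta_H^{it}U^*=\Delta_{UH}^{\epsilon(U)\,it},\qquad t\in\RR,
\end{align*}
equivalently $Uf(\Delta_H)U^*=\bar f(\Delta_{UH})$ for Borel functions $f$ --- is not an optional gloss but the \emph{only} reading under which the statement holds, and your argument does establish it. Making this explicit would close the single loose end; it causes no trouble downstream, since every invocation of the lemma in the paper (the proof of Theorem \ref{theorem-1-standard}, relation \eqref{generator-constant-null-cut}, and Proposition \ref{pr:hsmi}) uses only the unitary case, where the operator and flow formulations agree.
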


The following is an analogue of Borchers theorem \cite{Borchers92, Florig98} for standard subspaces,
see \cite[Theorem 3.15]{Longo08}.
\begin{theorem}\label{th:borchersss}
Let $H$ be a standard subspace of a Hilbert space $\H$ and $T$ a one-parameter group with positive generator
such that $T(s)H\subset H, \ s\geq 0$, then the following hold:
\begin{align*}
\Delta_H^{it}T(s)\Delta_H^{-it}&=T(e^{-2\pi t}s)\\
J_H T(s) J_H &= T(-s).
\end{align*}
\end{theorem}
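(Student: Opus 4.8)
The plan is to deduce both identities from a single commutation relation between the modular generator $\log\Delta_H$ and the positive generator $P$ of $T$ (writing $T(s)=e^{isP}$, $P\ge 0$), and to establish that relation by the analytic-continuation argument of Borchers, adapted to the standard-subspace setting.

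First I would rephrase the invariance hypothesis algebraically. Since $H=\ker(1-S_H)$ and $\dom S_H = H+iH$, for $\xi,\eta\in H$ and $s\ge 0$ we have $T(s)\xi,T(s)\eta\in H$, hence $S_H T(s)(\xi+i\eta)=T(s)\xi-iT(s)\eta=T(s)S_H(\xi+i\eta)$. Thus $T(s)$ leaves $\dom S_H$ invariant and $T(s)S_H\subseteq S_H T(s)$ for every $s\ge 0$. Inserting the polar decomposition $S_H=J_H\Delta_H^{1/2}$ turns this into the identity linking $T(s)$ with $J_H$ and $\Delta_H^{1/2}$ that drives the whole argument.

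Next I would set up the two analyticity inputs. On one side, positivity of $P$ makes $s\mapsto T(s)$ extend to the closed upper half-plane $\{\Im s\ge 0\}$ as a strongly continuous, uniformly bounded ($\norm{T(s)}\le 1$) function that is analytic in the interior. On the other side, for vectors $\xi,\eta$ in a common core of analytic vectors for $\log\Delta_H$, the matrix element $t\mapsto\braket{\eta,\Delta_H^{it}T(s)\Delta_H^{-it}\xi}$ continues from the real axis into the horizontal strip $\{-\tfrac12\le\Im t\le 0\}$, the width $\tfrac12$ being dictated by the fact that reaching $\Delta_H^{1/2}$ corresponds to $t=-\tfrac{i}{2}$. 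Feeding $T(s)S_H\subseteq S_H T(s)$ into the boundary value at $t=-\tfrac{i}{2}$ controls the function there, and a Phragm\'en--Lindel\"of (three-lines) estimate in the joint variables then forces
\[ \Delta_H^{it}T(s)\Delta_H^{-it}=T(e^{-2\pi t}s), \]
the exponent $2\pi$ being fixed by matching the boundary value $\Delta_H^{1/2}T(s)\Delta_H^{-1/2}=T(-s)$ (note $e^{-2\pi(-i/2)}=-1$) against the commutation relation from the first step.

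The main obstacle is the rigorous control of these continuations: since both $P$ and $\log\Delta_H$ are unbounded, I must argue on dense domains of analytic vectors, verify that the relevant matrix-element functions are genuinely bounded on the closed strip (so that Phragm\'en--Lindel\"of applies rather than a merely formal matching of Taylor coefficients), and check strong continuity up to the boundary; this is precisely the technical heart of Borchers' theorem. Granting the first identity, the conjugation relation follows algebraically: its boundary value gives $\Delta_H^{1/2}T(s)=T(-s)\Delta_H^{1/2}$, and substituting this into $J_H\Delta_H^{1/2}T(s)=T(s)J_H\Delta_H^{1/2}$ yields $J_H T(-s)\Delta_H^{1/2}=T(s)J_H\Delta_H^{1/2}$, whence $J_H T(-s)=T(s)J_H$ on the dense range of $\Delta_H^{1/2}$ and therefore $J_H T(s)J_H=T(-s)$ using $J_H^2=1$.
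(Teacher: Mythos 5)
The paper itself contains no proof of this theorem: it is quoted as the standard-subspace analogue of Borchers' theorem, with the proof deferred to \cite[Theorem 3.15]{Longo08} (and to \cite{Borchers92,Florig98} for the von Neumann algebra version). So the comparison is with those proofs, and your outline follows exactly their classical route: positivity of the generator giving a bounded analytic extension of $s\mapsto T(s)$ to the closed upper half-plane, the intertwining relation $T(s)S_H\subseteq S_HT(s)$ coming from $T(s)H\subseteq H$, and a complex-analytic argument on a strip. Your peripheral steps are correct: step 1 is right, and the derivation of the second identity from the first is complete (continuation of the first identity to $t=-\tfrac{i}{2}$ gives $\Delta_H^{1/2}T(s)\xi=T(-s)\Delta_H^{1/2}\xi$ for $\xi\in\dom(\Delta_H^{1/2})$, and combining this with $T(s)S_H\subseteq S_HT(s)$ and the density of $\ran\Delta_H^{1/2}$ yields $J_HT(s)J_H=T(-s)$).

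The genuine gap is in the first identity, which is the whole content of the theorem, and your sketch of it not only is incomplete but misdescribes the crux. From $T(s)S_H\subseteq S_HT(s)$ one gets, after multiplying on the left by $J_H$, the relation $\left(J_HT(s)J_H\right)\Delta_H^{1/2}\subseteq\Delta_H^{1/2}T(s)$: the boundary condition at $\Im t=-\tfrac12$ ties together the two unknown conjugations $J_HT(s)J_H$ and $\Delta_H^{1/2}T(s)\Delta_H^{-1/2}$ \emph{simultaneously}, and nothing obtained so far identifies either one separately. In particular, the identity $\Delta_H^{1/2}T(s)\Delta_H^{-1/2}=T(-s)$ that you propose to ``match'' against in order to fix the exponent $2\pi$ is not a consequence of step 1; given the $J$-relation it is equivalent to the statement being proved, so this matching is circular. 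Disentangling $J_H$ from $\Delta_H^{1/2}$ is precisely what the Borchers--Florig argument accomplishes, and it needs more input than you invoke: one also uses the dual invariance $T(-s)H'\subseteq H'$ for $s\geq0$ (equivalently the intertwining relation with $S_{H'}=S_H^*$), and one must construct explicit matrix-element functions from vectors in $H$ and in $H'$, prove they are bounded and analytic on horizontal strips, glue them along the boundary edges using these two intertwining relations together with the analyticity of $T$ in the upper half-plane, and conclude by periodicity and Liouville's theorem that they are constant, which then forces the scaling law with exponent $2\pi$. None of this construction appears in your proposal --- you explicitly defer it as ``the technical heart of Borchers' theorem'' --- so what you have is a correct plan of attack with correct outer steps, but the theorem itself remains unproved.
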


We say that a pair of standard subspaces $K\subset H$ is a \textbf{half-sided modular inclusion (HSMI)} if
\begin{align*}
\Delta_{H}^{-it}K\subset K \ \text{ for } t\geq 0.
\end{align*}
Let $\mathbf{P}$ be the translation-dilation group, that is, the group of affine transformations of $\RR$,
where dilations act by
$\mathfrak{d}(2\pi t) x=e^{2\pi t}x, x \in \RR$
and translations act by
$\mathfrak{t}(s) x = x + s$. The group $\mathbf{P}$ contains also dilations centered at $1$:
$\mathfrak{d}_1(2\pi t)x=e^{2\pi t}(x-1)+1$.
A HSMI of von Neumann algebras implies the existence of a one-parameter group of unitaries with certain properties \cite{Wiesbrock93-1, AZ05}.
The following is its standard subspace version and the first part can be found in \cite[Theorem 3.21]{Longo08}.
\begin{theorem}\label{theorem-1-standard}
Let $K\subset H$ be a half-sided modular inclusion of standard subspaces of the Hilbert space $\H$, then there exists a positive energy representation of the translation-dilation group $\mathbf{P}$ given by
\[
 U(\mathfrak{d}(2\pi t))=\Delta_H^{-it},\qquad U(\mathfrak{d}_1(2\pi t))=\Delta_K^{-it} 
\]
In particular, the translations are given by
$U(\mathfrak{t}(e^{2\pi t}-1))=\Delta_H^{-it}\Delta_K^{it}$, satisfy $U(\mathfrak{t}(s))H\subset H$ for $s\geq 0$,  $U(\mathfrak{t}(1))H=K$ and
have a positive generator.

Furthermore, the generator $P$ of the translation group is $\frac{1}{2\pi}\left(\log(\Delta_K)- \log(\Delta_H) \right)$.
In general, we have the relation $\log(\Delta_{U(\mathfrak{t}(s))H}) = \Ad U(\mathfrak{t}(s))(\log(\Delta_H)) = \log(\Delta_H) + 2\pi sP$.
\end{theorem}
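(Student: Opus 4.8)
The plan is to take the existence of the representation $U$ of $\mathbf{P}$ as supplied by \cite[Theorem 3.21]{Longo08}, so that $U(\mathfrak{d}(2\pi t))=\Delta_H^{-it}$, $U(\mathfrak{d}_1(2\pi t))=\Delta_K^{-it}$, the translations $U(\mathfrak{t}(e^{2\pi t}-1))=\Delta_H^{-it}\Delta_K^{it}$ have positive generator, and $U(\mathfrak{t}(s))H\subset H$ for $s\geq0$ with $U(\mathfrak{t}(1))H=K$. Writing $T(s):=U(\mathfrak{t}(s))$ and letting $P$ denote its positive self-adjoint generator, so that $T(s)=e^{isP}$ by Stone's theorem, the remaining task is to identify $P$ and to establish the two displayed equalities for $\log(\Delta_{T(s)H})$. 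I would phrase everything at the level of one-parameter unitary groups inside the single representation $U$, which sidesteps manipulating the unbounded operators $\log(\Delta_H)$ and $P$ as abstract sums.

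First I would prove the first equality $\log(\Delta_{T(s)H})=\Ad T(s)(\log(\Delta_H))$. Since $T(s)$ is unitary, Lemma \ref{lem:cov} gives $T(s)\Delta_H T(s)^*=\Delta_{T(s)H}$, and conjugating the spectral resolution (Borel functional calculus) yields $T(s)(\log\Delta_H)T(s)^*=\log(\Delta_{T(s)H})$; equivalently, at the level of unitary groups, $T(s)\Delta_H^{-it}T(s)^*=\Delta_{T(s)H}^{-it}$ for all $t$.

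For the second equality I would exploit the affine-group structure. Introduce the dilation $\mathfrak{d}_s$ centred at $s$, defined as in the paper for $s=1$ by $\mathfrak{d}_s(2\pi t)x=e^{2\pi t}(x-s)+s$, and note the exact identity $\mathfrak{d}_s(2\pi t)=\mathfrak{t}(s)\,\mathfrak{d}(2\pi t)\,\mathfrak{t}(-s)$ in $\mathbf{P}$. Applying $U$ and using the previous step,
\begin{align*}
\Delta_{T(s)H}^{-it}=T(s)\,\Delta_H^{-it}\,T(s)^*=U(\mathfrak{d}_s(2\pi t)),
\end{align*}
so $t\mapsto\Delta_{T(s)H}^{-it}$ is exactly the image under $U$ of the one-parameter subgroup of dilations centred at $s$. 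At the Lie-algebra level the generating vector field of $\mathfrak{d}_s$ is $x\mapsto(x-s)$, namely the dilation generator minus $s$ times the translation generator; because $U$ is a genuine strongly continuous representation, $dU$ of each Lie-algebra element is a well-defined self-adjoint generator, and lifting this decomposition through $dU$ (keeping track of the $2\pi$ in the dilation parametrization) identifies $\log(\Delta_{T(s)H})$ with $\log(\Delta_H)+2\pi sP$. Uniqueness of the generator via Stone's theorem then gives $\Ad T(s)(\log(\Delta_H))=\log(\Delta_H)+2\pi sP$.

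Finally, the formula for $P$ follows by specialization: setting $s=1$ and using $T(1)H=K$, the relation just proved reads $\log(\Delta_K)=\log(\Delta_H)+2\pi P$, whence $P=\frac{1}{2\pi}\left(\log(\Delta_K)-\log(\Delta_H)\right)$. I expect the main obstacle to be the rigorous handling of the unbounded generators in the second equality: rather than adding $\log(\Delta_H)$ and $P$ on an ad hoc common domain, one must read $\log(\Delta_H)+2\pi sP$ as the generator furnished by the representation $U$ for the Lie-algebra element defining $\mathfrak{d}_s$, so that the self-adjointness and domain questions are absorbed into the already-established group representation. The Borchers commutation relations of Theorem \ref{th:borchersss} provide an independent consistency check, since differentiating $\Delta_H^{it}T(s)\Delta_H^{-it}=T(e^{-2\pi t}s)$ reproduces the scaling $\Ad(\Delta_H^{it})P=e^{-2\pi t}P$ that underlies the computation.
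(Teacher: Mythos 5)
Your proposal is correct, but it routes the key technical step differently from the paper. Both you and the paper import the representation $U$ of $\mathbf{P}$ from \cite[Theorem 3.21]{Longo08}, and both ultimately lean on the G\aa rding domain to make sense of sums of the unbounded generators; the difference is in how the identity $2\pi P=\log(\Delta_K)-\log(\Delta_H)$ is extracted. The paper proves this formula \emph{first} and directly: it asserts essential self-adjointness of $\log(\Delta_K)-\log(\Delta_H)$ on its natural domain (via the G\aa rding domain) and then applies Trotter's product formula, computing $e^{it(\log(\Delta_H)-\log(\Delta_K))}=\slim_n\left(\Delta_H^{i\frac tn}\Delta_K^{-i\frac tn}\right)^n=\slim_n U\left(\mathfrak{t}(n(e^{-2\pi\frac tn}-1))\right)$, a strong limit of translation unitaries; the general relation for $\log(\Delta_{U(\mathfrak{t}(s))H})$ is then dispatched with a reference to Lemma \ref{lem:cov}. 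You instead prove the general relation \emph{first}, using the exact group identity $\mathfrak{d}_s(2\pi t)=\mathfrak{t}(s)\,\mathfrak{d}(2\pi t)\,\mathfrak{t}(-s)$ together with Lemma \ref{lem:cov} to identify $\Delta_{T(s)H}^{-it}=U(\mathfrak{d}_s(2\pi t))$, and then read off the generator from the derived (Lie-algebra) representation, obtaining $\log(\Delta_{T(s)H})=\log(\Delta_H)+2\pi sP$ uniformly in $s$; the formula for $P$ drops out by setting $s=1$ and using $T(1)H=K$. What the paper's Trotter argument buys is a more hands-on, self-contained computation that never invokes derived representations of Lie groups. What your argument buys is twofold: it makes the geometric content transparent (the modular group of $T(s)H$ is the dilation centred at $s$), and it actually supplies a complete justification of the \emph{second} equality $\Ad U(\mathfrak{t}(s))(\log(\Delta_H))=\log(\Delta_H)+2\pi sP$ for all $s$, which the paper's closing sentence leaves terse — Lemma \ref{lem:cov} by itself only gives the first equality $\log(\Delta_{U(\mathfrak{t}(s))H})=\Ad U(\mathfrak{t}(s))(\log(\Delta_H))$, and the paper would need to rerun its Trotter argument on the rescaled inclusions $T(s)H\subset H$ (or argue as you do) to finish. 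Your closing caveat correctly locates the only genuine technical burden: the sum $\log(\Delta_H)+2\pi sP$ must be read as the closure of the derived operator on the G\aa rding domain, which is the same dense-domain fact the paper appeals to, so the two proofs are on an equal footing of rigor.
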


\begin{proof}
	We prove the last statement. The operator $\log(\Delta_K)- \log(\Delta_H)$ is essentially self-adjoint on its natural domain
	(one can prove this by first taking the G\aa rding domain). Thus, we can apply Trotter's product formula:
		\begin{align*}
			e^{it(\log(\Delta_H)-\log(\Delta_K))}&=\slim_{n \rightarrow \infty}\left(\Delta_H^{i\frac{t}{n}}\Delta_K^{-i\frac{t}{n}}  \right)^n
			= \slim_{n\rightarrow \infty}\left(U(\mathfrak{t}(e^{-2\pi \frac{t}{n}}-1))  \right)^n \\
			&=\slim_{n\rightarrow \infty}U\left(\mathfrak{t}(n(e^{-2\pi \frac{t}{n}}-1))\right) =U(\mathfrak{t}(2\pi t)). 
		\end{align*}
		
    The last relation follows from Lemma \ref{lem:cov}.
\end{proof}

\subsection{The one-particle structure of the free scalar field }\label{sec:oneparticle}
The relativistic invariance is encoded in the Poincar\'e group $\P$ on the $(D+1)$-dimensional Minkowski space time $\RR^{D+1}$,
where $D>1$.
It is the semi-direct product of the full Lorentz group $\mathcal{L}$ and the translation group $\RR^{D+1}$:
\begin{align*}
 \P=\mathcal{L}\ltimes \RR^{D+1}.
\end{align*}
The subgroup $\poincare = \mathcal{L}^\uparrow_+\ltimes \RR^{D+1}$ of time- and space orientation-preserving transformations gives the relativistic transformations
from one inertial frame to another. The causal structure is determined by the Minkowski metric and the causal complement of a region $O\subset\RR^{1+D}$ is determined as follows:
$$O'=\{x\in\RR^{1+D}:(y-x)^2<0, y\in O\}^\circ,$$
where $\circ$ denotes the open kernel.

We restrict the consideration to the scalar representations of $\poincare$ (see e.g.~\cite{Var85}).
A \textbf{scalar representation with mass $m \ge 0$ (for $D>1$)} is defined on the Hilbert space $\hm = L^2(\Omega_m,d\Omega_m)$,
where $d\Omega_m$ is the unique (up to a constant) Lorentz-invariant measure on the mass shell $\Omega_m=\{p=(p_0,\ldots,p_D)\in\RR^{D+1}: p^2=m^2, p_0\geq0\}$: 
\begin{align*}d\Omega_m=\theta(p_0)\delta(p^2-m^2)d^{D+1}p. \end{align*}
Let $\omega_m(\tp)=\sqrt{m^2+|\tp|^2}$, then
$\Omega_m=\{(\omega_m(\tp),\tp); \tp\in \RR^{D}\}$ 
and the measure can be expressed in the $\tp$-coordinates as: 
\begin{align}
 d\Omega_m=
 \frac{d^{D}\tp}{\omega_m(\tp)}. \label{eq:omegam}
\end{align}
The action $U_m$ of $(\Lambda,a)\in \poincare = \mathcal{L}^\uparrow_+\ltimes \RR^{D+1}$ is given as follows:
\begin{align}
 (U_m((\Lambda,a))\Psi)(p)=e^{ia\cdot p}\Psi(\Lambda^{-1}p), \text{ for } \Psi \in \hm. \label{poincare-action}
\end{align}

Consider  the restriction $E$ of the Fourier transformation on Schwartz functions on $\RR^{D+1}$ to $\Omega_m:$
\begin{align*}
 E:\S(\RR^{D+1})&\rightarrow \S(\Omega_m) \\
 f&\mapsto (E f)(p)=\int_{\RR^{D+1}} e^{ix\cdot p }f(x) d^{D+1}x, \ \ p \in \Omega_m
\end{align*}
(as $D> 1$, this holds even if $m=0$). Then $E(\S(\RR^{D+1}))$ is dense in $\hm$.
We refer to $p$ as the momentum variable and $x$ as the position variable. We shall denote by $\tx_\perp$ the $D-1$ coordinate vector $(x_2,\ldots,x_D)$.
The action of the Poincar\'e group on the one-particle space (\ref{poincare-action}) is covariant with respect to the action of
$\poincare$ on test functions in $\S(\RR^{D+1})$:
\begin{align*}
 (U_m((\Lambda,a))E f)(x) = E (f(\Lambda^{-1}(x-a))).
\end{align*}

The local structure of the free scalar field is encoded in the local space corresponding to bounded open regions $O \in \RR^{D+1}$:
\begin{align}
 H(O):= \overline{\{Ef \in L^2(\Omega_m,d\Omega_m): f\in \mathcal{S}(\RR^{D+1},\RR), \supp(f)\subset O  \}}.\label{local-subspace}
\end{align}
For an arbitrary regions $S$ with non-empty interior in Minkowski space, its local subspace is generated by the subspaces of bounded open regions contained in it:
\begin{align*}
 H(S)=\overline{\bigcup\limits_{O\in S}H(O)}.
\end{align*}

The map $\RR^{D+1}\supset O\mapsto H(O)\subset \H$  and the Poincar\'e representation $U_m$ define a net of standard subspaces,
also called one-particle net or first quantized net, satisfying the following properties (see e.g.\! \cite{BGL02}):
\begin{enumerate}[{(SS}1{)}]
\item {\bf Isotony:} $H(O_1) \subset H(O_2)$ for $O_1 \subset O_2$;
\item {\bf Locality:} if $O_1\subset O_2'$, then $H(O_1)\subset H(O_2)'$, where $O'$ denotes the spacelike complement of $O$;
\item {\bf Poincar\'e covariance:} $H(gO) = U(g)H(O)$ for 
$g \in {\P}^\uparrow_+$;
\item {\bf Spectral condition:} 
the joint spectrum of the translation subgroup in $U$ is contained in the closed forward light cone
 $\overline {V_+}=\{p\in\RR^{D+1}: p^2\geq0, p_0\geq0\}$.
\item {\bf Cyclicity}: $H(O)$ are cyclic subspaces.
\setcounter{axiomss}{\value{enumi}}
\end{enumerate}
It is a consequence of locality and cyclicity that $H(O)$ are standard subspaces.

Consider the standard wedge $W_1=\{x\in\RR^{D+1}:|x_0|<x_1\}$ in the $x_1$-direction and let
\[
\Lambda_{W_1}(t)(x_0,x_1,\tx_\perp)=(\cosh(t)x_0+\sinh(t)x_1, \sinh(t)x_0+\cosh(t)x_1,\tx_\perp) 
\]
be the one-parameter group of Lorentz boosts fixing $W_1$.
Any other region of the form $W=gW_1, g \in \poincare$ is also called a wedge,
and we put $\Lambda_W(t) = g\Lambda_{W_1}(t)g^{-1}$ with such a $g$
(this is well-defined, because any other $g$ differs only by a Lorentz boost preserving the wedge).
We shall denote by $\mathcal W$ the set of wedges. Let $H(W)=\overline{\bigcup_{O\in W}H(O)}$ be the subspace associated to the wedge $W$.  The net defined by \eqref{local-subspace} further satisfies the following properties. 
\begin{enumerate}[{(SS}1{)}]
\setcounter{enumi}{\value{axiomss}} 
\item {\bf Bisognano--Wichmann (BW) property:} 
Let $W\in\mathcal W$, it holds that $$U(\Lambda_{W}(t)) = \Delta_{H(W)}^{-\frac{it}{2\pi}}\qquad \text{ for } t \in \RR,$$

\item {\bf Haag duality (for wedges):} 
$H(W')=H(W)'$, for all $W\in\mathcal W.$
\setcounter{axiomss}{\value{enumi}}
\end{enumerate}

\subsection{Second quantization and nets of von Neumann algebras}
\label{s:secq} 

Let $\H$ be a Hilbert space and $H\subset\H$ a real linear subspace.
The von Neumann algebra $R(H)$, called second quantization algebras, on the symmetric Fock space
$\mathcal{F}_+(\H)$ generated by the Weyl operators:
\begin{equation*}
R(H) \equiv \{\mathrm{w}(\xi): \xi\in H\}'',
\end{equation*}
where $\mathrm{w}(\xi), \xi \in \H$ are unitary operators on $\mathcal{F}_+(\H)$
characterized by
\[
\mathrm{w}(\xi)e^\eta = e^{-\frac12\<\xi,\xi\>-\<\xi,\eta\>}\cdot e^{\xi+\eta},
\]
where $e^\eta = 1 \oplus \eta \oplus \frac1{2!}\eta^{\otimes 2} \cdots$
is a coherent vector in $\mathcal{F}_+(\H)$.
By strong continuity of the map $\H\ni \xi\mapsto \mathrm{w}(\xi)\in \mathcal{F}_+(\H) $
we have that
\[
R(H) = R(\ol H)\ .
\]
Moreover, the Fock vacuum vector $\Omega = e^0$ is cyclic (respectively separating)
for $R(H)$ if and only if $\ol H$ is cyclic (respectively separating).
Therefore one verifies the relation:
\begin{align}\label{eq:second-quantization-identity}
	\braket{\mathrm{w}(\xi)\Omega,\mathrm{w}(\eta)\Omega}=e^{-\frac{1}{2}(\norm{\xi}^2+\norm{\eta}^2)}e^{\braket{\xi,\eta}}.
\end{align} 

The second quantization respects the lattice structure \cite{Araki63} and the modular structure \cite{LRT78,LMR16}. We shall denote $J_{R(H),\Omega}$, $\Delta_{R(H),\Omega}$ the Tomita operators associated with $(R_+(H),\Omega)$, and by
$\Gamma_+(T)$ the multiplicative second quantization of a one-particle
operator $T$ on $\H$. It holds that $\Gamma_+(T)e^\xi=e^{T\xi}$ for $\xi \in \H$.

\begin{proposition}\label{prop:secquant}\cite{LRT78,LMR16}
Let $H$ and $H_a$ be closed, real linear subspaces of $\H$. We have
\begin{enumerate}
\item $J_{R(H),\Omega} = {\Gamma}_+(J_H)$, \ $\Delta_{R(H),\Omega}= {\Gamma}_+(\Delta_H)$ if $H$ is standard;
\item $R(H)' = R(H')$;
\item $R(\Span_a H_a) = \bigvee_a R(H_a)$;
\item $R(\bigcap_a H_a) = \bigcap_a R(H_a)$,
\end{enumerate}
where $\bigvee_a R(H_a)$ denotes the von Neumann algebra generated by $R(H_a)$.
\end{proposition}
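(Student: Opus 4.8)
The four assertions are not independent: the analytic core is (1), from which the nontrivial inclusion in (2) and then (4) follow by modular duality, while (3) is elementary. The plan is therefore to dispose of (3) and the easy half of (2) by hand, to prove (1) via the uniqueness part of Tomita--Takesaki theory together with a KMS computation on Weyl operators, and to harvest (2) and (4) afterwards. I will repeatedly use the two facts read off the defining relation, namely $\mathrm{w}(\xi)\Omega = e^{-\frac12\|\xi\|^2}e^\xi$ and the Weyl relation $\mathrm{w}(\xi)\mathrm{w}(\eta) = e^{-\ima\,\im\braket{\xi,\eta}}\mathrm{w}(\xi+\eta)$; in particular $\mathrm{w}(\xi)^*=\mathrm{w}(-\xi)$ and $\omega(\mathrm{w}(\gamma)):=\braket{\Omega,\mathrm{w}(\gamma)\Omega}=e^{-\frac12\|\gamma\|^2}$.

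For (3), set $K=\ol{\Span_a H_a}$, so that $R(\Span_a H_a)=R(K)$ by $R(H)=R(\ol H)$. Each $H_a\subset K$ gives $\bigvee_a R(H_a)\subset R(K)$; conversely finite real combinations $\sum_i c_i\eta_i$ with $\eta_i\in H_{a_i}$ are dense in $K$, and $\mathrm{w}(\sum_i c_i\eta_i)$ is, up to a phase, a product of the $\mathrm{w}(c_i\eta_i)\in R(H_{a_i})$, so strong continuity of $\xi\mapsto\mathrm{w}(\xi)$ forces $\mathrm{w}(\xi)\in\bigvee_a R(H_a)$ for all $\xi\in K$. For the easy half of (2), $\xi\in H$ and $\eta\in H'$ give $\im\braket{\xi,\eta}=0$, hence $\mathrm{w}(\xi)$ and $\mathrm{w}(\eta)$ commute, so $R(H')\subset R(H)'$.

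The heart is (1). I would put $\alpha_t:=\Ad\,\Gamma_+(\Delta_H^{it})$ and check it is the modular group of $(R(H),\Omega)$. Covariance of Weyl operators gives $\Gamma_+(\Delta_H^{it})\mathrm{w}(\xi)\Gamma_+(\Delta_H^{-it})=\mathrm{w}(\Delta_H^{it}\xi)$, and $\Delta_H^{it}H=H$ shows $\alpha_t(R(H))=R(H)$, while $\Gamma_+(\Delta_H^{it})\Omega=\Omega$. Since the $\mathrm{w}(\gamma)$, $\gamma\in H$, span a $\sigma$-weakly dense $*$-algebra, it suffices to verify the KMS functional equation on pairs of them; the two-point function computes to
\[
F(t):=\omega\big(\mathrm{w}(\xi)\,\alpha_t(\mathrm{w}(\eta))\big)=e^{-\frac12(\|\xi\|^2+\|\eta\|^2)}\,e^{-\braket{\xi,\Delta_H^{it}\eta}},
\]
with the opposite order giving the same prefactor times $e^{-\braket{\Delta_H^{it}\eta,\xi}}$. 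The one-particle KMS relation $\braket{\Delta_H^{1/2}\xi,\Delta_H^{1/2}\eta}=\braket{\eta,\xi}$ for $\xi,\eta\in H$ (from $S_H\xi=\xi$, $S_H=J_H\Delta_H^{1/2}$ and antiunitarity of $J_H$) is precisely what continues $F$ analytically across the strip of width one fixed by $\Delta_H$ and matches the boundary value $F(t-\ima)$ with the opposite-order correlator. By uniqueness of the modular group this yields $\Delta_{R(H),\Omega}=\Gamma_+(\Delta_H)$, and the modular conjugation is then read off from $S_{R(H),\Omega}=J_{R(H),\Omega}\Delta_{R(H),\Omega}^{1/2}$ evaluated on the total set $\{\mathrm{w}(\xi)\Omega:\xi\in H\}$, using $S_H|_H=\id$, giving $J_{R(H),\Omega}=\Gamma_+(J_H)$.

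Finally (2) and (4) follow by duality. For standard $H$, $J_{R(H),\Omega}$ maps $R(H)$ onto $R(H)'$, whereas $\Gamma_+(J_H)$ maps it onto $R(J_HH)=R(H')$, so (1) forces $R(H)'=R(H')$; a general closed real $H$ is reduced to this case by splitting $\H$ into the subspace where $H$ is standard and the degenerate parts $H\cap\ima H$ and $(\ol{H+\ima H})^\perp$, on which the identity is immediate. For (4) the symplectic de Morgan law $\bigcap_a H_a=(\ol{\Span_a H_a'})'$ combines with (2), (3) and the bicommutant theorem:
\[
R\Big(\bigcap_a H_a\Big)=R\big((\ol{\Span_a H_a'})'\big)=R\big(\ol{\Span_a H_a'}\big)'=\Big(\bigvee_a R(H_a')\Big)'=\bigcap_a R(H_a')'=\bigcap_a R(H_a).
\]
The main obstacle is the KMS step in (1): controlling the domains and analyticity so that $F$ genuinely extends to the closed strip with the correct boundary identity, and justifying that checking KMS on the Weyl operators suffices. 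Everything else is either algebraic (Weyl relations, de Morgan, bicommutant) or a bookkeeping reduction to the standard case.
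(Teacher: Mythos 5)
The paper never proves this proposition --- it is imported verbatim from \cite{LRT78,LMR16} --- so your proposal has to be measured against the classical arguments in those references rather than against anything in the text; what you wrote is essentially that classical proof. Your architecture is the standard one: (3) and the inclusion $R(H')\subset R(H)'$ are elementary Weyl-relation manipulations; the modular data in (1) are identified by checking the KMS condition for $\Ad\,\Gamma_+(\Delta_H^{it})$ and invoking the uniqueness of the modular group; Araki duality (2) then falls out of Tomita's theorem $JMJ=M'$ combined with $J_HH=H'$; and (4) is the symplectic de Morgan law $\bigcap_a H_a=(\overline{\mathrm{Span}_a H_a'})'$ plus the bicommutant theorem. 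The analytic mechanism in (1) is also right: $\braket{\xi,\Delta_H^{iz}\eta}$ continues to a strip of width $1/2$ because $\eta\in\dom(\Delta_H^{1/2})$, and the gluing across the mid-line is exactly your relation $\braket{\Delta_H^{1/2}\xi',\Delta_H^{1/2}\eta}=\braket{\eta,\xi'}$ applied to $\xi'=\Delta_H^{-it}\xi\in H$, which is where $\Delta_H^{it}H=H$ enters. The two things you flag as the main obstacles are indeed what must be supplied: boundedness of $F$ on the closed strip (three-lines theorem), and the standard lemma that KMS verified on a $\sigma$-weakly dense, $\alpha_t$-invariant $*$-subalgebra (the span of the Weyl operators is such, again because $\Delta_H^{it}H=H$) implies KMS for the whole algebra; both are in Bratteli--Robinson/Takesaki and close routinely.

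There is, however, one concrete slip, in the last line of your proof of (1). With the paper's convention $\mathrm{w}(\xi)\Omega=e^{-\frac12\|\xi\|^2}e^\xi$ (note: no factor of $i$ in the exponent), one has $S_{R(H),\Omega}\,e^\xi=e^{-\xi}$ for $\xi\in H$, because $\mathrm{w}(\xi)^*=\mathrm{w}(-\xi)$. Hence the one-particle restriction of the Tomita operator of $(R(H),\Omega)$ is the antilinear involution fixing $iH$, namely $-S_H=S_{iH}$, and \emph{not} $S_H$; your step ``using $S_H|_H=\id$'' would force the false identity $e^{-\xi}=e^{\xi}$. Carried out consistently, your own computation yields $J_{R(H),\Omega}=\Gamma_+(-J_H)=\Gamma_+(J_{iH})$, while the statement $\Delta_{R(H),\Omega}=\Gamma_+(\Delta_H)$ is untouched, since $S_{iH}=(-J_H)\Delta_H^{1/2}$ gives $\Delta_{iH}=\Delta_H$ by uniqueness of the polar decomposition. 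The identity $J_{R(H),\Omega}=\Gamma_+(J_H)$ as literally stated is correct for the Segal-field convention $W(h)\Omega\propto e^{ih/\sqrt2}$ used in the cited references; with the paper's $\mathrm{w}$ it picks up this sign. The slip does not propagate: your KMS argument involves only $\Delta$, and in (2) the sign is invisible because $(-J_H)H=J_HH=H'$, so parts (2), (3), (4) stand exactly as you wrote them.
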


In particular, the second quantization promotes the one-particle net defined in \eqref{local-subspace}
to a Haag-Kastler net of local algebras: consider the map $\RR^{D+1}\supset O\mapsto \A(O)=R(H(O))\subset \mathcal{F}_+(\H)$
together with the second quantization $U=\Gamma_+(U_m)$ of the one-particle Poincar\'e representation $U_m$, then the following hold:
\begin{enumerate}[{(HK}1{)}]
 \item \textbf{Isotony: } $\A(O_1) \subset \A(O_2)$ for $O_1 \subset O_2$; \label{isotony}
 \item \textbf{Locality:}  if $O_1\subset O_2'$, then $\A(O_1)\subset \A(O_2)'$; \label{locality}
 \item\textbf{Poincar\'e covariance:} \label{poincare}
  $U(g)\A(O)U(g)^*=\A(gO)$ for $g\in\P_+^\uparrow$.
 \item\textbf{Positivity of the energy:}
 the joint spectrum of the translation subgroup in $U$ is contained in the closed forward light cone
 $\overline {V_+}$. \label{positiveenergy}
 
 \item \textbf{Vacuum and the Reeh-Schlieder property: } $\Omega$ is the (up to a phase) unique vector
 such that $U(g)\Omega=\Omega$ for $g\in \poincare$ and is cyclic, $\overline{\A(O)\Omega}=\H$ for any $O$. \label{vacuum}
 
 \item\textbf{The Bisognano-Wichmann property:} \label{bw} For a wedge $W \in \mathcal{W}$, 
 we put $\A(W)= \left(\bigvee_{O\subset W} \A(O)\right)''$.
 Then it holds that
 \[
  U(\Lambda_{W}(t))=\Delta_{\A(W),\Omega}^{-\frac{it}{2\pi}},
 \]
 where $\Delta_{\A(W),\Omega}^{it}$ is the modular group of $\A(W)$ with respect to $\Omega$.
 \setcounter{axiomhk}{\value{enumi}}
 \end{enumerate}

\subsection{The \texorpdfstring{$U(1)$}{U(1)}-current net}\label{sec:U(1)-current}
We introduce a family of standard subspaces parametrized by intervals on $\RR$,
the $U(1)$-current. Let us consider
$\H_{\uone} = L^2(\RR,d\theta')$
and, for each $I \subset \RR$ connected open interval, the subspace
\begin{align*}
 H_\uone(I)=\overline{\{\hat{g}(e^{-\theta'});g\in C_0^\infty(\RR,\RR), \hat g(0) = 0, \supp(g)\subset I\}} \subset L^2(\RR,d\theta'),
\end{align*}
where $\hat g$ is the Fourier transform of $g$.
Furthermore, we introduce
\[
(U_{\uone}(\alpha, t)\xi)(\theta') = e^{it e^{-\theta'}}\xi(\theta'-\alpha). 
\]
This is a unitary representation of the translation-dilation group $\mathbf{P} = \RR \ltimes \RR$. Let $x\in\RR$, then $(\alpha,0)\in \RR\ltimes\RR$ corresponds to the dilation $\mathfrak{d}(\alpha) x =e^{\alpha}x$ and $(0,t)\in \RR\ltimes\RR$ corresponds to the translations $\mathfrak{t}(t)x=x+t$. It is straightforward to check that
$U_{\uone}(\alpha, t) H(I) =  H(e^\alpha I + t)$.

Each $H(I)$ is a standard subspace
and for disjoint $I_1, I_2$ it holds that $H(I_1) \subset H(I_2)'$.
Furthermore, the Bisognano-Wichmann property holds:
$\Delta_{H(\RR^+)}^{it}=U_{\uone}(-2\pi t,0), t\in\RR$.

Let $f,g \in C_0^\infty(\RR,\RR)$ with $\hat{f}(0)=\hat{g}(0)=0$ and call the primitives $F,G$, respectively. The condition $\hat{f}(0)=\hat{g}(0)=0$ implies $F,G \in C_0^\infty(\RR,\RR)$.  We have under the substitution $p=e^{-\theta^\prime}$:
	\begin{align*}
		\braket{\hat{g}(e^{-\theta^\prime}),\hat{f}(e^{-\theta^\prime})}&=\int_{\RR_+}\hat{g}(-p)\hat{f}(p)\frac{dp}{p}=\int_{\RR_+}\hat{G}(-p)\hat{F}(p)pdp.
	\end{align*}	
The imaginary part of the scalar product (symplectic form) plays a crucial role in the second quantization  
		\begin{align}\label{eq:U(1)-symplectic-form}
			\im\braket{\hat{g}(e^{-\theta^\prime}),\hat{f}(e^{-\theta^\prime})}&=\frac{1}{2i}\int_{\RR_+}\left(\hat{G}(-p)\hat{F}(p)-\hat{F}(-p)\hat{G}(p)\right)pdp=\frac{1}{2}\int_{\RR} G(x)f(x)dx
		\end{align}	
	
The family $\{H(I)\}$ and the representation $U_{\uone}$ are unitarily equivalent to the family of closed real subspaces coming from the $\uone$-current conformal net and the one-particle symmetry restricted to the translation-dilation group, see \cite[Section 5.2]{BT15}. The intertwining map for $h\in C_0^\infty(\RR,\RR)$ is $\hat{h}(p)\mapsto \widehat{h^\prime}(e^{-\theta'})$. Thus to switch to the standard definition of the $U(1)$-current presented in the literature (see e.g. \cite{Longo08}) one replaces $g\in C_0^\infty(\RR,\RR)$ with $\hat{g}(0)=0$ with its primitive $G\in C_0^\infty(\RR,\RR)$.

\section{Free scalar field on the null plane}\label{free}
\subsection{Direct integrals and decompositions}\label{direct}
Let us summarize some of the basic notions and results on the direct integral of Hilbert spaces and decomposition of group representations.
We follow the conventions of \cite{Dixmier81}, see also \cite{KR97-2}. 

Let $X$ be a $\sigma$-compact locally compact Borel measure space, $\nu$ the completion of a Borel measure on $X$ and $\{\K_\l  \}$ a family of separable Hilbert spaces indexed by $\l\in X$. We say that a separable Hilbert space $\K$ is the \textbf{direct integral} of $\{ \K_\l\}$ over $(X,\nu)$ if, to each $\xi\in \K$ there corresponds a function $\l\mapsto \xi(\l) \in \K_\l$ and 
\begin{enumerate}
\item $\l\mapsto \braket{\xi(\l),\eta(\l)}$ is $\nu$-integrable and $\braket{\xi,\eta}=\int_X \braket{\xi(\l),\eta(\l)} d\nu(\l)$
\item if $\phi_\l \in \K_\l$ for all $\l$ and $\l\mapsto \braket{\phi_\l,\eta(\l)}$ is integrable for all $\eta\in \K$, there is $\phi\in \K$ such that $\phi(\l)=\phi_\l$ for almost every $\l$.
\end{enumerate}
In this case, we write:
\begin{align*}
\K=\int_X^\oplus \K_\l d\nu(\l).
\end{align*}

An operator $T\in \B(\K)$ is said to be \textbf{decomposable} when there is a function $\l\mapsto T_\l $ on $X$ such that $T_\l \in \B(\K_\l)$ and, for each $\xi \in \K$, $T_\l \xi(\l)=(T\xi)(\l)$ for $\nu$-almost every $\l$, and in this case we write
$T = \int_X^\oplus T_\l d\nu(\l)$.
If $T_\l =f(\l)1$ for some $f\in L^\infty(X,\nu)$, we say that $T$ is \textbf{diagonalizable}.
An operator $T\in \B(\K)$ is decomposable if and only if $T$ commutes with every diagonalizable operator.
Conversely,
let $X\ni\lambda\rightarrow T_\lambda\in \B(\K_\lambda)$ be a field of bounded operators
with $\sup_\lambda \|T_\lambda\| < \infty$.
If for any $\xi\in\K$ with $\xi=\int_X^\oplus \xi(\lambda) d\nu(\lambda)$ there exists $\eta\in\K$ such that $\eta(\lambda)=T_\lambda\xi(\lambda)$ for almost every $\lambda\in X$, then $T=\int_X^\oplus T_\lambda d\nu(\lambda)\in\B(\K)$ defines a bounded decomposable operator on $\K$ and $\lambda\rightarrow T_\lambda$ is called a \textit{measurable field of bounded operators}.

The following Lemma can be found in \cite[Appendix B]{MT18}.
\begin{lemma}\label{lem:ABC}
Let $\K=\int_X^\oplus \K_\lambda d\nu(\lambda)$ then
\begin{itemize}
\item[(a)]
Let  $H=\int_X^{\oplus_\RR} H_\lambda d\nu(\lambda)\subset \K$ be a real subspace and the direct integral is taken on $\RR$
such that   $H_\lambda\subset \K_\lambda$,
then $H'=\int_X^{\oplus_\RR} H'_\lambda d\nu(\lambda)$.

\item[(b)] Let $\{H_k\}_{k\in\NN}$ be a countable family of real subspaces of $\K$ such that  $H_k=\int_X^{\oplus_\RR} (H_k)_\lambda d\nu(\lambda)$ on $\RR$
and  $(H_k)_\lambda\subset \K_\lambda$, then  $\bigcap_{k\in\NN} H_k=\int_X^{\oplus_\RR} \bigcap_{k\in\NN} (H_k)_\l d\nu(\l)$.

\item[(c)]  Let $\{H_k\}_{k\in\NN}$ be a countable family of real subspaces of $\K$ such that  $H_k=\int_X^{\oplus_\RR} (H_k)_\lambda d\nu(\lambda)$ on $\RR$
and  $(H_k)_\lambda\subset \K_\lambda$, then  $\overline{\Span_{k\in\NN} H_k}=\int_X^{\oplus_\RR} \overline{\Span_{k\in\NN} (H_k) }_\l d\nu(\l)$.
\end{itemize}
\end{lemma}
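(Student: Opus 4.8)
The plan is to establish (a) first, deduce (b) by a direct pointwise argument, and obtain (c) from (a) and (b) by duality. Throughout I use that a direct-integral decomposition $H=\int_X^{\oplus_\RR}H_\l\,d\nu(\l)$ of a closed real subspace is equipped with a fundamental sequence $\{e_n\}$ of measurable vector fields with $e_n(\l)\in H_\l$ and $\overline{\Span_n\{e_n(\l)\}}=H_\l$ for a.e.\ $\l$, and that $\xi\in H$ precisely when $\xi(\l)\in H_\l$ for a.e.\ $\l$. I also record that $\l\mapsto H_\l'$ is a measurable field: since $\im\braket{\xi,\eta}=\re\braket{\xi,i\eta}$, one has $H_\l'=(iH_\l)^{\perp}$ with the orthogonal complement taken in the real Hilbert space $(\K_\l,\re\braket{\slot,\slot})$; as multiplication by $i$ is decomposable and the real-orthogonal complement of a measurable field is again measurable, the right-hand side $\int_X^{\oplus_\RR}H_\l'\,d\nu(\l)$ is well defined.

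For (a) I prove the two inclusions separately. The inclusion $\int_X^{\oplus_\RR}H_\l'\,d\nu(\l)\subseteq H'$ is immediate: if $\eta(\l)\in H_\l'$ a.e.\ and $\xi\in H$, then $\xi(\l)\in H_\l$ a.e., so $\im\braket{\eta(\l),\xi(\l)}=0$ a.e.\ and hence $\im\braket{\eta,\xi}=\int_X\im\braket{\eta(\l),\xi(\l)}\,d\nu(\l)=0$. For the reverse inclusion, take $\eta\in H'$ and fix $n$. For any real-valued bounded measurable $f$ supported on a set $A$ of finite measure on which $\|e_n(\slot)\|$ is bounded, the field $\l\mapsto f(\l)e_n(\l)$ is square-integrable and satisfies $f(\l)e_n(\l)\in H_\l$ (real subspaces are stable under real scalars), so it lies in $H$ and $0=\im\braket{\eta,fe_n}=\int_A f(\l)\,\im\braket{\eta(\l),e_n(\l)}\,d\nu(\l)$. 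Since $\l\mapsto\braket{\eta(\l),e_n(\l)}$ is integrable on $A$ by Cauchy--Schwarz and $f$ is arbitrary, $\im\braket{\eta(\l),e_n(\l)}=0$ for a.e.\ $\l\in A$; exhausting $X$ by such sets gives this a.e.\ on $X$. Uniting the countably many null sets over $n$ and invoking density of $\{e_n(\l)\}$ in $H_\l$, we conclude $\eta(\l)\in H_\l'$ for a.e.\ $\l$, i.e.\ $\eta\in\int_X^{\oplus_\RR}H_\l'\,d\nu(\l)$.

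Statement (b) follows pointwise: $\xi\in\bigcap_k H_k$ iff $\xi(\l)\in(H_k)_\l$ a.e.\ for every $k$, and because there are only countably many $k$ the exceptional null sets may be united, giving $\xi(\l)\in\bigcap_k(H_k)_\l$ for a.e.\ $\l$. The countability hypothesis is precisely what is needed to keep $\l\mapsto\bigcap_k(H_k)_\l$ a measurable field, which I would verify by extracting a common fundamental sequence for the intersection. Finally, (c) is obtained by duality. Applying (a) to each $H_k$ gives $H_k'=\int_X^{\oplus_\RR}(H_k)_\l'\,d\nu(\l)$; applying (b) to $\{H_k'\}$ gives $\bigcap_k H_k'=\int_X^{\oplus_\RR}\bigcap_k(H_k)_\l'\,d\nu(\l)$; applying (a) once more yields $\big(\bigcap_k H_k'\big)'=\int_X^{\oplus_\RR}\big(\bigcap_k(H_k)_\l'\big)'\,d\nu(\l)$. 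Using the relations $\big(\bigcap_k K_k\big)'=\overline{\Span_k K_k'}$ and $K''=K$ valid for closed real subspaces, the left-hand side equals $\overline{\Span_k H_k}$ and the fibre equals $\overline{\Span_k(H_k)_\l}$, which is the claim.

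The main obstacle is the reverse inclusion in (a): passing from the single global orthogonality $\im\braket{\eta,\xi}=0$ for all $\xi\in H$ to the a.e.\ fibrewise condition $\eta(\l)\in H_\l'$ requires the fundamental sequence together with the localisation trick $\xi=fe_n$, with attention to integrability handled by truncating to sets where $\|e_n(\slot)\|$ is bounded. The accompanying measurability assertions—for the complement field in (a) and the intersection field in (b)—are the remaining points that must be checked rather than assumed.
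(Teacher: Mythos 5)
The first thing to note is that this paper contains no proof of Lemma \ref{lem:ABC} to compare against: it is quoted verbatim from \cite[Appendix B]{MT18}. Judged on its own merits, your argument is correct in structure and is essentially the standard one: the easy inclusion of (a) by integrating the fibrewise symplectic orthogonality, the hard inclusion by the localisation trick $\xi=fe_n$ against a fundamental sequence (with the truncation to sets of finite measure where $\norm{e_n(\slot)}$ is bounded handling integrability, exactly as needed), (b) by uniting countably many null sets, and (c) by duality from (a) and (b) via the bipolar relations $M''=M$ and $\bigl(\overline{\Span_k K_k}\bigr)'=\bigcap_k K_k'$ for closed real subspaces, all of which are legitimate here because the set-theoretic direct integral over closed fibres is automatically a closed real subspace.

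The one genuinely thin point is the measurability of the intersection field, which you flag in (b) but then actually need in (c): to apply the hard direction of (a) to the decomposition $\bigcap_k H_k'=\int_X^{\oplus_\RR}\bigcap_k(H_k)'_\lambda\,d\nu(\lambda)$, your proof of (a) requires a fundamental sequence for the fibres $\bigcap_k(H_k)'_\lambda$, and ``extracting a common fundamental sequence'' is not an argument --- the given fundamental sequences of the individual $(H_k)'_\lambda$ need not meet the intersection at all, and producing sections dense in the fibre intersection is essentially equivalent to the measurability you are trying to establish. The standard repair is to pass to the fields of real-linear orthogonal projections onto $(H_k)'_\lambda$ and use that the projection onto an intersection is a strong limit of alternating products of projections (von Neumann), which preserves measurability of operator fields. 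Alternatively, you can reorganize (c) so the issue never arises: the span field $\lambda\mapsto\overline{\Span_k (H_k)_\lambda}$ is manifestly measurable (the union of the fundamental sequences of the $H_k$ is a fundamental sequence for it), so apply (a) to that field instead and compute
\[
\Bigl(\int_X^{\oplus_\RR}\overline{\Span_k (H_k)_\lambda}\,d\nu(\lambda)\Bigr)'
=\int_X^{\oplus_\RR}\bigcap_k(H_k)'_\lambda\,d\nu(\lambda)
=\bigcap_k H_k'
=\Bigl(\overline{\Span_k H_k}\Bigr)',
\]
where the middle equality uses (a) for each $H_k$ together with (b) purely set-theoretically (no measurability of the intersection field needed); taking symplectic complements and using $M''=M$ for closed real subspaces finishes the proof. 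With that rearrangement your argument is complete.
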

Let $G$ be a locally compact group and $\pi$ a continuous unitary representation of $G$ on $\K=\int^\oplus_X \K_\l d\nu(\lambda)$. Suppose that, for each $g\in G$, we have $\pi(g)=\int^\oplus \pi_\l (g) d\mu(\lambda)$, then we say that $\pi$ is the \textbf{direct integral} of the $\pi_\l $ and write: 
\begin{align*}
\pi=\int^\oplus_X \pi_\l d\nu(\lambda).
\end{align*}
Equivalently, $\pi$ is a direct integral if each $\pi(g), \ g\in G$, is decomposable.
It follows immediately that
the direct integral of standard subspaces is again a standard subspaces on the direct integral of Hilbert space,
and so is a direct integral of Poincar\'e covariant nets of standard subspaces.

As a particular case, let $\mathcal{K}= \int_X^\oplus \mathcal{K}_0 d\nu$ be a direct integral Hilbert space over the constant field $\mathcal{K}_0$,
which is a separable Hilbert space, on $X$ with measure $\nu$. Then it holds that \cite[Proposition II.1.8.11, Corollary]{Dixmier81}
\begin{align}\label{integral=tensor-prod}
\int_X^\oplus \mathcal{K}_0 d\nu(\l) \simeq L^2(X,\nu)\otimes \mathcal{K}_0.
\end{align}
The isomorphism is given by identifying $L^2(X,\nu)\otimes \K_0$ as the space of $\K_0$-valued $L^2$-functions.
By this isomorphism, we identify $\int_X^\oplus T d\nu(\l)$ and $\1\otimes T$, where $T\in \mathcal{B}(\mathcal{K}_0)$
(a constant field of bounded operators).

\subsection{Decomposition of the one-particle space}\label{section-decomposition-one-particle-space}
Let us fix $D>1$. We first consider the $m>0$ case. In Remark \ref{rem:massless} we explain the minor modification to deal with the massless case.

The hypersurface $X_-^0 := \{x\in\RR^{D+1}:x_0-x_1=0\}$ is called the \textbf{null plane} in the $x_1$-direction.
It is appropriate to consider the coordinate frame $(x_+,x_-,\tx_\perp)=(\frac{x_0+x_1}{\sqrt 2}, \frac{x_0-x_1}{\sqrt 2}, \tx_\perp)$. 
In these coordinates, the Minkoswki product becomes
\begin{align}\label{eq:scprod}
 x\cdot p= x_+p_-+x_-p_+-\tx_\perp \tp_\perp
\end{align}
 and the Minkowski (pseudo)norm $x^2=2x_+x_--\tx_\perp^2$.
 In the momentum space, the massive hyperboloid is determined by $2p_+p_- -\tp_\perp^2= m^2$,
 and for each $(p_-, \tp_\perp) \in \RR_+ \times \RR^{D-1}$ there is one and only one $p_+ \in \RR_+$
 satisfying this equation.
 For a test function $f=f(x_+,x_-,\tx_\perp)$ on $\RR^{D+1}$,
 let $E f$ be its Fourier transform restricted to $\Omega_m= \left\{ p=\big({\frac{m^2+\tp_\perp^2}{2p_-}}, p_-,\tp_\perp\big): p_->0, \tp\in\RR^{D-1}\right\}$ (in the $(p_+,p_-,\tp_\perp)$-coordinates).
 In the $(p_-,\tp_\perp)$-coordinates,
 we have, up to a unitary (given by the change of variables),
 \[
  \hm \simeq L^2\left(\RR^D,  \frac{d^{D}\tp}{\omega_m(\tp)}\right) \simeq
  L^2\left(\RR_+ \times \RR^{D-1}, \frac{2p_-dp_-d\tp_\perp}{m^2+2p_-^2+\tp_\perp^2}\right)  
 \]
 (since $dp_1d\tp_\perp=\frac1{\sqrt2}dp_-\tp_\perp$ and $\omega_m (p_-,\tp_\perp)=\frac1{\sqrt2}\left(\frac{m^2+\tp_\perp^2}{2p_-}+p_-\right)$ is the dispersion relation in the $(p_-,\tp_\perp)$-coordinates).
 For sake of notational simplicity, we will write $\xi(p_1,\tp_\perp)\in \hm$ or $\xi(p_-,\tp_\perp) \in\hm$ for $\xi \in\hm$
 in terms of $(p_1,\tp_\perp)$ or $(p_-,\tp_\perp)$, respectively.
 In the same way, the representation $U_m$ acts on $\hm$ in various realizations.
 
  We introduce the map $V_{\mathrm M}:\hm = L^2(\Omega_m,d\Omega_m) \to L^2(\RR^{D},d\theta'd^{D-1}\tp_\perp)$ as follows:
  \begin{align*}
  (V_{\mathrm M}^{-1}\xi)(\theta',\tp_\perp) = \textstyle{\xi(\mathrm{\arcsinh}(\frac{p_1}{\omega_m(\tp_\perp)}))-\ln(\omega_m(\tp_\perp))+\ln\sqrt 2, \tp_\perp)},\quad \xi\in L^2(\RR^D,d\theta'd^{D-1}\tp_\perp),
  \end{align*}
  where $\omega_m(\tp_\perp) := \sqrt{m^2 + \tp_\perp^2}$.
$  V_\mathrm{M}$ is a  unitary operator. Indeed, first,
  the change of $p_1$ to rapidity $\theta =\mathrm{\arcsinh}(\frac{p_1}{\omega_m(\tp_\perp)})$ for a fixed $\tp_\perp$, or
    $p_1 = \omega_m(\tp_\perp)\sinh \theta$ is a smooth one-to-one map $\RR^D\rightarrow \RR^D$.
    Moreover, with $\frac{\partial \theta}{\partial p_1} = \frac{1}{\omega_m(\tp_\perp)}/\sqrt{1+\frac{p_1^2}{\omega_m(\tp_\perp)^2}} = \frac1{\omega_m(\tp)}$, the measure $d\Omega_m$ transforms in the following way, cf.\! \eqref{eq:omegam}:
     \begin{align*}
      d\Omega_m&=\frac{d^{D}\tp}{\omega_m(\tp)}=d\theta d^{D-1}\tp_\perp.
     \end{align*}
    Therefore, the pullback of the rapidity substitution imposes the equivalence:
     \begin{align*}
      \hm=L^2(\Omega_m,d\Omega_m)\simeq L^2(\RR\times \RR^{D-1},d\theta \times d^{D-1}\tp_\perp) (\simeq L^2(\RR,d\theta)\otimes L^2(\RR^{D-1},d^{D-1}\tp_\perp)).
     \end{align*}
    Moreover, the substitution 
    $\theta'=\theta - \ln(\omega_m(\tp_\perp))+\ln\sqrt 2$ is a smooth one-to-one map: $\RR \rightarrow \RR$ if $m>0$ with $d\theta=d\theta'$.

By substituting $\theta = \arcsinh(\frac{p_1}{\omega_m(\tp_\perp)})$ or $p_1 = \omega_m(\tp_\perp)\sinh \theta$
in $p_- = \frac1{\sqrt 2}(\omega_m(p_1,\tp_\perp) - p_1)$
we have $p_-= \frac1{\sqrt 2}\omega_m(\tp_\perp)e^{-\theta}$
therefore, 
\[
p_-= \frac1{\sqrt 2}\omega_m(\tp_\perp)e^{-\theta'-\ln(\omega_m(\tp_\perp))+\ln\sqrt 2}=e^{-\theta'},
 \]
and 
\[
(V_{\mathrm M} \xi)(\theta',\tp_\perp)=\xi(e^{-\theta'}, \tp_\perp), \quad \xi\in L^2\left(\RR_+ \times \RR^{D-1}, \frac{2p_-dp_-d\tp_\perp}{m^2+2p_-^2+\tp_\perp^2}\right) 
\]

 \begin{proposition}
    We have the following equivalence, where the map is given by $V_{\mathrm M}$ composed by the inverse Fourier transform on
    the perpendicular momenta:
    \begin{align}\label{diagram-decompositions}
       \hm\simeq L^2(\RR^{D},d\theta'd^{D-1}\pmb{x}_\perp) \simeq \int_{\RR^{D-1}}^\oplus L^2(\RR,d\theta')d^{D-1}\pmb{x}_\perp,
     \end{align}
   where the last equivalence is given by \eqref{integral=tensor-prod}.
  \end{proposition}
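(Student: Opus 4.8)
The plan is to assemble the proposition from pieces that are already in place: the unitary $V_{\mathrm M}$ constructed above, a partial Fourier transform in the transverse directions, and the tensor-product reformulation of a direct integral over a constant field recorded in \eqref{integral=tensor-prod}. Since the only genuine analytic input (the rapidity change of variables) has already been carried out, the work remaining is purely organizational.

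First I would recall that $V_{\mathrm M}\colon \hm \to L^2(\RR^D, d\theta'\,d^{D-1}\tp_\perp)$ is unitary, as verified above through the rapidity substitution $\theta = \arcsinh(p_1/\omega_m(\tp_\perp))$ together with the shift $\theta' = \theta - \ln(\omega_m(\tp_\perp)) + \ln\sqrt 2$, both of which are smooth bijections of $\RR$ preserving the relevant measures (so that $d\Omega_m = d\theta\,d^{D-1}\tp_\perp = d\theta'\,d^{D-1}\tp_\perp$). This already yields $\hm \simeq L^2(\RR^D, d\theta'\,d^{D-1}\tp_\perp)$, expressed in terms of the transverse \emph{momentum} variable $\tp_\perp$.

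Next I would pass to the transverse \emph{position} variable $\pmb{x}_\perp$ by applying the inverse Fourier transform in $\tp_\perp$ only, leaving the $\theta'$ variable untouched. By Plancherel's theorem this partial inverse Fourier transform $\mathcal{F}_\perp^{-1}$ is a unitary from $L^2(\RR^D, d\theta'\,d^{D-1}\tp_\perp)$ onto $L^2(\RR^D, d\theta'\,d^{D-1}\pmb{x}_\perp)$; composing it with $V_{\mathrm M}$ gives the unitary realizing the first equivalence and justifies the description ``$V_{\mathrm M}$ composed by the inverse Fourier transform on the perpendicular momenta.''

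Finally, for the second equivalence I would use the canonical identification $L^2(\RR^D, d\theta'\,d^{D-1}\pmb{x}_\perp) \simeq L^2(\RR^{D-1}, d^{D-1}\pmb{x}_\perp)\otimes L^2(\RR, d\theta')$ of the $L^2$-space of a product measure space with a Hilbert-space tensor product, and then invoke \eqref{integral=tensor-prod} with $X = \RR^{D-1}$, $\nu = d^{D-1}\pmb{x}_\perp$, and constant fibre $\mathcal{K}_0 = L^2(\RR, d\theta')$, rewriting this tensor product as $\int_{\RR^{D-1}}^\oplus L^2(\RR, d\theta')\,d^{D-1}\pmb{x}_\perp$. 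There is no real obstacle; the only point demanding care is bookkeeping — keeping straight which of $\tp_\perp$ and $\pmb{x}_\perp$ serves as the base variable of the direct integral, and checking that $\mathcal{F}_\perp^{-1}$ acts trivially on the $\theta'$-factor, so that the constant fibre is indeed $L^2(\RR, d\theta')$.
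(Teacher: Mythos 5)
Your proposal is correct and follows essentially the same route as the paper: the unitarity of $V_{\mathrm M}$ via the rapidity substitution (already established before the proposition), composition with the partial inverse Fourier transform in the transverse momenta (unitary by Plancherel), and the identification of the resulting $L^2$-space of the product measure with the direct integral over the constant fibre $L^2(\RR,d\theta')$ via \eqref{integral=tensor-prod}. The paper treats the proposition as a direct consequence of this construction and gives no further argument, so your organizational assembly is exactly its proof.
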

Subsequently, we will refer to these direct integral representations of $\hm$ as \textbf{spatial decomposition}
and denote the intertwining unitary by $V_{\mathrm S}$,
while the map $V_{\mathrm M}$ is referred to as \textbf{momentum decomposition}.

Next we discuss the decomposition of the spacetime symmetry.
Let us take the wedge $W_1$ in the $x_1$-direction and consider the subgroup
$\Lambda_{1}\ltimes \mathfrak{t}_{x_+}\subset  \poincare$ consisting of boosts $\Lambda_{1}:=\Lambda_{W_1}$ along the $x_1$-direction and
lightlike translations $\mathfrak{t}_{x_+}$ of $x_+$.
It is straightforward that this group is isomorphic to the translation-dilation group $\mathbf{P}$ of $\RR$.
We will show that the restriction of $U_m$ to $\mathbf{P}$ decomposes with respect to the decompositions
\eqref{diagram-decompositions} of the one-particle vectors.

\begin{lemma}\label{action-momentum-decomp}
 The unitary $V_{\mathrm M}$ between $\hm$ and $\int_{\RR^{D-1}}^\oplus L^2(\RR,d\theta') d^{D-1}\tp_\perp$
 intertwines the actions of the subgroup $\mathbf{P}$ to:
 \begin{align}\label{eq:U_m-momentumdecomp}
(V_{\mathrm M} U_m(\Lambda_{1}(\alpha),\mathfrak{t}_{x_+}(a_+))\xi)(\theta',\tp_\perp)&= e^{i  a_+e^{-\theta'} }(V_{\mathrm M}\xi)(\theta'-\alpha,\tp_\perp),
 \end{align}
\end{lemma}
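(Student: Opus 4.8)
The plan is to verify \eqref{eq:U_m-momentumdecomp} by a direct computation carried out in the $(p_-,\tp_\perp)$-realization of $\hm$. As established just before the statement, in these coordinates $V_{\mathrm M}$ is nothing but the substitution $p_-=e^{-\theta'}$, i.e.\ $(V_{\mathrm M}\xi)(\theta',\tp_\perp)=\xi(e^{-\theta'},\tp_\perp)$, acting as the identity on the transverse variables $\tp_\perp$. Thus it suffices to understand how the boost $\Lambda_1(\alpha)$ and the lightlike translation $\mathfrak{t}_{x_+}(a_+)$ act in the variables $(p_-,\tp_\perp)$; both become transparent in lightcone coordinates.

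First I would record the boost action. Writing the boost $\Lambda_{W_1}$ of Section~\ref{sec:oneparticle} in the coordinates $(x_+,x_-,\tx_\perp)$ and using $\cosh(\alpha)\pm\sinh(\alpha)=e^{\pm\alpha}$, one sees that $\Lambda_1(\alpha)$ is the pure dilation $x_+\mapsto e^{\alpha}x_+$, $x_-\mapsto e^{-\alpha}x_-$ that fixes $\tx_\perp$. Since a Lorentz transformation acts on momentum space by the same linear map, $\Lambda_1(\alpha)$ sends $p_-\mapsto e^{-\alpha}p_-$ and fixes $\tp_\perp$, so that $\Lambda_1(\alpha)^{-1}$ sends $p_-\mapsto e^{\alpha}p_-$. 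For the translation, the element $\mathfrak{t}_{x_+}(a_+)$ is $a=(a_+,0,0)$ in lightcone coordinates, whence $a\cdot p=a_+p_-$ by \eqref{eq:scprod}. Feeding both into the representation formula \eqref{poincare-action}---and recalling that the phase there is evaluated at the argument $p$, while the inverse boost acts on the argument of $\xi$---gives $(U_m(\Lambda_1(\alpha),\mathfrak{t}_{x_+}(a_+))\xi)(p_-,\tp_\perp)=e^{ia_+p_-}\,\xi(e^{\alpha}p_-,\tp_\perp)$. Finally, substituting $p_-=e^{-\theta'}$ turns the phase into $e^{ia_+e^{-\theta'}}$ and the argument $e^{\alpha}p_-$ into $e^{-(\theta'-\alpha)}$, i.e.\ $\xi(e^{\alpha}p_-,\tp_\perp)=(V_{\mathrm M}\xi)(\theta'-\alpha,\tp_\perp)$, which is precisely \eqref{eq:U_m-momentumdecomp}.

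This computation is routine, and the only point demanding care is the bookkeeping of signs: the inverse in \eqref{poincare-action} converts the dilation $p_-\mapsto e^{-\alpha}p_-$ into the argument $e^{\alpha}p_-$, and passage to the logarithmic variable $\theta'$ (through $p_-=e^{-\theta'}$) then turns this dilation into a genuine translation $\theta'\mapsto\theta'-\alpha$. The conceptual content I would emphasize is exactly that the boost degenerates into a dilation of the lightlike momentum $p_-$ and hence, in the rapidity-type variable $\theta'$, into an ordinary translation, while the transverse momenta $\tp_\perp$ are untouched; this decoupling of the transverse directions is what subsequently permits the fibrewise decomposition over $\tp_\perp$ in \eqref{diagram-decompositions}.
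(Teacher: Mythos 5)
Your proof is correct and follows essentially the same route as the paper's: a direct verification from the representation formula \eqref{poincare-action}, the Minkowski product \eqref{eq:scprod}, and the identification $p_-=e^{-\theta'}$, with the translation phase handled identically. The only difference is bookkeeping: the paper verifies the boost shift $\theta'\mapsto\theta'-\alpha$ via hyperbolic addition formulas in the $(p_0,p_1,\tp_\perp)$ coordinates, whereas you read it off from the diagonal action $p_-\mapsto e^{-\alpha}p_-$ of the boost in lightcone coordinates --- the same computation, slightly more economically organized.
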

\begin{proof}
We start with the Poincar\'e group action given by $U_m$ on $\hm$ expressed in \eqref{poincare-action}.
We recall that $p_-= e^{-\theta'}$. 
Then the result is immediate by considering \eqref{eq:scprod} and the fact that $a_+\cdot p_- =a_+e^{-\theta'} $,

For boosts, the claim follows directly from the following
       \begin{align*}
     \Lambda_{1}^{-1}(\a)p&=\left(\begin{array}{c}
     \cosh(\alpha )p_0 - \sinh(\alpha)p_1 \\
     -\sinh(\alpha)p_0+\cosh(\alpha)p_1\\
     \tp_\perp
     \end{array}  \right) \\
     &= \left(\begin{array}{c}
     \omega_m(\tp_\perp) \cosh(\theta'+\ln(\omega_m(\tp_\perp)) - \ln\sqrt 2- \alpha) \\
     \omega_m(\tp_\perp)\sinh(\theta'+\ln(\omega_m(\tp_\perp))- \ln\sqrt 2-\alpha)\\
     \tp_\perp
     \end{array}  \right).\nonumber \\
        \end{align*}
  \end{proof}

 \begin{proposition}\label{spatial-decomp-P}
 Under the spatial decomposition, the representation of $U_m$ to $\mathbf{P}$ is decomposable (in the sense of Section \ref{direct})
 and acts as follows:
    \begin{align}\label{eq:U_m-spatialdecomp}
    (U_m((\Lambda_{1}(\alpha),\mathfrak{t}_{x_+}(a_+)))\xi)(\theta',\pmb{x}_\perp) &=
    e^{ia_+e^{-\theta'}}\xi(\theta'-\alpha, \pmb{x}_\perp).
   \end{align}
 \end{proposition}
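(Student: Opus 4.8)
The plan is to reduce everything to Lemma \ref{action-momentum-decomp} together with the fact that the spatial decomposition is obtained from the momentum one by a Fourier transform acting solely in the transverse variable. Write $g = (\Lambda_{1}(\alpha), \mathfrak{t}_{x_+}(a_+)) \in \mathbf{P}$ and let $T_g$ be the operator on $L^2(\RR, d\theta')$ defined by $(T_g \xi)(\theta') = e^{ia_+ e^{-\theta'}}\xi(\theta'-\alpha)$. The decisive observation, read off from \eqref{eq:U_m-momentumdecomp}, is that neither the multiplier $e^{ia_+ e^{-\theta'}}$ nor the translation $\theta' \mapsto \theta'-\alpha$ involves $\tp_\perp$; hence, under the momentum decomposition $\hm \simeq L^2(\RR, d\theta') \otimes L^2(\RR^{D-1}, d^{D-1}\tp_\perp)$, the operator $U_m(g)$ factorizes exactly as
\begin{align*}
 V_{\mathrm M}\, U_m(g)\, V_{\mathrm M}^{-1} = T_g \otimes \1 .
\end{align*}

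Next I would use that the intertwiner for the spatial decomposition is $V_{\mathrm S} = (\1 \otimes \mathcal{F}_\perp^{-1})\, V_{\mathrm M}$, where $\mathcal{F}_\perp$ denotes the Fourier transform in the transverse momenta $\tp_\perp \mapsto \tx_\perp$, acting as $\1 \otimes \mathcal{F}_\perp$ on the tensor product (unitary by Plancherel, carrying $d^{D-1}\tp_\perp$ into $d^{D-1}\tx_\perp$). Conjugating the factorized operator then gives
\begin{align*}
 V_{\mathrm S}\, U_m(g)\, V_{\mathrm S}^{-1} = (\1 \otimes \mathcal{F}_\perp^{-1})(T_g \otimes \1)(\1 \otimes \mathcal{F}_\perp) = T_g \otimes (\mathcal{F}_\perp^{-1}\, \1\, \mathcal{F}_\perp) = T_g \otimes \1,
\end{align*}
since the identity on the transverse factor commutes with $\mathcal{F}_\perp$. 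Writing this out in the coordinates $(\theta', \tx_\perp)$ reproduces precisely formula \eqref{eq:U_m-spatialdecomp}.

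Finally, to conclude decomposability, I would invoke \eqref{integral=tensor-prod}: under the unitary $L^2(\RR, d\theta') \otimes L^2(\RR^{D-1}, d^{D-1}\tx_\perp) \simeq L^2(\RR^{D-1}, d^{D-1}\tx_\perp) \otimes L^2(\RR, d\theta') \simeq \int^\oplus_{\RR^{D-1}} L^2(\RR, d\theta')\, d\tx_\perp$, the operator $T_g \otimes \1$ is carried to $\1 \otimes T_g$, which, by the convention fixed just after \eqref{integral=tensor-prod}, is the constant field $\int^\oplus_{\RR^{D-1}} T_g\, d\tx_\perp$. In particular $U_m(g)$ is decomposable with $\tx_\perp$-independent fibre $T_g$ in the sense of Section \ref{direct}, which is the remaining assertion. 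The argument contains no real analytic obstacle; the only point that must be guarded is the transverse-independence highlighted in the first step: it is exactly the absence of any $\tp_\perp$-dependence in \eqref{eq:U_m-momentumdecomp} that makes $U_m(g)$ a genuine tensor factor untouched by $\mathcal{F}_\perp$, so that the fibre operators stay constant and the action remains decomposable after passing to position space in the transverse directions.
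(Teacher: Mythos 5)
Your proof is correct and takes essentially the same approach as the paper's: the paper likewise argues that, by Lemma \ref{action-momentum-decomp}, the action of $U_m$ restricted to $\mathbf{P}$ in the momentum decomposition does not involve $\tp_\perp$ and therefore commutes with the inverse Fourier transform in the transverse variable. You merely spell out what the paper leaves implicit, namely the tensor factorization $V_{\mathrm M}U_m(g)V_{\mathrm M}^{-1}=T_g\otimes\1$, the conjugation by $\1\otimes\mathcal{F}_\perp$, and the identification of $T_g\otimes\1$ with the constant field via \eqref{integral=tensor-prod}.
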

 \begin{proof}
  The representation $U_m$ decomposes in the momentum decomposition as \eqref{eq:U_m-momentumdecomp},
  and the action of $U_m$ does not involve the variable $\tp_\perp$,
  it commutes with the inverse Fourier transform on $\tp_\perp$.
 \end{proof}

\begin{remark}\label{rem:massless} The disintegration \eqref{diagram-decompositions} and the identifications \eqref{eq:U_m-momentumdecomp} and \eqref{eq:U_m-spatialdecomp} apply also to the $m=0$ case up to a measure zero set. 
To see this, note that in the massless case $\Omega_0=\partial V_+=\{p\in\RR^{D+1}: p^2=0,p_0>0\}$. The change of variables determining  $V_{\mathrm M}^{-1}$ (both in $(p_1,\tp_\perp)$ or $(p_-,\tp_\perp)$ set of coordinates) are 1-1 diffeomorphisms on the set $\Omega_0\cap\{p\in\RR^{D+1}: \tp_\perp\neq 0\}$. This set is dense in $\Omega_0$ with respect to the topology induced by the euclidean topology on $\RR^{D+1}$. The complement   $N_0=\{p\in\Omega_0:\tp_\perp= 0\}$ has measure zero
with respect to  the Lorentz invariant measure on $\Omega_0$, see \eqref{eq:omegam}.  We conclude that $V_{\mathrm M}$ is a  unitary operator.
As $\Lambda_{1}$-boosts  fix $N_0$,
\eqref{diagram-decompositions}, \eqref{eq:U_m-momentumdecomp} and \eqref{eq:U_m-spatialdecomp} continue to hold (up to a measure zero set).
This is the only modification to have in mind along the paper to adapt the massless case.
 \end{remark}

\subsection{The local subspace of the null plane}\label{sec:local-subspace-of-null-plane}

 Here we show that the free field can be restricted to regions on the null plane if we exclude the lightlike zero-modes
 (see Lemma \ref{lm:zero} for the precise restriction, cf.\! \cite{Ullrich04}).
 Recall that the algebra $\A(O)$ is generated by the Weyl operators associated with test functions supported in $O$.
 In this spirit, we define the local subspaces of open (with the relative topology) regions $R$ in the null plane $X_-^0:=\{x\in \RR^{D+1};x_-=0\}$. 
 To make it precise, we consider distributions of the form 
  \begin{align}
    g_0(x_+,x_-,\pmb{x}_\perp)=\delta(x_-)g(x_+,\pmb{x}_\perp), \label{simple-test-function-on-null-plane}
   \end{align}
  where $g$ is a real-valued function such  that $\RR\ni x_+\mapsto g(x_+,\tx_\perp)$ is a compactly supported smooth function on $X_-^0$.
  We denote the set of such distributions by $\mathscr{D}(X_-^0,\RR)$,
  and call them \textbf{thin test functions} supported on $X_-^0$.

 The Fourier transform of such $g_0$ is a smooth function and can be restricted to the mass shell,
 which we denote by $Eg_0$.
 Then it is natural to consider one-particle vectors associated with thin test functions.
    \begin{lemma}\label{lm:zero}
  Let $g_0(x_+,x_-,\pmb{x}_\perp)= \delta(x_-)g(x_+, \pmb{x}_\perp)$ as in \eqref{simple-test-function-on-null-plane}.
  Then $E g_0 \in L^2(\Omega_m, d\Omega_m)$ (the norm is finite) if and only if
 $\int dx_+ g(x_+, \pmb{x}_\perp) = 0$ for each $\pmb{x}_\perp$.
  \end{lemma}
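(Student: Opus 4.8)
The plan is to compute $Eg_0$ by hand, recognise it as a purely transverse-plus-lightlike object through the factor $\delta(x_-)$, and reduce the finiteness of $\norm{Eg_0}^2$ to an integrability question at the lightlike ultraviolet $p_-\to 0$.

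First I would insert $g_0(x) = \delta(x_-)g(x_+,\tx_\perp)$ into the definition of $E$. Using the light-cone form of the Minkowski product $x\cdot p = x_+p_- + x_-p_+ - \tx_\perp\tp_\perp$ from \eqref{eq:scprod}, the factor $\delta(x_-)$ performs the $x_-$-integration and, crucially, removes all dependence on $p_+$; writing $\hat g$ for the Fourier transform of $g$ in the variables $(x_+,\tx_\perp)$, one gets
\[
 (Eg_0)(p) = \hat g(p_-,\tp_\perp),
\]
a function of $(p_-,\tp_\perp)$ alone. Since $g$ is smooth with compact support, $\hat g$ is smooth and rapidly decreasing. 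I would then pass to the rapidity/transverse realisation of $\hm$ through the unitary $V_{\mathrm M}$ of Lemma \ref{action-momentum-decomp}: as $p_- = e^{-\theta'}$, the vector $Eg_0$ becomes $(V_{\mathrm M}Eg_0)(\theta',\tp_\perp) = \hat g(e^{-\theta'},\tp_\perp)$, so that
\[
 \norm{Eg_0}^2 = \int_{\RR^{D-1}}\int_{\RR} \left|\hat g(e^{-\theta'},\tp_\perp)\right|^2 d\theta'\, d\tp_\perp .
\]
The point of working in rapidity is that the invariant measure is the \emph{flat} measure $d\theta'$, which is what makes the divergence visible (the corresponding measure in $p_-$ behaves like $dp_-/p_-$ near $p_-=0$).

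The heart of the matter is the convergence of this integral, and the only dangerous region is $\theta'\to+\infty$, i.e.\! $p_-\to 0^+$, which is the lightlike ultraviolet $p_1\to+\infty$ on the mass shell. There the integrand tends to $|\hat g(0,\tp_\perp)|^2$, so over the flat measure $d\theta'$ the integral converges near $+\infty$ precisely when $\hat g(0,\tp_\perp)=0$; at the other end $\theta'\to-\infty$ (and for large $\tp_\perp$) the rapid decay of $\hat g$ makes it converge unconditionally. Now $\hat g(0,\tp_\perp)$ is exactly the transverse Fourier transform of $\tx_\perp\mapsto \int g(x_+,\tx_\perp)\,dx_+$, so by Fourier injectivity it vanishes identically in $\tp_\perp$ if and only if $\int g(x_+,\tx_\perp)\,dx_+ = 0$ for every $\tx_\perp$. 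To close the two implications I would make this quantitative: for the "if" direction, the zero-mode condition plus smoothness gives $\hat g(p_-,\tp_\perp) = O(p_-) = O(e^{-\theta'})$, hence $|\hat g(e^{-\theta'},\tp_\perp)|^2 = O(e^{-2\theta'})$ is integrable, and together with the Schwartz control in the remaining directions the norm is finite; for the converse I would argue by contraposition, noting that if $\int g\,dx_+\neq 0$ for some $\tx_\perp$ then by continuity $\hat g(0,\cdot)$ is bounded away from zero on a transverse set $B$ of positive measure, forcing $\int_B\int^{\infty} c\,d\theta'\,d\tp_\perp=\infty$. The main obstacle is precisely this last uniformity in $\tp_\perp$: one must upgrade the pointwise nonvanishing of $\hat g(0,\tp_\perp)$ to a lower bound $|\hat g(e^{-\theta'},\tp_\perp)|^2\ge c>0$ on a positive-measure piece of $B\times\{\theta'\ \text{large}\}$, which is where compactness of $\supp g$ and the (uniform) continuity of $\hat g$ enter. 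The same argument applies verbatim in the massless case by Remark \ref{rem:massless}.
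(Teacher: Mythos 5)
Your proposal is correct and follows essentially the same route as the paper's proof: compute $(Eg_0)(p)=\hat g(p_-,\tp_\perp)$, pass to the rapidity variable $\theta'$ with $p_-=e^{-\theta'}$ so the norm becomes $\int|\hat g(e^{-\theta'},\tp_\perp)|^2\,d\theta'\,d\tp_\perp$, derive necessity from the limit $|\hat g(0,\tp_\perp)|^2$ of the integrand as $\theta'\to+\infty$, and sufficiency from the Taylor (first-order vanishing) bound $\hat g(p_-,\tp_\perp)=O(p_-)$ together with the rapid decay of $\hat g$ in the remaining directions. Your spelled-out contraposition for the necessity direction (upgrading the pointwise nonvanishing of $\hat g(0,\cdot)$ to a uniform lower bound on a positive-measure set) is a more careful rendering of the step the paper treats implicitly, but it is the same argument.
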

  \begin{proof}
    We take $g_0(x_+,x_-,\pmb{x}_\perp)= \delta(x_-)g(x_+, \pmb{x}_\perp)$ as above. 
   The Fourier transform of $g_0$ is
     \begin{align*}
       (E g_0)(p)&=\int_{\RR^{D+1}} e^{i(x_+p_-+x_-p_+-\pmb{x}_\perp\cdot\tp_\perp)}\delta(x_-)g(x_+, \pmb{x}_\perp)dx_+dx_-d^{D-1}\pmb{x}_\perp=\hat g(p_-, \tp_\perp).
     \end{align*}
    By rapidity substitution $p_- =\frac1{\sqrt 2}\omega_m(\tp_\perp)e^{-\theta}$ and $p_+=\frac1{\sqrt 2}\omega_m(\tp_\perp)e^{\theta}$ we have: 
     \begin{align*}
       \norm{E g_0}^2_{\hm^-}&=\int_{\RR\times \RR^{D-1}} \left|\hat g\left(\frac1{\sqrt 2}\omega_m(\tp_\perp)e^{-\theta}, \tp_\perp\right)\right|^2d\theta d^{D-1}\tp_\perp
       =\int_{\RR\times \RR^{D-1}} |\hat g(e^{-\theta'}, \tp_\perp)|^2d\theta' d^{D-1}\tp_\perp
     \end{align*}
    as we did in \eqref{diagram-decompositions}.
    Therefore,
    it is necessary that $\hat{g}(0,\tp_\perp) = 0$ for each $\tp_\perp$ in order to have $E g_0\in \hm$ (that is the integral is finite),
    and equivalently,  $\int dx_+ g(x_+, \pmb{x}_\perp) = 0$ for each $\pmb{x}_\perp$.

    Conversely, if $\int dx_+ g(x_+, \pmb{x}_\perp) = 0$ for each $\pmb{x}_\perp$, then $\hat g(0, \tp_\perp)= 0$. In particular, since $g$ is smooth by the Taylor formula $\theta'\mapsto g(e^{-\theta'},\tp_\perp)$ has finite $L^2$-norm at $+\infty$. Then since $g$ is  compactly supported, then it decays fast when $p_-\rightarrow-\infty$ and $|\tp_\perp|\rightarrow+\infty$. As a consequence $E g_0\in \hm$.
\end{proof}

The condition $\int dx_+ g(x_+, \pmb{x}_\perp) = 0$ for each $\pmb{x}_\perp$
is satisfied if $g$ is the partial derivative in $x_+$ of another compactly supported smooth function.
We define, for an open bounded region $R$ on the null plane $X^0_-$, as follows:
 \begin{align*}
  H(R):=\overline{\Span}^{\hm}{\{E g_0 \in L^2(\Omega_m,d\Omega_m):g_0=\delta(x_-)\partial_{x_+}g(x_+,\pmb{x}_\perp)\in \mathscr{D}(X_-^0,\RR), \supp g \subset R\}}.
 \end{align*}
   For a general open region $\tilde R\subset X_-^0$, the subspace is generated by the subspaces of open, bounded regions inside $\tilde R$:
   \begin{align*}
    H(\tilde R)=\overline{\bigcup_{R\in \tilde R}H(R)}.
   \end{align*}
\begin{figure}[t]
	\centering
	\includegraphics[width=.6\textwidth]{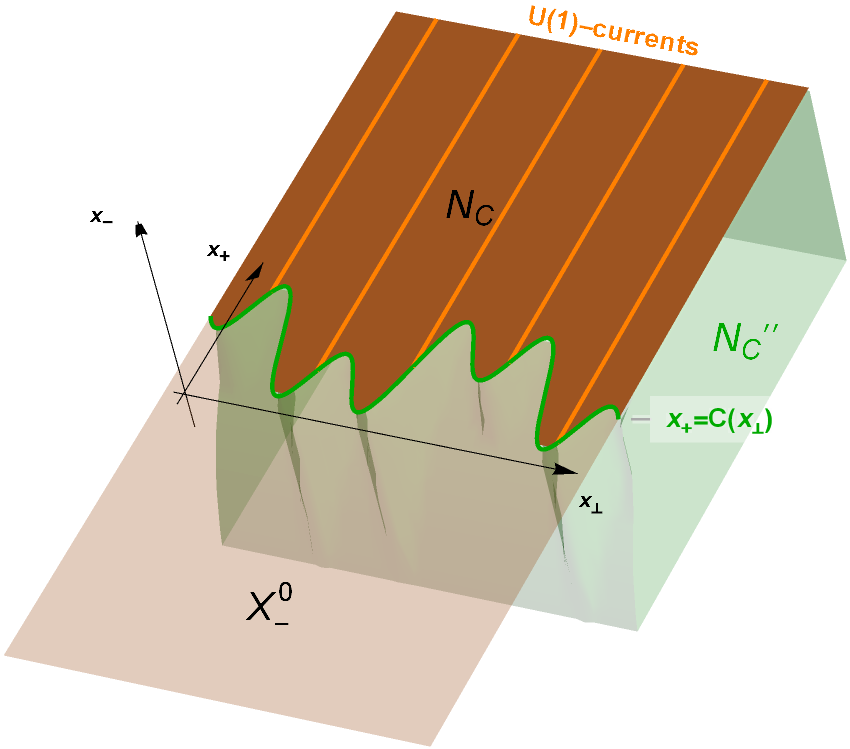}
	\caption{The geometric setup and illustration of the decomposition of standard subspaces of null cuts.}
	\label{fig:setup}
\end{figure}

In particular, a \textbf{null cut} is a region on the null plane associated with
a continuous function $C:\RR^{D-1}\rightarrow \RR$, where $\RR^{D-1}$ denotes the subspace $\{\pmb{x}\in \RR^{D+1};x_-=x_+=0\}$
  \begin{align*}
    N_C:=\{\pmb{x}=(x_+,x_-,\tx_\perp) \in \RR^{D+1};x_-=0,x_+>C(\pmb{x}_\perp) \}.
   \end{align*}

These subspaces are natural in the view of the following isotony in an extended sense.
\begin{lemma}\label{lm:isotony-null-space-region}
 Let $R\subset X_-^0$ be open (relatively in $X_-^0$) and $O\subset \RR^{D+1}$ open such that
 there is a direction $a \in \RR^{D+1}$ where $R\subset O + ta$ with  $0<t<\epsilon$ for sufficiently small $\epsilon$.
 Then it holds that $H(R)\subset H(O)$.
\end{lemma}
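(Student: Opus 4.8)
The plan is to reduce the inclusion to the generating vectors of $H(R)$ and to realize each of them as a norm limit of one-particle vectors attached to genuine smooth test functions supported in $O$. By definition $H(R)$ is the closed real span of the vectors $Eg_0$ with $g_0 = \delta(x_-)\partial_{x_+}g \in \mathscr{D}(X_-^0,\RR)$ and $\supp g \subset R$, and $H(O)$ is a closed real subspace; hence it suffices to prove $Eg_0 \in H(O)$ for each such generator. Recall that $Eg_0 \in \hm$ by Lemma \ref{lm:zero}, since $\int \partial_{x_+}g\,dx_+ = 0$ for every $\pmb{x}_\perp$.

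The key idea is to first push the lower-dimensional distributional support of $g_0$ off the null plane into the interior of $O$ by a small translation along $a$, and only afterwards to mollify. Fix $t_0 \in (0,\epsilon)$ and set $g_0^{t_0} := g_0(\,\cdot + t_0 a)$, the translate of the distribution $g_0$ by $-t_0 a$. Its support satisfies $\supp g_0^{t_0} = \supp g_0 - t_0 a \subset R - t_0 a$, and by hypothesis $R \subset O + t_0 a$, i.e.\ $R - t_0 a \subset O$. Being compact and contained in the open set $O$, the set $\supp g_0^{t_0}$ is in fact at a positive distance from $\partial O$.

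Next I would mollify. Let $(\mu_n)_n$ be a standard real mollifier on $\RR^{D+1}$ (with $\mu_n \ge 0$, $\int \mu_n = 1$, and $\supp \mu_n$ shrinking to $\{0\}$) and put $f_n := \mu_n * g_0^{t_0}$. Each $f_n$ is a real, smooth, compactly supported function, and for $n$ large $\supp f_n \subset \supp g_0^{t_0} + \supp \mu_n \subset O$, so $Ef_n \in H(O)$. On the Fourier side $\widehat{f_n} = \widehat{\mu_n}\cdot\widehat{g_0^{t_0}}$, whence $Ef_n(p) = \widehat{\mu_n}(p)\,Eg_0^{t_0}(p)$ for $p \in \Omega_m$; since $\widehat{\mu_n} \to 1$ pointwise with $|\widehat{\mu_n}| \le 1$ and $Eg_0^{t_0} \in L^2(\Omega_m,d\Omega_m)$, dominated convergence gives $Ef_n \to Eg_0^{t_0}$ in $\hm$. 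As $H(O)$ is closed, this shows $Eg_0^{t_0} \in H(O)$ for every $t_0 \in (0,\epsilon)$.

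Finally, by the covariance of $U_m$ one has $Eg_0^{t_0} = E\big(g_0(\,\cdot - (-t_0 a))\big) = U_m((\1, -t_0 a))Eg_0$, so strong continuity of the translation unitaries yields $Eg_0^{t_0} \to Eg_0$ in $\hm$ as $t_0 \to 0^+$; closedness of $H(O)$ then gives $Eg_0 \in H(O)$, and passing to the closed real span over all generators yields $H(R) \subset H(O)$. The only genuinely delicate point is the geometric bookkeeping in the second step: a priori $R$ may lie on $\partial O$ (the hypothesis only forces $R \subset \overline O$ in the limit $t \to 0$), so mollifying $g_0$ directly could leave $O$. The preliminary translation by $t_0 a$, admissible for every $t_0 \in (0,\epsilon)$, is precisely what places the support compactly inside $O$ and makes room for the mollifier, after which the two limits $n \to \infty$ and $t_0 \to 0$ are routine. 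I note that this argument is insensitive to whether $a$ is transverse to the null plane ($a_- \neq 0$) or tangent to it, since the full $(D+1)$-dimensional mollifier smooths the factor $\delta(x_-)$ in all cases.
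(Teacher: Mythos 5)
Your proof is correct and takes essentially the same route as the paper: both approximate the thin test function by genuine smooth test functions supported in a translate of $O$ (the paper tensors with a scaled bump $g_{-,n}(x_-)=ng_-(nx_-)$ in the $x_-$ direction only, while you translate by $t_0a$ first and then mollify in all $D+1$ variables), both conclude membership via boundedness and pointwise convergence of the mollifier's Fourier transform plus dominated convergence on the mass shell and closedness of the local subspace, and both remove the auxiliary shift using strong continuity of the translation unitaries. The only cosmetic difference is that the paper removes the shift at the level of subspaces, via $H(O+ta)=U_m(ta)H(O)$, whereas you remove it at the level of the individual vectors $Eg_0^{t_0}\to Eg_0$; the two maneuvers are equivalent.
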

\begin{proof}
 Let $g$ be any smooth function compactly supported in $R$. Then for $g_-$ a test function on $R$
 such that $g_-(t) \ge 0$ and $\int g_-(t)dt = 1$, $g(x_-, x_+, \pmb{x}_\perp) = g_-(x_-)\partial_{x_+}g(x_+, \pmb{x}_\perp)$
 has a Fourier transform in $\hm = L^2(\Omega_m,d\Omega_m)$.
 Indeed, we know that the Fourier transform of $\partial_{x_+}g(x_+, \pmb{x}_\perp)$ alone is in $\hm$,
 and it gets multiplied by $\hat g_-(p_+)$, which is rapidly decreasing.
 If we consider the scaled function $g_{-,n}(x_-) = ng_-(nx_-)$, its Fourier transform is $\hat g_-(\frac{p_+}{n})$,
 which is bounded and converges to $1$, therefore, $E(g_{-,n}g)$ converges to $Eg_0$.

 On the other hand, for fixed $a$ and $t$ as in the statement, for sufficiently large $n$, $g_{-,n}g$ is supported in $O + ta$,
 hence we have $Eg_0 \subset H(O+ta)$ by the closedness of $H(O+ta)$.
 This implies $H(R) \subset H(O+ta) = U_m(ta,0)H(O)$, and by the continuity of $U_m$, we obtain $H(R) \subset H(O)$.
\end{proof}

The spaces $H(R)$ are real subspaces of $\hm$.
Recalling the spatial decomposition \eqref{diagram-decompositions} of $\hm$, we will show that the same decomposition holds as a real Hilbert space through $V_{\mathrm S}$.

Let $I \subset \RR$ a bounded open interval and $I_\perp \subset \RR^{D-1}$ a bounded open region.
Under the spatial decomposition \eqref{diagram-decompositions}, we consider the local subspace $H(R)$ of rectangular regions as follows
\[
 R = \{(x_+,x_-,\pmb{x}_\perp): x_-=0, x_+ \in I, \pmb{x}_\perp \in I_\perp\}. 
\]

  \begin{proposition}\label{decomposition-subspace-null-product}
  Let $R$ be a rectangle on the null plane $X_-^0$.
   Under the spatial decomposition \eqref{diagram-decompositions}, the local subspace $H(R)$
   decomposes in the following way (see e.g.\! \cite[Proposition B.2]{MT18}):
     \begin{align}\label{eq:disrect}
      H(R) &\to \int^{\oplus_\RR}_{I_\perp} H_{U(1)}(I) d^{D-1}\pmb{x}_\perp \\
      Eg_0 &\mapsto  V_{\mathrm S}Eg_0,\nonumber
     \end{align}
 where $g_0(x_-,x_+,\pmb{x}_\perp)= \delta(x_-)g(x_+, \pmb{x}_\perp)$ as in \eqref{simple-test-function-on-null-plane}
 satisfying the condition of Lemma \ref{lm:zero}.

  $H(R)$ is separating for any $R = I\times I_\perp$ where $I$ is proper,
  and standard if $I$ contains an open interval and $I_\perp = \RR^{D-1}$.
  \end{proposition}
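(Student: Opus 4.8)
The plan is to push the Fourier computation of Lemma~\ref{lm:zero} through the spatial decomposition and read off the fibrewise structure; the separating and standard claims then reduce to the corresponding properties of the $U(1)$-current fibre $H_{U(1)}(I)$ via Lemma~\ref{lem:ABC}. First I would identify the image of a single generator. For $g_0 = \delta(x_-)\partial_{x_+}g(x_+,\pmb{x}_\perp)$ with $\supp g \subset R = I\times I_\perp$, Lemma~\ref{lm:zero} gives $(Eg_0)(p_-,\tp_\perp) = \widehat{\partial_{x_+}g}(p_-,\tp_\perp)$. Since $V_{\mathrm S}$ is $V_{\mathrm M}$ (which substitutes $p_- = e^{-\theta'}$) followed by the inverse Fourier transform $\tp_\perp\mapsto\pmb{x}_\perp$, unravelling the two transforms yields
\begin{align*}
 (V_{\mathrm S}Eg_0)(\theta',\pmb{x}_\perp) = \widehat{\partial_{x_+}g(\slot,\pmb{x}_\perp)}(e^{-\theta'}),
\end{align*}
the hat now denoting the Fourier transform in $x_+$ alone. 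For fixed $\pmb{x}_\perp$ the function $\partial_{x_+}g(\slot,\pmb{x}_\perp)$ is real, smooth, compactly supported in $I$ and has vanishing integral, so this fibre is exactly a generator of $H_{U(1)}(I)$ in the sense of Section~\ref{sec:U(1)-current}. As the direct integral real subspace is closed, this gives the inclusion $V_{\mathrm S}H(R)\subseteq\int^{\oplus_\RR}_{I_\perp}H_{U(1)}(I)\,d^{D-1}\pmb{x}_\perp$.

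For the reverse inclusion I would use factorised test functions $g = g_1\otimes g_2$ with $g_1\in C_0^\infty(I,\RR)$ and $g_2\in C_0^\infty(I_\perp,\RR)$, whose images are the product fields $\pmb{x}_\perp\mapsto g_2(\pmb{x}_\perp)\,\eta_{g_1}$ with $\eta_{g_1}(\theta') = \widehat{g_1'}(e^{-\theta'})\in H_{U(1)}(I)$. As $g_1$ varies the $\eta_{g_1}$ span a dense real subspace of $H_{U(1)}(I)$, and approximating an arbitrary measurable field $\pmb{x}_\perp\mapsto\eta(\pmb{x}_\perp)\in H_{U(1)}(I)$ first by $H_{U(1)}(I)$-valued step functions and then the indicator coefficients by smooth real functions shows that the closed real span of these products is the whole direct integral. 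This is precisely the measure-theoretic statement recorded in \cite[Proposition B.2]{MT18}, which I would invoke to conclude $V_{\mathrm S}H(R) = \int^{\oplus_\RR}_{I_\perp}H_{U(1)}(I)\,d^{D-1}\pmb{x}_\perp$. I expect this density argument to be the main obstacle, as it is where one must control the interaction between the real (symplectic) structure and the direct-integral decomposition, and verify measurability of the (here constant) field of subspaces.

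With the decomposition in hand, both remaining assertions are fibrewise. Applying Lemma~\ref{lem:ABC}(b) to the pair $\{H(R), iH(R)\}$ gives $H(R)\cap iH(R) = \int^{\oplus_\RR}_{I_\perp}\big(H_{U(1)}(I)\cap iH_{U(1)}(I)\big)\,d^{D-1}\pmb{x}_\perp$; for $I$ proper the fibre $H_{U(1)}(I)$ is standard, hence separating, so every fibre intersection vanishes and $H(R)$ is separating for any $I_\perp$. For standardness it remains to establish cyclicity. Applying Lemma~\ref{lem:ABC}(c) to $\{H(R), iH(R)\}$ gives
\begin{align*}
 \overline{H(R)+iH(R)} = \int^{\oplus_\RR}_{I_\perp}\overline{H_{U(1)}(I)+iH_{U(1)}(I)}\,d^{D-1}\pmb{x}_\perp.
\end{align*}
When $I$ contains an open interval the fibre $H_{U(1)}(I)$ is cyclic, so each fibre closure equals $L^2(\RR,d\theta')$, and when $I_\perp = \RR^{D-1}$ the base is the full transverse space, whence by \eqref{integral=tensor-prod} the right-hand side is all of $\hm$. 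Thus $H(R)$ is cyclic and, being also separating, standard. The hypothesis $I_\perp = \RR^{D-1}$ is necessary here: for bounded $I_\perp$ the subspace is supported in $L^2(I_\perp)\otimes L^2(\RR,d\theta')$, which is not dense, so cyclicity fails.
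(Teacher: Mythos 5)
Your proof is correct and takes essentially the same route as the paper: compute the image of the generators under $V_{\mathrm S}$, identify $\int^{\oplus_\RR}_{I_\perp}H_{U(1)}(I)\,d^{D-1}\pmb{x}_\perp$ with the closed real span of product fields (the content of \cite[Proposition B.2]{MT18}, equivalently the identification with $H_{U(1)}(I)\otimes_\RR L^2(I_\perp,\RR)$), and obtain the separating/standard claims fibrewise from Lemma \ref{lem:ABC}. The only difference is organizational: the paper asserts in one line that the product-form generators $\hat g_+(e^{-\theta'})g_\perp(\pmb{x}_\perp)$ generate $H(R)$ and concludes the coincidence, whereas your two-inclusion argument (general generators land fibrewise in $H_{U(1)}(I)$ for $\subseteq$, products suffice for $\supseteq$) is slightly more careful and avoids the implicit approximation of general test functions by sums of products.
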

  \begin{proof}
    By the spatial decomposition \eqref{diagram-decompositions},
    we identify the whole Hilbert space $\hm$ with $\int_{\RR^{D-1}} L^2(\RR,d\theta')d\pmb{x}_\perp$,
    and hence by Lemma \ref{lm:zero} and \eqref{eq:U_m-spatialdecomp} with the constant field $\{H_{U(1)}(I)\}$ of real Hilbert spaces, the right-hand side
    is identified with $H_{U(1)}(I)\otimes_\RR L^2(I_\perp,\RR, d\pmb{x}_\perp)$.
    On the other hand, $H(R)$ is generated by the functions of the form
    $\hat g_+(e^{-\theta'})g_\perp(\pmb{x}_\perp)$, where $\supp g_+ \subset I, \hat g_+(0) = 0$ and $\supp g_\perp \subset I_\perp$,
    hence they coincide.
    
    The statement about the separating property and cyclicity follows from this decomposition and Lemma \ref{lem:ABC}.
    
\end{proof}

Next let us consider translations and dilations on each fibre on the null plane.
Let $C:\RR^{D-1}\to\RR$ be a continuous function.
We define the distorted lightlike translations as the following unitary operator on $ L^2(\RR^D ,d\theta'd\pmb{x}_\perp)$
(and with the identification of $\hm$ with it through the spacial decomposition $V_{\mathrm S}$):
   \begin{align}\label{eq:non-const-translation}
    (T_C \xi)(\theta',\pmb{x}_\perp) = e^{iC(\pmb{x}_\perp)e^{-\theta'}} \xi(\theta', \pmb{x}_\perp).
   \end{align}

Similarly, we consider the distorted lightlike dilations
   \begin{align}\label{eq:non-const-dilations}
    (D_C\xi)(\theta', \pmb{x}_\perp) = \xi(\theta' - C(\pmb{x}_\perp), \pmb{x}_\perp).
   \end{align}
If $C = \alpha$ is constant, they coincide with the usual translation and the dilation, respectively, see \eqref{eq:U_m-spatialdecomp}.

\begin{remark}\label{rem:TCDC}Note that since $C$ is a continuous function $C$, $x_\perp\mapsto  e^{iC(\pmb{x}_\perp)te^{-\theta'}}$ is a measurable vector field of bounded operators,
that is, $\int_{\RR^{D-1}}^\oplus e^{iC(\pmb{x}_\perp)te^{-\theta'}}\xi(\pmb{x}_\perp)d\pmb x_\perp\in\hm$ for all $\xi\in\hm$, with the identification \eqref{diagram-decompositions}.
As an intermediate step, if $C$ is a characteristic function, the claim is obvious.
For an arbitrary continuous $C$, there is a family of simple functions $\{C_N\}_{N\in\NN}$ converging  uniformly to pointwise to $C$.
 Then $e^{iC_N(\pmb{x}_\perp)te^{-\theta'}}$ converges to $e^{iC(\pmb{x}_\perp)te^{-\theta'}}$ in the strong operator topology. 
 $D_C$ is also a decomposable operator since
  $(D_C\xi)(\theta',\tx_\perp)=\int_{\RR^{D-1}}^\oplus D_C(\tx_\perp) \xi(\theta',\tx_\perp)d\tx_\perp$,
 where $D_C(\tx_\perp) \xi(\theta',\tx_\perp)= \xi(\theta' - C(\pmb{x}_\perp), \pmb{x}_\perp)$,
 and one can analogously prove that $x_\perp\mapsto  D_C(\tx_\perp)$ is a measurable vector field of bounded operators passing to the $\theta'$-Fourier transform. In particular, since $H=\int_{\RR^{D-1}}^{\oplus_\RR} H(\tx_\perp)d\tx_\perp\subset\hm$ is a standard subspace,
 the subspaces $T_C H=\int_{\RR^{D-1}}^{\oplus_\RR} T_C(\tx_\perp)H(\tx_\perp)d\tx_\perp$ and $D_CH=\int_{\RR^{D-1}}^\oplus D_C(\tx_\perp)H(\tx_\perp)d\tx_\perp$
 are well-defined and standard as well.
\end{remark}
 
\begin{proposition}\label{pr:null-covariance}
The family of real subspaces $H(R)$, indexed by open connected regions $R\subset X_-^0$, is covariant with respect to
$T_C, D_C$ for a continuous function $C$,
i.e.\! $T_C H(R) = H(R+C)$ and $D_{C} H(R) = H(e^{C}\cdot R)$,
where $R + C = \{(x_+ + C(\pmb{x}_\perp), \pmb{x}_\perp): (x_+, \pmb{x}_\perp) \in R\}$
and $C\cdot R = \{(e^{C(\pmb{x}_\perp)}x_+, \pmb{x}_\perp): (x_+, \pmb{x}_\perp) \in R\}$.
\end{proposition}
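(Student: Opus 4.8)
The plan is to reduce everything to the fibrewise $U(1)$-current covariance and then integrate over the transverse directions. First I would observe that, under the spatial decomposition \eqref{diagram-decompositions}, the operators $T_C$ and $D_C$ of \eqref{eq:non-const-translation} and \eqref{eq:non-const-dilations} act on each fibre as genuine $U(1)$-current symmetries: comparing with the definition of $U_{\uone}$ in Section \ref{sec:U(1)-current}, on the fibre over $\pmb{x}_\perp$ one has $T_C(\pmb{x}_\perp) = U_{\uone}(0, C(\pmb{x}_\perp))$ and $D_C(\pmb{x}_\perp) = U_{\uone}(C(\pmb{x}_\perp), 0)$. The $U(1)$-current covariance $U_{\uone}(\alpha, t)H_{U(1)}(I) = H_{U(1)}(e^\alpha I + t)$ then gives, for every interval $I$ and every $\pmb{x}_\perp$,
\[
T_C(\pmb{x}_\perp)H_{U(1)}(I) = H_{U(1)}(I + C(\pmb{x}_\perp)), \qquad D_C(\pmb{x}_\perp)H_{U(1)}(I) = H_{U(1)}(e^{C(\pmb{x}_\perp)}I).
\]
By Remark \ref{rem:TCDC} the operators $T_C, D_C$ are decomposable, so applying them to a direct integral of subspaces amounts to applying $T_C(\pmb{x}_\perp), D_C(\pmb{x}_\perp)$ fibrewise.

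The key step is then to promote the rectangular decomposition of Proposition \ref{decomposition-subspace-null-product} to arbitrary open regions, namely to prove
\[
H(\tilde R) = \int_{\RR^{D-1}}^{\oplus_\RR} H_{U(1)}(\tilde R_{\pmb{x}_\perp})\, d\pmb{x}_\perp, \qquad \tilde R_{\pmb{x}_\perp} := \{x_+ \in \RR : (x_+, \pmb{x}_\perp) \in \tilde R\},
\]
for every open $\tilde R \subset X_-^0$, where $H_{U(1)}(O)$ for an open $O\subset\RR$ is understood as the closed real span of the $H_{U(1)}(I)$ over intervals $I\subset O$. Since $X_-^0 \cong \RR^D$, I would cover $\tilde R$ by a countable family of open rectangles $R_k = I_k \times (I_\perp)_k \subset \tilde R$, chosen so that for each $\pmb{x}_\perp$ the intervals $\{I_k : \pmb{x}_\perp \in (I_\perp)_k\}$ cover the open fibre $\tilde R_{\pmb{x}_\perp}$. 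Using $H(\tilde R) = \overline{\Span}_k H(R_k)$, the rectangular decomposition from Proposition \ref{decomposition-subspace-null-product}, and Lemma \ref{lem:ABC}(c) to interchange the closed span with the direct integral, the fibre of $H(\tilde R)$ over $\pmb{x}_\perp$ becomes $\overline{\Span}\{H_{U(1)}(I_k) : \pmb{x}_\perp \in (I_\perp)_k\}$. This equals $H_{U(1)}(\tilde R_{\pmb{x}_\perp})$ by additivity of the $U(1)$-current net: a subinterval of $\tilde R_{\pmb{x}_\perp}$ is covered by finitely many of the $I_k$, and a test function supported in the union is split by a partition of unity, correcting the zero-mode condition $\hat g(0)=0$ by a function supported in an overlap with prescribed integral.

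With this decomposition in hand the covariance is immediate. Since $C$ is continuous, the maps $(x_+, \pmb{x}_\perp) \mapsto (x_+ + C(\pmb{x}_\perp), \pmb{x}_\perp)$ and $(x_+, \pmb{x}_\perp) \mapsto (e^{C(\pmb{x}_\perp)}x_+, \pmb{x}_\perp)$ are homeomorphisms of $X_-^0$, so $R + C$ and $e^C \cdot R$ are again open regions, with fibres $(R+C)_{\pmb{x}_\perp} = R_{\pmb{x}_\perp} + C(\pmb{x}_\perp)$ and $(e^C \cdot R)_{\pmb{x}_\perp} = e^{C(\pmb{x}_\perp)} R_{\pmb{x}_\perp}$. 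Combining the decomposition with the fibrewise identities,
\[
T_C H(R) = \int_{\RR^{D-1}}^{\oplus_\RR} T_C(\pmb{x}_\perp) H_{U(1)}(R_{\pmb{x}_\perp})\, d\pmb{x}_\perp = \int_{\RR^{D-1}}^{\oplus_\RR} H_{U(1)}(R_{\pmb{x}_\perp} + C(\pmb{x}_\perp))\, d\pmb{x}_\perp = H(R + C),
\]
and analogously $D_C H(R) = H(e^C \cdot R)$, where the last equality in each line is the decomposition applied to the open region $R + C$ (respectively $e^C \cdot R$).

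The main obstacle is the second step: establishing the fibrewise decomposition for non-rectangular regions. This requires the measurable covering argument (handled by Lemma \ref{lem:ABC}(c), which also supplies measurability of the resulting field of subspaces) together with additivity of the $U(1)$-current net on each fibre, and the verification that the distorted regions remain open with the expected fibres. The remaining fibrewise covariance identities and the assembly in the third paragraph are then routine.
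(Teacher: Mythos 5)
Your proposal is correct in outline, but it runs in the opposite logical direction from the paper's proof, and this is worth spelling out. The paper proves covariance directly from the definition of $H(R)$: it lets $T_C$ (resp.\ $D_C$) act on the vector $Eg_0$ of a thin test function, identifies the result as the vector of the fibrewise-deformed function $g(x_+-C(\pmb{x}_\perp),\pmb{x}_\perp)$, and then must show that this vector lies in $H(R+C)$. Since $C$ is only continuous, the deformed function is not smooth in $\pmb{x}_\perp$ and is not itself a thin test function, so the paper mollifies it by an approximate identity $\delta_n$: the mollified functions are genuine thin test functions supported in $R+C$ for large $n$, and their vectors converge to $Eg_{0,\mathfrak{t}_C}$. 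The general decomposition (Proposition \ref{decomposition-subspace-null-general}) is then deduced \emph{from} covariance. You invert this: you first establish a decomposition $H(\tilde R)\simeq\int^{\oplus_\RR}_{\RR^{D-1}}H_{U(1)}(\tilde R_{\pmb{x}_\perp})\,d^{D-1}\pmb{x}_\perp$ for arbitrary open regions (a strengthening of Proposition \ref{decomposition-subspace-null-general}), and then covariance is a one-line fibrewise computation from $U_{\uone}(\alpha,t)H_{\uone}(I)=H_{\uone}(e^\alpha I+t)$ together with the decomposability of $T_C,D_C$ (Remark \ref{rem:TCDC}). This is non-circular, since the rectangle case (Proposition \ref{decomposition-subspace-null-product}) precedes covariance in the paper, and it buys something real: the mollification step disappears entirely, because the low regularity of $C$ is absorbed into the mere measurability of the field of fibre unitaries, and you get the general-region decomposition for free. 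The price is additivity machinery that the paper's hands-on argument never needs.

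On that price: the one step you assert without justification, $H(\tilde R)=\overline{\Span}_k H(R_k)$ for your countable family of rectangles, is precisely where the zero-mode condition could bite, and it needs the short argument. It does hold, for the same structural reason your fibrewise additivity works: the generators of $H(R)$ are $E(\delta(x_-)\partial_{x_+}g)$ with $g$ smooth and compactly supported in $R$, so one applies the partition of unity to $g$ itself rather than to $\partial_{x_+}g$. Writing $g=\sum_j\chi_j g$ with $\supp(\chi_j g)\subset R_{k_j}$ gives $\partial_{x_+}g=\sum_j\partial_{x_+}(\chi_j g)$, a finite sum of generators of the $H(R_{k_j})$, with no correction terms at all. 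The same trick (cutting the compactly supported primitive $G$, so that $g=\sum_j(\chi_j G)'$) also streamlines your fibrewise $U(1)$ additivity and removes the need for the telescoping correction in overlaps; if you do keep that correction argument, note that you must first extract a connected chain of the $I_k$ covering the convex hull of $\supp g$, since otherwise the covering intervals may split into components carrying nonzero charge. With these details supplied, your proof is complete.
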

\begin{proof}
 Let $g_0$ be a thin test function as in \eqref{simple-test-function-on-null-plane}.
 The fibre-wise deformed test function $g_{\mathfrak{t}_C}(x_+, \pmb{x}_\perp) = g(x_+ - C(\pmb{x}_\perp), \pmb{x}_\perp)$
 corresponds to the following one-particle vector:
 \[
  \hat g_{\mathfrak{t}_C}(p_-, \tp_\perp) = e^{iC(\pmb{x}_\perp)p_-}(\hat g)(p_-, \tp_\perp),
 \]
 Therefore, we obtain the equality $ V_{\mathrm S}(E g_{0,\mathfrak{t}_C})=T_CV_{\mathrm S}(Eg_0) $
 once we prove that $H(R+C)$ contains the vectors which are the Fourier transform
 of $g_{0,\mathfrak{t}_C}(x_+, x_-, \pmb{x}_\perp) = \delta(x_-)g_{\mathfrak{t}_C}(x_+, \pmb{x}_\perp)$
 where $g_{\mathfrak{t}_C}$ is of the form $g(x_+ - C(\pmb{x}_\perp), \pmb{x}_\perp)$
 with a smooth $g$.

 Indeed, we can mollify $g_{\mathfrak{t}_C}$ by an approximate delta function $\delta_n$ on $X_-^0$:
 $\delta_n$ is a positive smooth function with $\int \delta_1(x_+,\pmb{x}_\perp) dx_+ d\pmb{x}_\perp = 1$
 over the Lebesgue measure and $\delta_n(x_+,\pmb{x}_\perp) = n^D \delta_1(nx_+, n\pmb{x}_\perp)$.
 The mollified function $g_{\mathfrak{t}_C} * \delta_n$ is smooth and supported in $R+C$ for sufficiently large $n$,
 its $x_-$ integral vanishes.
 Therefore, its Fourier transform belongs to $H(R+C)$ when restricted to the mass shell,
 and is equal to the Fourier transform of $g_{\mathfrak{t}_C}$ multiplied by that of $\delta_n$. 
 The Fourier transform of $\delta_n$ is bounded and converge to $1$, therefore, it converges to $Eg_{0,\mathfrak{t}_C}$
 where $g_{0,\mathfrak{t}_C}(x_+, x_-, \pmb{x}_\perp) = \delta(x_-)g_{0,\mathfrak{t}_C}(x_+, \pmb{x}_\perp)$.
 As $H(R+C)$ is closed, it contains $Eg_{0,\mathfrak{t}_C}$.

Similarly, for $g_{\mathfrak{d}_{C}}(x_+, \pmb{x}_\perp) = g(e^{-C(\pmb{x}_\perp)}x_+ , \pmb{x}_\perp)$,
 its Fourier transform on $x_+$ for fixed $\pmb{x}_\perp$ as a function of $p_-$ is dilated by $e^{C(\pmb{x}_\perp)}$ and hence
 \begin{align*}
  (V_{\mathrm S}\hat g_{\mathfrak{d}_{C}})(\theta', \pmb{x}_\perp)
  = (V_{\mathrm S}\hat g)(\theta'-C(\pmb{x}_\perp), \pmb{x}_\perp),
 \end{align*}
 that is, it holds that $V_{\mathrm S}(Eg_{0,\mathfrak{d}_{C}}) = D_{C} V_{\mathrm S}(Eg_0)$.
 Similarly as in the case of $g_{\mathfrak{t}_C}$, $Eg_{0,\mathfrak{d}_C} \in H(e^{C}\cdot R)$,
 where $g_{0,\mathfrak{d}_C}(x_+, x_-, \pmb{x}_\perp) = \delta(x_-)g_{\mathfrak{d}_{C}}(x_+, \pmb{x}_\perp)$,
 we obtain the desired covariance.

\end{proof}

Proposition \ref{pr:null-covariance} allows to generalize the disintegration \eqref{eq:disrect} to more general open regions.

\begin{proposition}\label{decomposition-subspace-null-general}
Let $R$ be a region on $X_-^0$ of the form $R = \{(x_+, \pmb{x}_\perp): \pmb{x}_\perp \in I_\perp, C_1(\pmb{x}_\perp) < x_+ < C_2(\pmb{x}_\perp)\}$
for an open region $I_\perp \subset \RR^{D-1}$ and two continuous functions $C_1, C_2$.
Then through the spatial decomposition we have
\[
  H(R) \simeq \int^{\oplus_\RR}_{I_\perp} H_{U(1)}\left((C_1(\pmb{x}_\perp),C_2(\pmb{x}_\perp))\right)d^{D-1}\pmb{x}_\perp.
\]
Similarly, the subspace of a null-cut decomposes:
	\begin{align*}
		H(N_C)&\simeq\int^{\oplus_\RR}_{\RR^{D-1}}H_{U(1)}((C(\pmb{x}_\perp),\infty))d^{D-1}\pmb{x}_\perp, \\
		H(N_{C}^\dagger)&\simeq\int^{\oplus_\RR}_{\RR^{D-1}} H_{U(1)}\left( (-\infty, C(\pmb{x}_\perp))  \right)d^{D-1}\pmb{x}_\perp.
	\end{align*}
\end{proposition}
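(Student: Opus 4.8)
The plan is to reduce everything to the rectangle decomposition of Proposition \ref{decomposition-subspace-null-product} by exploiting the distorted affine covariance of Proposition \ref{pr:null-covariance} together with the decomposability of $T_C$ and $D_C$ recorded in Remark \ref{rem:TCDC}, and then to pass to the unbounded features (unbounded $I_\perp$ and the half-lines of null cuts) by an exhaustion argument based on Lemma \ref{lem:ABC}(c).

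First I would treat the case where $I_\perp$ is bounded and $C_1 < C_2$ are finite, so that $R$ is pinched between two continuous graphs. Since $C_2 - C_1$ is continuous and strictly positive on $I_\perp$, the function $\alpha := \log(C_2 - C_1)$ is continuous, and the fibrewise affine map $x_+ \mapsto e^{\alpha(\pmb{x}_\perp)} x_+ + C_1(\pmb{x}_\perp)$ carries $(0,1)$ onto $(C_1(\pmb{x}_\perp), C_2(\pmb{x}_\perp))$. At the level of subspaces this reads
\begin{align*}
  H(R) = T_{C_1}\, D_{\alpha}\, H\big(I_\perp \times (0,1)\big),
\end{align*}
which follows from Proposition \ref{pr:null-covariance} applied twice: $D_\alpha$ sends $I_\perp \times (0,1)$ to $\{\pmb{x}_\perp \in I_\perp,\ 0 < x_+ < C_2 - C_1\}$ and $T_{C_1}$ shifts this to $R$. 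By Remark \ref{rem:TCDC} both $T_{C_1}$ and $D_\alpha$ are decomposable operators acting on the fibre over $\pmb{x}_\perp$ as the $U(1)$-translation by $C_1(\pmb{x}_\perp)$ and the $U(1)$-dilation by $\alpha(\pmb{x}_\perp)$, respectively. Feeding the base decomposition $H(I_\perp \times (0,1)) \simeq \int^{\oplus_\RR}_{I_\perp} H_{U(1)}((0,1)) \, d^{D-1}\pmb{x}_\perp$ of Proposition \ref{decomposition-subspace-null-product} through these fibrewise maps and using $U_{\uone}(\alpha,0)H_{U(1)}(I) = H_{U(1)}(e^\alpha I)$ and $U_{\uone}(0,t)H_{U(1)}(I) = H_{U(1)}(I+t)$ yields, fibre by fibre, $H_{U(1)}((0,1)) \mapsto H_{U(1)}((0,C_2-C_1)) \mapsto H_{U(1)}((C_1,C_2))$, which is the claimed decomposition for bounded $R$.

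Next I would remove the boundedness assumptions by exhaustion. For general open $I_\perp$ and finite $C_1, C_2$, choose bounded open sets $I_\perp^{(n)} \uparrow I_\perp$; the regions $R_n = \{\pmb{x}_\perp \in I_\perp^{(n)},\ C_1 < x_+ < C_2\}$ satisfy $H(R) = \overline{\bigcup_n H(R_n)}$ (by the very definition of $H$ on general regions), and each $H(R_n)$ decomposes by the bounded case, with the zero subspace on fibres outside $I_\perp^{(n)}$. Lemma \ref{lem:ABC}(c) then identifies $\overline{\Span_n H(R_n)}$ with $\int^{\oplus_\RR}_{I_\perp} H_{U(1)}((C_1,C_2))\, d^{D-1}\pmb{x}_\perp$. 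For the null cuts, where the fibre interval is the half-line $(C(\pmb{x}_\perp),\infty)$, I would instead exhaust in the $x_+$-direction: with $R_n = \{C < x_+ < C + n\}$ over $I_\perp = \RR^{D-1}$ one has $H(N_C) = \overline{\bigcup_n H(R_n)}$, and Lemma \ref{lem:ABC}(c) together with the additivity $\overline{\bigcup_n H_{U(1)}((C,C+n))} = H_{U(1)}((C,\infty))$ of the $U(1)$-current net — immediate from its definition, since any test function compactly supported in $(C,\infty)$ lies in some $(C,C+n)$ — gives the null-cut formula. The case of $H(N_C^\dagger)$ is identical with $(-\infty,C)$ exhausted by $(C-n,C)$.

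The main obstacle is not conceptual but one of bookkeeping: one must ensure that the fibrewise affine maps genuinely assemble into decomposable operators on the direct integral, and that the exhausting subspaces form a countable family to which Lemma \ref{lem:ABC}(c) applies. The decomposability is exactly what Remark \ref{rem:TCDC} supplies, via uniform approximation of the continuous boundary functions by simple functions, so the only genuine care is in verifying that the fibre spaces $\overline{\Span_n (H(R_n))_{\pmb{x}_\perp}}$ collapse to the single interval subspace $H_{U(1)}((C_1(\pmb{x}_\perp), C_2(\pmb{x}_\perp)))$ for almost every $\pmb{x}_\perp$, which reduces to the elementary additivity of $U(1)$-subspaces under increasing unions of intervals.
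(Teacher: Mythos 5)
Your proof is correct, and its engine is the same as the paper's: transport the rectangle decomposition of Proposition~\ref{decomposition-subspace-null-product} through the fibre-wise covariance of Proposition~\ref{pr:null-covariance}, using the decomposability of $T_C$ and $D_C$ from Remark~\ref{rem:TCDC}. The genuine divergence is in how the unbounded regions are reached. The paper treats Proposition~\ref{decomposition-subspace-null-product} as valid for unbounded rectangles: for the null cuts it applies it directly to $\RR_{\pm}\times\RR^{D-1}$ and then acts with the single distorted translation $T_C$, so no dilation and no limiting procedure appear there at all. You instead keep that proposition within its stated hypotheses (bounded $I$, bounded $I_\perp$) and reach the unbounded cases by countable exhaustion --- $I_\perp^{(n)}\uparrow I_\perp$, respectively the bands $\{C<x_+<C+n\}$ --- glued together by Lemma~\ref{lem:ABC}(c) and the increasing-union property of the $U(1)$-current subspaces, which you correctly identify as immediate from their definition via compactly supported test functions. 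Your route costs extra bookkeeping but makes explicit a step the paper leaves implicit, and your $\alpha=\log(C_2-C_1)$ is the correct reading of the paper's loosely written ``dilate by $D_{C_2-C_1}$'' (by the covariance formula $D_C H(R)=H(e^{C}\cdot R)$, dilating by $C_2-C_1$ itself would give the wrong interval). Two caveats, both repairable inside your scheme: take the $I_\perp^{(n)}$ with compact closure in $I_\perp$, so that $C_1,C_2$, hence $\alpha$, are bounded there and each $R_n$ is genuinely bounded; and since $\alpha$ is continuous only on $I_\perp$ (it may diverge at $\partial I_\perp$) while Proposition~\ref{pr:null-covariance} and Remark~\ref{rem:TCDC} are formulated for functions continuous on all of $\RR^{D-1}$, extend $C_1$ and $\alpha$ from $\overline{I_\perp^{(n)}}$ to $\RR^{D-1}$ (Tietze) before invoking the covariance --- a repair your exhaustion permits, whereas the paper's two-sided case silently carries the same blemish.
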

\begin{proof}
Recall that by Proposition \ref{decomposition-subspace-null-product} the decomposition holds
for the case $R = (0,1)\times I_\perp$.
We can dilate this by $D_{C_2-C_1}$ and then translate it by $C_1$ to arrive at $R$ in the desired form.
By the covariance of the whole family $\{H(R)\}$ with respect to fibre-wise translations and dilations by Proposition \ref{pr:null-covariance},
we obtain the desired decomposition.

If we start from the rectangle $\RR_{\pm}\times \RR^{D-1}$, then we deform the rectangle via the distorted lightlike translation $T_C$ to the null-cut $N_C$
or the interior of its complement $N_C^\dagger$, respectively. The interior of the complement $N_{C}^\dagger$ of the null-cut $N_{C}$ on the null-plane is:
	\begin{align*}
		N_C^\dagger&=X_-^0\setminus \overline{N_C} =\{ x_+<C(\pmb{x}_\perp),x_-=0, \pmb{x}_\perp\in\mathbb{R}^{D-1}  \}.
	\end{align*}
From the covariance of $T_C$ and \autoref{decomposition-subspace-null-product}, we obtain the decompositions:
	\begin{align}
		H(N_{C})&\simeq \int^{\oplus_\RR}_{\RR^{D-1}}H_{U(1)}((C(\pmb{x}_\perp),\infty))d^{D-1}\pmb{x}_\perp\nonumber\\
		H(N_{C}^\dagger)&\simeq\int^{\oplus_\RR}_{\RR^{D-1}} H_{U(1)}\left( (-\infty, C(\pmb{x}_\perp))  \right)d^{D-1}\pmb{x}_\perp\label{decomposition-subspace-complement-null-cut}.
	\end{align}
\end{proof}

\subsection{Duality for null cuts}

\begin{theorem}\label{thm:completeness}
	The standard subspaces $H(N_C)$ of continuous $C$ satisfy duality (see Figure \ref{fig:setup}): 
	\begin{align*}
		H(N_C)=H(N_C)''=H(N_C'').
	\end{align*}
\end{theorem}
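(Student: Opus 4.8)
The plan is to reduce the duality statement for the null cut $H(N_C)$ to the fibre-wise duality of the $U(1)$-current, using the direct integral decomposition established in Proposition \ref{decomposition-subspace-null-general} together with Lemma \ref{lem:ABC}(a). The key point is that the symplectic complement commutes with the direct integral over fibres. Concretely, the decomposition gives
\[
 H(N_C) \simeq \int^{\oplus_\RR}_{\RR^{D-1}} H_{U(1)}\big((C(\pmb{x}_\perp),\infty)\big)\,d^{D-1}\pmb{x}_\perp,
\]
so by Lemma \ref{lem:ABC}(a) its symplectic complement disintegrates as
\[
 H(N_C)' \simeq \int^{\oplus_\RR}_{\RR^{D-1}} H_{U(1)}\big((C(\pmb{x}_\perp),\infty)\big)'\,d^{D-1}\pmb{x}_\perp.
\]

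First I would establish the fibre-wise duality for the $U(1)$-current, namely that on each light ray $H_{U(1)}(C(\pmb{x}_\perp),\infty)' = H_{U(1)}(-\infty,C(\pmb{x}_\perp))$. This is Haag duality for the half-line for the chiral current, which holds because the $U(1)$-current net satisfies Haag duality on the real line (equivalently on $S^1$), and the complement of the half-line $(C(\pmb{x}_\perp),\infty)$ inside $\RR$ is exactly the interior of the opposite half-line; the zero-mode condition $\hat g(0)=0$ is precisely what makes the additivity/duality work for these half-line subspaces. Given this, the displayed complement decomposition becomes
\[
 H(N_C)' \simeq \int^{\oplus_\RR}_{\RR^{D-1}} H_{U(1)}\big((-\infty,C(\pmb{x}_\perp))\big)\,d^{D-1}\pmb{x}_\perp = H(N_C^\dagger),
\]
where the last identification is exactly the second decomposition in Proposition \ref{decomposition-subspace-null-general} (equation \eqref{decomposition-subspace-complement-null-cut}). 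Applying the complement once more, and using that $H''=H$ for standard subspaces together with $H_{U(1)}(-\infty,C(\pmb{x}_\perp))' = H_{U(1)}(C(\pmb{x}_\perp),\infty)$ on each fibre, returns $H(N_C)'' = H(N_C)$.

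It remains to connect the intrinsically-defined $H(N_C'')$, where $N_C''$ is the causal completion (a distorted wedge $W_C$ in the ambient Minkowski space), to the null-plane subspace $H(N_C)$. I would argue that $H(N_C) = H(N_C'')$ by a squeezing argument: on one hand, isotony in the extended sense of Lemma \ref{lm:isotony-null-space-region} gives $H(N_C) \subset H(W_C)$ since the null cut sits on the boundary of its causal completion; on the other hand, $H(W_C) \subset H(N_C)'' = H(N_C)$ because the wedge algebra is generated by translates that can be pushed onto the null plane, using again Lemma \ref{lm:isotony-null-space-region} and the covariance of Proposition \ref{pr:null-covariance}. The main obstacle I anticipate is precisely this last identification of the spatial subspace of the causally completed region with the null-plane subspace: one must show no observables are lost or gained in passing between $N_C$ on the null plane and its causal shadow $W_C$, which requires care with the thin-test-function approximation and the closedness of the subspaces, rather than any deep new input. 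The purely fibre-wise part is, by contrast, an immediate consequence of Lemma \ref{lem:ABC} and standard chiral Haag duality.
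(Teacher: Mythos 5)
Your first half is sound and in fact coincides with the machinery the paper uses: the disintegration of Proposition \ref{decomposition-subspace-null-general}, Lemma \ref{lem:ABC}(a), and half-line duality for the $U(1)$-current on each fibre give $H(N_C)'=H(N_C^\dagger)$, hence $H(N_C^\dagger)'=H(N_C)=H(N_C)''$ (the paper gets $H(N_C)=H(N_C)''$ even more cheaply from standardness, but your route is valid and the identity $H(N_C^\dagger)'=H(N_C)$ is exactly what is needed later). The inclusion $H(N_C)\subset H(N_C'')$ via Lemma \ref{lm:isotony-null-space-region} is also the paper's argument.

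The genuine gap is the reverse inclusion. You claim $H(W_C)\subset H(N_C)''$ ``because the wedge algebra is generated by translates that can be pushed onto the null plane, using again Lemma \ref{lm:isotony-null-space-region} and the covariance of Proposition \ref{pr:null-covariance}.'' This mechanism does not exist: Lemma \ref{lm:isotony-null-space-region} only produces inclusions of null-plane subspaces \emph{into} ambient-region subspaces (it pushes thin test functions into open bulk regions), never the other way around, and Proposition \ref{pr:null-covariance} acts only on the null-plane subspaces $H(R)$, not on $H(W_C)$. A bulk test function supported in $W_C$ cannot be approximated by thin test functions by any translation/mollification argument given in the paper; trying to do so is essentially re-proving the theorem. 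The paper's actual mechanism is locality combined with the duality identity you already established: from the geometric containment $N_C^\dagger - ta\subset N_C'$ and Lemma \ref{lm:isotony-null-space-region} one gets $H(N_C^\dagger)\subset H(N_C')$; locality (SS2) gives $H(N_C'')\subset H(N_C')'$; hence
\begin{align*}
H(N_C'')\subset H(N_C')'\subset H(N_C^\dagger)'=H(N_C)''=H(N_C),
\end{align*}
where the last two equalities are your fibre-wise duality and standardness. So you had all the ingredients but missed that the bridge from the causal completion back to the null plane goes through symplectic complements, not through approximation. A further, smaller, issue: you silently identify $N_C''$ with $W_C$ as sets; the theorem concerns $H(N_C'')$, and the paper never needs this set-theoretic identification (it proves $H(N_C)=H(W_C)$ separately, in Proposition \ref{prop:N_C=W_C}, \emph{using} the theorem), so your squeezing argument would at best prove a statement about $W_C$ rather than about $N_C''$.
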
 
\begin{proof}
	The first equality follows from the fact that $H(N_C)$ is a standard subspace.

	Next we show $H(N_C) \subset H(N_C'')$.
	By any shift $ta$ in the $x_1$-direction (or $x_-$-direction), we have $N_C \subset N_C'' - ta$ for $t>0$.
	Indeed, if $x \in N_C, y\in N_C'$, it has $x_- = 0$, and $(x - y)^2 = -2(x_+ - y_+)y_- - (\pmb{x}_\perp - \pmb{y}_\perp)^2 < 0$.
	Furthermore, as $x_+$ can be arbitrarily large, it must hold that $x_+ - y_+>0, y_->0$
	and hence adding $ta$ to $y$ just decreases the first term.
	Therefore, for any $y \in N_C'$, $y+ta$ is spacelike from $x \in N_C$, that is,
	$N_C \subset N_C'' + ta$.
	Therefore, by Lemma \ref{lm:isotony-null-space-region}, we have $H(N_C)\subset H(N_C'')$.
	
	Lastly we prove $H(N_C'') \subset H(N_C)''$.
	It is easy to see that $N_C^\dagger - ta \subset N_C'$
	for any $t>0$, where $a$ is a positive vector in the $x_1$-direction,
	hence we have $H(N_C^\dagger) \subset H(N_C')$ by Lemma \ref{lm:isotony-null-space-region}.
	As we have $H(N_C'') \subset H(N_C')'$ by locality and
	we have $H(N_C'') \subset H(N_C')' \subset H(N_C^\dagger)' = H(N_C)''$,
	where the last equality follows from Lemma \ref{lem:ABC} and \eqref{decomposition-subspace-complement-null-cut}.

\end{proof}

As an example, it is immediate to see that the causal completion of the zero null-cut $N_0 = N_{C_0}$ with $C_0(\tx_\perp) = 0$ is the right wedge: $N_0''=W_1$, hence $H(N_0)=H(N_0'')=H(W_1)$.

While it is possible to define $H(N_C)$ for non continuous curves (see Section \ref{concluding}),
the continuity condition Theorem \ref{thm:completeness} cannot be removed.
Indeed, if $C=\chi_{\mathbb{Q}^{D-1}}$ is the characteristic function of $\mathbb Q^{D-1}$ then $H(N_C)=H(W_R)$
since the direct integrals 
\begin{align*}
	H(N_C)=\int_{\RR^{D-1}}^{\oplus_\RR} H_\uone((\chi_{\mathbb{Q}^{D-1}}(\tx_\perp),+\infty))d^{D-1}\tx_\perp=\int_{\RR^{D-1}}^{\oplus_\RR} H_\uone((0,+\infty))d^{D-1}\tx_\perp=H(W_1)
\end{align*}
are equal up to a null measure set,  but  $H(N_C'')$ is strictly contained in $H(W_R)$ when, again, is considered the definition \eqref{local-subspace}.

Let us define the deformed wedge for a continuous function $C:\RR^{D-1}\rightarrow \RR$ by 
	\begin{align*}
		W_C:=\{\tx \in \RR^{D+1},x_-<0,x_+>C(\tx_\perp)\}.
	\end{align*} 
Then we have:
	\begin{proposition}\label{prop:N_C=W_C}
		$H(N_C)=H(W_C)$.
	\end{proposition}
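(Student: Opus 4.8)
The plan is to obtain the identity as a pair of inclusions and to feed the nontrivial one into the duality $H(N_C)=H(N_C^\dagger)'$ that is already implicit in the proof of Theorem~\ref{thm:completeness}. For the inclusion $H(N_C)\subseteq H(W_C)$ I would invoke the extended isotony Lemma~\ref{lm:isotony-null-space-region}: in light-cone coordinates the null cut $N_C=\{x_-=0,\,x_+>C(\tx_\perp)\}$ sits inside every small push of the bulk region $W_C=\{x_-<0,\,x_+>C(\tx_\perp)\}$ in the positive $x_-$-direction, since translating $W_C$ by $t$ in $x_-$ gives $\{x_-<t,\,x_+>C(\tx_\perp)\}$, which contains $N_C$ because a point of $N_C$ has $x_-=0<t$ while its $x_+$-coordinate is unchanged. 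As this holds for all sufficiently small $t>0$, the lemma yields $H(N_C)\subseteq H(W_C)$.

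For the reverse inclusion I would first record that, combining the complement decomposition \eqref{decomposition-subspace-complement-null-cut} with Lemma~\ref{lem:ABC}(a) and the half-line duality $H_\uone((-\infty,c))'=H_\uone((c,\infty))$ for the $U(1)$-current, one has fibrewise
\[
 H(N_C^\dagger)'=\int^{\oplus_\RR}_{\RR^{D-1}}H_\uone((C(\tx_\perp),\infty))\,d^{D-1}\tx_\perp=H(N_C),
\]
exactly as in the proof of Theorem~\ref{thm:completeness} and using Proposition~\ref{decomposition-subspace-null-general}. Hence it suffices to prove $H(N_C^\dagger)\subseteq H(W_C)'$, for then $H(W_C)=H(W_C)''\subseteq H(N_C^\dagger)'=H(N_C)$. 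By locality this last inclusion reduces to the purely geometric statement that $N_C^\dagger=\{x_-=0,\,x_+<C(\tx_\perp)\}$ is spacelike separated from $W_C$; equivalently, to the region inclusion $W_C\subseteq N_C''$, so that the two inclusions together realize $N_C''=W_C$ as anticipated in the introduction.

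The step I expect to be the crux is precisely this spacelikeness of $N_C^\dagger$ and $W_C$ (equivalently $W_C\subseteq N_C''$). One has to take $x\in N_C^\dagger$ and $z\in W_C$ and verify $(x-z)^2<0$ from $(x-z)^2=2(x_+-z_+)(-z_-)-(\tx_\perp-\mathbf z_\perp)^2$ with $-z_->0$. On each fibre $\tx_\perp=\mathbf z_\perp$ the sign is immediate, because there $x_+<C(\tx_\perp)=C(\mathbf z_\perp)<z_+$ forces $x_+-z_+<0$; but across fibres the sign hinges on a quantitative comparison of the gap $C(\mathbf z_\perp)-x_+$ and the depth $-z_-$ against the transverse distance $(\tx_\perp-\mathbf z_\perp)^2$, and this is where the properties of $C$ must enter in an essential way. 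I would therefore spend the bulk of the effort on this transverse estimate, as it is the only place the geometry of the boundary function $C$ is genuinely used; once it is in place, the decomposition and isotony inputs make the remainder routine.
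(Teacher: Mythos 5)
Your first inclusion and your duality reduction are both sound, and they essentially parallel the paper's own strategy: the paper also obtains $H(N_C)\subset H(W_C)$ from Lemma \ref{lm:isotony-null-space-region}, and for the reverse inclusion it likewise identifies $H(N_C)$ with a symplectic complement (through Theorem \ref{thm:completeness}) and then feeds in a region inclusion, namely $W_C=\bigcup_{t>0}\left(N_C+t(0,a_-,\pmb{0})\right)\subset N_C''$ with $a_-<0$. Your step $H(N_C^\dagger)'=H(N_C)$ via Proposition \ref{decomposition-subspace-null-general}, Lemma \ref{lem:ABC} and half-line duality of the $U(1)$-current is correct.

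The genuine gap is the step you defer to the end: the spacelike separation of $N_C^\dagger$ from $W_C$ (equivalently $W_C\subseteq N_C''$) is not merely the hard part, it is \emph{false} whenever $C$ is non-constant, so the ``transverse estimate'' you plan to devote the bulk of the effort to does not exist. Indeed, pick $\pmb{u},\pmb{v}\in\RR^{D-1}$ with $C(\pmb{u})>C(\pmb{v})$, then $x_+$ with $C(\pmb{v})<x_+<C(\pmb{u})$, and $z_+$ with $C(\pmb{v})<z_+<x_+$. Then $x=(x_+,0,\pmb{u})\in N_C^\dagger$ and $z=(z_+,z_-,\pmb{v})\in W_C$ for every $z_-<0$, while
\begin{align*}
(x-z)^2=2(x_+-z_+)(0-z_-)-|\pmb{u}-\pmb{v}|^2=2(x_+-z_+)|z_-|-|\pmb{u}-\pmb{v}|^2>0
\end{align*}
as soon as $|z_-|>|\pmb{u}-\pmb{v}|^2/\bigl(2(x_+-z_+)\bigr)$; the two points are timelike separated. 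Since $z_-$ is unconstrained in $W_C$, no continuity or Lipschitz hypothesis on $C$ can restore the claim: the future light cone of $z$ meets the null plane in the paraboloid $\{w_+\ge z_++|\pmb{w}_\perp-\pmb{z}_\perp|^2/(2|z_-|)\}$, which flattens as $|z_-|\to\infty$, so the set of $z\in W_C$ whose paraboloid stays above the graph of $C$ (which is what $N_C''$ amounts to) exhausts $W_C$ only for constant $C$. Note that this is the same cross-fibre issue that is hidden in the paper's proof: the inclusion $N_C-ta\subset N_C''$ invoked there is referred back to the proof of Theorem \ref{thm:completeness}, whose key assertion ``$x_+-y_+>0$'' for $x\in N_C$, $y\in N_C'$ holds on a fixed fibre (where $y_+\le C(\ty_\perp)<x_+$) but fails across fibres by the same mechanism. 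The failure is not cosmetic: combining your (correct) identity $H(N_C)=H(N_C^\dagger)'$ with the timelike-separated pairs above and the fact that for $m>0$ the commutator function does not vanish inside the light cone, one finds $H(W_C)\not\subseteq H(N_C^\dagger)'$ for non-constant $C$, so the route through spacelike separation -- yours, and equally the paper's region inclusion $W_C\subset N_C''$ -- cannot be completed; only the constant case, where $W_C$ is a translate of the standard wedge and everything reduces to the Bisognano--Wichmann property, goes through as written.
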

	\begin{proof}
		It is clear from the definition that $N_C\subset W_C - tx_-$ for every $t>0$. From \autoref{lm:isotony-null-space-region}, it follows $H(N_C)\subset H(W_C)$.
		
		For the opposite direction, we will deduce $H(W_C) \subset H(N_C'')$. The claim follows from isotony and the previous theorem.
		Let $y\in W_C$, then we can write $y = x + a$ with $x \in N_C$ and $a=(0,a_-,0)$
		It is enough to prove $N_C-t a \subset N_C''$ for $t > 0$, or equivalently, $N_C\subset N_C''+t a$ for every $\e>0$.
		This is contained in the proof of \autoref{thm:completeness}.

	\end{proof}

\section{Modular operator for null cuts}\label{section-constant-null-cut}
 The following is of an independent interest, because it is an assumption of the limited version of QNEC \cite{CF18}.
 \begin{proposition}\label{pr:hsmi}
 Let $C_1, C_2:\RR^{D-1}\rightarrow \RR$ be continuous functions such that $C_1(\pmb{x}_\perp) < C_2(\pmb{x}_\perp)$.
Then the inclusion $H(N_{C_2})\subset H(N_{C_1})$ is a HSMI.   
\end{proposition}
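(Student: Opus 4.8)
The plan is to reduce the statement to the fibrewise picture furnished by Proposition \ref{decomposition-subspace-null-general}. Since $C_1 < C_2$ pointwise we have $N_{C_2} \subset N_{C_1}$, so $H(N_{C_2}) \subset H(N_{C_1})$ by isotony, and both are standard subspaces (being direct integrals of the standard half-line subspaces of the $U(1)$-current). Under the spatial decomposition, Proposition \ref{decomposition-subspace-null-general} gives
\[
H(N_{C_i}) \simeq \int^{\oplus_\RR}_{\RR^{D-1}} H_{U(1)}\big((C_i(\pmb{x}_\perp), \infty)\big)\, d^{D-1}\pmb{x}_\perp, \qquad i = 1,2.
\]
The first step is to observe that the Tomita operator, and hence $\Delta^{-it}_{H(N_{C_1})}$, is decomposable with respect to this direct integral, with fibres $\Delta^{-it}_{H_{U(1)}((C_1(\pmb{x}_\perp),\infty))}$; this is the disintegration of modular objects for a measurable field of standard subspaces (as in \cite{MT18} and the framework of Lemma \ref{lem:ABC}), once the measurability of the field of modular operators is checked.

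With this decomposition in hand, I would verify the half-sided inclusion on each fibre, where it reduces to the following statement for the $U(1)$-current: for $a < b$, the inclusion $H_{U(1)}((b,\infty)) \subset H_{U(1)}((a,\infty))$ is a HSMI. To see this, recall the Bisognano--Wichmann relation $\Delta^{it}_{H_{U(1)}(\RR^+)} = U_{U(1)}(-2\pi t, 0)$, so that $\Delta^{-it}_{H_{U(1)}(\RR^+)}$ acts geometrically as the dilation $x \mapsto e^{2\pi t} x$. Writing $H_{U(1)}((a,\infty)) = U_{U(1)}(0,a) H_{U(1)}(\RR^+)$ and using the covariance of modular operators (Lemma \ref{lem:cov}), the flow $\Delta^{-it}_{H_{U(1)}((a,\infty))}$ is the dilation centred at $a$, namely $x \mapsto e^{2\pi t}(x-a) + a$. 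For $t \ge 0$ and $b > a$ this sends the boundary point $b$ to $e^{2\pi t}(b-a) + a \ge b$, hence
\[
\Delta^{-it}_{H_{U(1)}((a,\infty))} H_{U(1)}((b,\infty)) = H_{U(1)}\big((e^{2\pi t}(b-a)+a, \infty)\big) \subseteq H_{U(1)}((b,\infty))
\]
by isotony. Applying this with $a = C_1(\pmb{x}_\perp)$ and $b = C_2(\pmb{x}_\perp)$ gives the fibrewise inclusion for almost every $\pmb{x}_\perp$ and all $t \ge 0$.

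Finally, I would reassemble the fibres: given $\xi \in H(N_{C_2})$ with fibre decomposition $\xi(\pmb{x}_\perp) \in H_{U(1)}((C_2(\pmb{x}_\perp),\infty))$ almost everywhere, decomposability of $\Delta^{-it}_{H(N_{C_1})}$ gives $(\Delta^{-it}_{H(N_{C_1})}\xi)(\pmb{x}_\perp) = \Delta^{-it}_{H_{U(1)}((C_1(\pmb{x}_\perp),\infty))}\xi(\pmb{x}_\perp) \in H_{U(1)}((C_2(\pmb{x}_\perp),\infty))$ almost everywhere, so $\Delta^{-it}_{H(N_{C_1})}\xi \in H(N_{C_2})$ by the fibrewise characterisation of membership in a direct integral of real subspaces (Lemma \ref{lem:ABC}). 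This is exactly the HSMI condition. The main obstacle I expect is the rigorous justification of the disintegration of the modular flow $\Delta^{-it}_{H(N_{C_1})}$ into the fibre modular flows, i.e.\ the measurability of the field $\pmb{x}_\perp \mapsto \Delta^{-it}_{H_{U(1)}((C_1(\pmb{x}_\perp),\infty))}$ and the identification of the global modular operator with its direct integral; this is where the continuity of $C_1$ and the measurable-field machinery of Remark \ref{rem:TCDC} and Lemma \ref{lem:ABC} must be invoked carefully, the remaining content being the elementary one-dimensional computation above.
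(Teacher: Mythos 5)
Your proof is correct, but it follows a genuinely different route from the paper's. The paper never disintegrates the modular flow: it notes that a HSMI is carried to a HSMI by any unitary (Lemma \ref{lem:cov}), recalls that the lightlike-shifted wedge inclusion $H(W_+)\subset H(W_1)$ is a HSMI, and observes that the single unitary $T_{C_{\mathrm{tr}}}D_{C_{\mathrm{dl}}}$ with $C_{\mathrm{tr}}=C_1$ and $C_{\mathrm{dl}}=\ln(C_2-C_1)$ maps this pair onto $H(N_{C_1})\supset H(N_{C_2})$ by the covariance of Proposition \ref{pr:null-covariance}; that is the entire argument (continuity and strict positivity of $C_2-C_1$ enter precisely to make $C_{\mathrm{dl}}$ an admissible continuous function). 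Your argument instead checks the half-sided condition fibre by fibre, which is an elementary dilation computation in the $U(1)$-current, but it hinges on the disintegration $\Delta^{-it}_{H(N_{C_1})}\simeq\int^\oplus_{\RR^{D-1}}\Delta^{-it}_{H_{U(1)}((C_1(\pmb{x}_\perp),\infty))}\,d\pmb{x}_\perp$, which you rightly flag as the main obstacle: at the point where this proposition sits in the paper, that fact is not yet available (it is essentially the content of Theorem \ref{main-thm}, and the general disintegration statement is only asserted in the Remark following Lemma \ref{lem:relative-entropy-decomposition}, via the KMS condition and Appendix \ref{appendix}). You can close that gap with tools already proven, avoiding any general disintegration theorem: $H(N_{C_1})=T_{C_1}H(N_0)$ by Proposition \ref{pr:null-covariance}, and $H(N_0)=H(W_1)$, so Lemma \ref{lem:cov} gives $\Delta^{-it}_{H(N_{C_1})}=T_{C_1}\Delta^{-it}_{H(W_1)}T_{C_1}^{*}$; by Bisognano--Wichmann, $\Delta^{-it}_{H(W_1)}$ is the boost and decomposes as a constant field of $U(1)$-dilations (Proposition \ref{spatial-decomp-P}), while $T_{C_1}$ is decomposable (Remark \ref{rem:TCDC}), so the product is decomposable with exactly the fibres you claim, again by Lemma \ref{lem:cov} applied in each fibre. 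In the end your route costs more than the paper's (the modular decomposition) but also yields more: it contains the fibre-wise description of the modular flow of $H(N_C)$, i.e.\ the substance of Theorem \ref{main-thm}, whereas the paper's unitary-equivalence trick establishes the HSMI property alone in a few lines.
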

\begin{proof}
Note that, if $K \subset H$ is a HSMI and $U$ is a unitary, then also $UK \subset UH$ is a HSMI by Lemma \ref{lem:cov}.

Recall that\footnote{With this slightly unfortunate notations, $W_1$ is the standard wedge in the $x_1$-direction
(and not the shifted wedge).} $H(W_+) \subset H(W_1)$ is a HSMI, where $C_+(\pmb{x}_\perp) = 1$ and $W_+ = W_{C_+}$ is the shifted wedge.
Put also $C_0(\pmb{x}_\perp) = 0$, so that $W_1 = W_{C_0}$.
By fibre-wise covariance (Proposition \ref{pr:null-covariance}), the inclusion
$H(N_{C_2})\subset H(N_{C_1})$ is unitarily equivalent to $H(W_+) = H(C_+)\subset H(C_0) = H(W_1)$.
Indeed, we can take $C_{\mathrm{dl}}(\pmb{x}_\perp) = \ln(C_2(\pmb{x}_\perp) - C_1(\pmb{x}_\perp))$
and $C_{\mathrm{tr}}(\pmb{x}_\perp) = C_1(\pmb{x}_\perp)$.
Then we have $T_{C_{\mathrm{tr}}}D_{C_{\mathrm{dl}}}H(W_1) = H(C_{\mathrm{tr}} + e^{C_{\mathrm{dl}}}C_0) = H(N_{C_1})$
and $T_{C_{\mathrm{tr}}}D_{C_{\mathrm{dl}}}H(W_+) = H(C_{\mathrm{tr}} + e^{C_{\mathrm{dl}}}C_+) = H(N_{C_2})$.
Then the desired HSMI follows from this unitary equivalence.

\end{proof}

In the rest of this Section, we will study the modular operator of the standard spaces $H(N_C)$.
We call $Q_S$ the unitary that results from the concatenation of $V_{\mathrm S}$ (cf.\! Section \ref{section-decomposition-one-particle-space}) and the fibre-wise unitary between $L^2(\RR,d\theta')$ and the $U(1)$-current mentioned in Section \ref{sec:U(1)-current}.

  \begin{proposition}\label{pr:modular-operator-zero-space}
   The unitary operator $Q_S$ between $\hm$ and the direct integral of $U(1)$-currents decomposes the modular operator of $C(N_0)$:
    \begin{align*}
     \log(\Delta_{H(N_0)})&=\log(\Delta_{H(W_R)})\\&=\Ad_{Q_S}\int^{\oplus}_{\RR^{D-1}} \log(\Delta_{H_{U(1)}(\RR_+)})d^{D-1}\pmb{x}_\perp.
    \end{align*}
  \end{proposition}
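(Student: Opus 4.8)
The plan is to reduce everything to the wedge $W_1$ and its Bisognano--Wichmann property, and then transport the boost through the spatial decomposition. The first equality is immediate: as recorded after Theorem \ref{thm:completeness}, the causal completion of the zero cut is the right wedge, so $H(N_0) = H(W_1) = H(W_R)$ as standard subspaces; hence their Tomita operators, and in particular $\log \Delta_{H(N_0)} = \log \Delta_{H(W_R)}$, coincide. It therefore suffices to decompose $\log \Delta_{H(W_1)}$. By Proposition \ref{decomposition-subspace-null-general} with $C=0$ the subspace itself decomposes as the constant field $\int^{\oplus_\RR}_{\RR^{D-1}} H_{U(1)}(\RR_+)\, d^{D-1}\pmb{x}_\perp$ through $Q_S$, which already identifies the expected fibre; the task is to show the modular generator decomposes accordingly.

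For this I would work at the level of the modular unitary group. By the Bisognano--Wichmann property (SS6), $\Delta_{H(W_1)}^{it} = U_m(\Lambda_{W_1}(-2\pi t))$ for $t \in \RR$ (here the passage through $W_1$ is essential, since (SS6) applies to wedges, not directly to $N_0$). By Proposition \ref{spatial-decomp-P} the boost subgroup is decomposable under the spatial decomposition and acts fibre-wise as a pure translation in the rapidity variable, independently of $\pmb{x}_\perp$: explicitly $(U_m(\Lambda_{W_1}(-2\pi t))\xi)(\theta', \pmb{x}_\perp) = \xi(\theta' + 2\pi t, \pmb{x}_\perp)$. On the other hand, the Bisognano--Wichmann property of the $U(1)$-current recalled in Section \ref{sec:U(1)-current} reads $\Delta_{H_{U(1)}(\RR_+)}^{it} = U_\uone(-2\pi t, 0)$, i.e. $(\Delta_{H_{U(1)}(\RR_+)}^{it}\xi)(\theta') = \xi(\theta' + 2\pi t)$. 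Since the fibre-wise unitary entering $Q_S$ is the same on every fibre (independent of $\pmb{x}_\perp$), it commutes with the direct-integral structure and, by Lemma \ref{lem:cov}, carries the fibre translation above to $\Delta_{H_{U(1)}(\RR_+)}^{it}$. Hence $\Ad_{Q_S}(\Delta_{H(W_1)}^{it})$ is the constant decomposable field $\int^{\oplus}_{\RR^{D-1}} \Delta_{H_{U(1)}(\RR_+)}^{it}\, d^{D-1}\pmb{x}_\perp$ for every $t$.

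Finally I would pass from the one-parameter group to its self-adjoint generator. A possibly unbounded self-adjoint operator is decomposable precisely when the associated unitary group is decomposable for every $t$, and in that case the fibres of the generator are the generators of the fibre groups; this is the generator version of the decomposability criterion recalled in Section \ref{direct}. Applying it to the constant field obtained above gives $\Ad_{Q_S}(\log \Delta_{H(W_1)}) = \int^{\oplus}_{\RR^{D-1}} \log \Delta_{H_{U(1)}(\RR_+)}\, d^{D-1}\pmb{x}_\perp$, which is the claim. The two points requiring care are this group-to-generator passage --- where one must confirm that $\log \Delta$, and not merely the bounded $\Delta^{it}$, is decomposable with constant fibre $\log \Delta_{H_{U(1)}(\RR_+)}$ --- and the bookkeeping ensuring that the fibre-wise component of $Q_S$ genuinely intertwines the rapidity-translation group with the $U(1)$-current modular group in the chosen realization; everything else is a direct application of the two Bisognano--Wichmann properties.
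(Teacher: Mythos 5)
Your proposal is correct and follows essentially the same route as the paper's own proof: reduce $H(N_0)$ to the wedge subspace via the duality theorem, invoke the Bisognano--Wichmann property to identify $\Delta_{H(W_1)}^{it}$ with the boost, use the fibre-wise decomposition of the boost under $Q_S$ together with the Bisognano--Wichmann property of the $U(1)$-current, and pass from the decomposable unitary group to its generator. The only cosmetic difference is that the group-to-generator step you describe is precisely Theorem \ref{prop:functional-calculus-decomposable} in the appendix rather than a statement of Section \ref{direct}, which is exactly what the paper cites at this point.
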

  \begin{proof}
   The causal complement of $N_0$ is the left wedge $W_L$:
    \begin{align*}
     N_0'
     &=\{x\in \RR^{D+1}; x_+< 0,x_->0,\pmb{x}_\perp\in \RR^{D-1}\}=W_L.
    \end{align*}
   By Theorem \ref{thm:completeness}, we know $H(N_0)=H(N_0'')=H(W_R)$.
   By the Bisognano-Wichmann property (HK\ref{bw}) in Section \ref{sec:oneparticle},
   we know that the modular group of the right wedge coincides with the representation of the boost group along $x_1$:
   		\begin{align*}
   			 \Delta_{H(W_R)}^{it}= U_m(\Lambda_{1}(-2\pi t)).
   		\end{align*}
   	By \eqref{eq:non-const-dilations} the representation of the boost group is equivalent to
   	the constant dilation group of the $U(1)$-current on each fibre of the spatial decomposition of $\hm$ and hence:
   		\begin{align*}
   			\Ad_{Q_S}\Delta_{H(W_R)}^{-it}=\int^{\oplus}_{\RR^{D-1}} U_\uone(2\pi t,0)\,d^{D-1}\pmb{x}_\perp.
   		\end{align*}
   	The statement follows from \autoref{prop:functional-calculus-decomposable} and the Bisognano-Wichmann property for the $U(1)$-current
   	(cf.\! Section \ref{sec:U(1)-current}).
  \end{proof}

The standard subspace of the constant null cut $C_s(\tx_\perp)=s$ is
  \begin{align*}
   H(N_{s})&=T_{C_s}H(N_0)
   =U_m(1, \mathfrak{t}_{x_+}(s))H(W_1)
   =H(\mathfrak{t}_{x_+}(s)W_1).
  \end{align*}
 That means that $H(N_s)$ equals the standard subspace associated with the wedge $\mathfrak{t}_{x_+}(s)W_1$. According to the Bisognano-Wichmann property, the associated modular group is the boost group leaving the wedge $\mathfrak{t}_{x_+}(s)W_1$ invariant:
  \begin{align*}
   \Delta_{H(N_s)}^{it}&= U_m(1,\mathfrak{t}_{x_+}(s))\Delta_{H(N_0)}^{it}U_m(1,\mathfrak{t}_{x_+}(s))^*\\
   &=U_m(1,\mathfrak{t}_{x_+}((1-e^{2\pi t})s))\Delta_{H(N_0)}^{it}.
  \end{align*}
  From this, we have the following well-known (cf.\! \autoref{lem:cov}) relation between generators.
    \begin{align}\label{generator-constant-null-cut}
     \log{\Delta_{H(N_s)}}=\log{\Delta_{H(N_0)}}+2\pi sP,
    \end{align}
   where $P$ is the generator of translations $U_m(\mathfrak{t}_{x_+}(s))$ along the light ray $x_+$.
 
 The formula \eqref{generator-constant-null-cut} has a natural generalization to arbitrary null-cuts.
 The key structures that we will utilize are half-sided modular inclusion of certain null-cut subspaces. In the following, we will use the notation for half-sided modular inclusions introduced in Section \ref{sec:ss}.

Now we state one of our main results.
\begin{theorem}\label{main-thm}
 Let $C:\RR^{D-1}\rightarrow\RR$ be a continuous function, then the generator of the modular group of the associated null-cut is
    decomposed as follows, where the equivalence is given by the operator $Q_{\mathrm S}$ of Proposition \ref{pr:modular-operator-zero-space}.
    \begin{align}\label{null-cut-modular-spatial-decom}
     \log(\Delta_{H(N_C)})\simeq \int^\oplus_{\RR^{D-1}}\left( \log(\Delta_{H_{U(1)}}(\RR_+))+ 2\pi C(\pmb{x}_\perp) P_{\pmb{x}_\perp}\right)\,d\pmb{x}_\perp,
    \end{align}
   where $P_{\pmb{x}_\perp}$ is the generator of translations on each fibre.
\end{theorem}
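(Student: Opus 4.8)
The plan is to reduce the general null-cut formula to the already-established constant case by exploiting the fibre-wise covariance. First I would observe that Theorem~\ref{main-thm} should follow from gluing together, fibre by fibre, the relation \eqref{generator-constant-null-cut} that holds on each individual $U(1)$-current. The key tool is Proposition~\ref{decomposition-subspace-null-general}, which gives the decomposition
\[
 H(N_C) \simeq \int^{\oplus_\RR}_{\RR^{D-1}} H_{U(1)}\bigl((C(\pmb{x}_\perp),\infty)\bigr)\,d^{D-1}\pmb{x}_\perp,
\]
together with Proposition~\ref{pr:modular-operator-zero-space}, which already handles the case $C\equiv 0$. The strategy is to establish that the modular operator itself is decomposable with respect to the spatial disintegration, and that on almost every fibre it coincides with the modular operator of the shifted half-line $H_{U(1)}((C(\pmb{x}_\perp),\infty))$ inside the $U(1)$-current Hilbert space.

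The main steps I would carry out are as follows. First, on a fixed fibre, $H_{U(1)}((C(\pmb{x}_\perp),\infty))$ is just the translate of $H_{U(1)}(\RR_+)$ by the constant lightlike translation of magnitude $C(\pmb{x}_\perp)$; by the one-dimensional analogue of \eqref{generator-constant-null-cut} (equivalently, by Theorem~\ref{theorem-1-standard} applied to the $U(1)$-current with its positive translation generator $P_{\pmb{x}_\perp}$), the generator of the modular group on that fibre is exactly $\log(\Delta_{H_{U(1)}(\RR_+)}) + 2\pi C(\pmb{x}_\perp) P_{\pmb{x}_\perp}$. Second, I would verify that the field of standard subspaces $\pmb{x}_\perp \mapsto H_{U(1)}((C(\pmb{x}_\perp),\infty))$ is measurable and that the resulting field of Tomita operators is decomposable; here the continuity of $C$ and Remark~\ref{rem:TCDC} (which already shows that $T_C$ and $D_C$ are decomposable measurable fields of operators) do the work. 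Third, I would invoke the fact that the modular operator of a direct integral of standard subspaces is the direct integral of the fibre-wise modular operators — this is the content of the general disintegration machinery of Section~\ref{direct} together with Lemma~\ref{lem:ABC}, and it guarantees that $\log(\Delta_{H(N_C)})$ decomposes as the direct integral of the fibre-wise logarithms.

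An alternative, perhaps cleaner, route avoids building the measurability of the modular data by hand: write $H(N_C) = T_C H(W_R)$ via Proposition~\ref{pr:null-covariance} (recall $H(N_0) = H(W_R)$), and then compute $\log(\Delta_{T_C H(W_R)})$ directly. Since $T_C$ is the decomposable unitary implementing fibre-wise lightlike translation by $C(\pmb{x}_\perp)$, conjugation by $T_C$ acts fibre-wise, and on each fibre the generator transforms by the affine shift $\log(\Delta_{H_{U(1)}(\RR_+)}) \mapsto \log(\Delta_{H_{U(1)}(\RR_+)}) + 2\pi C(\pmb{x}_\perp) P_{\pmb{x}_\perp}$, exactly as in \eqref{generator-constant-null-cut} but now with a position-dependent shift parameter. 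Combining this with the $C\equiv 0$ decomposition from Proposition~\ref{pr:modular-operator-zero-space} yields \eqref{null-cut-modular-spatial-decom} after conjugating by $Q_{\mathrm S}$.

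The step I expect to be the main obstacle is making rigorous the claim that the position-dependent affine transformation of the modular generator is legitimate, i.e.\! that $\Ad_{T_C}\bigl(\log(\Delta_{H(W_R)})\bigr) = \log(\Delta_{H(W_R)}) + 2\pi C(\pmb{x}_\perp) P_{\pmb{x}_\perp}$ as an identity of self-adjoint operators with matching domains. For a \emph{constant} shift this is precisely \eqref{generator-constant-null-cut}, coming from the covariant wedge structure and Theorem~\ref{theorem-1-standard}; but here $C$ varies over fibres and is merely continuous, so $T_C$ is not of the form $U_m(\mathfrak{t}_{x_+}(s))$ and one cannot appeal directly to the global Poincar\'e covariance. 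The resolution is to perform the affine-shift argument fibre-wise — where on each fibre the shift $C(\pmb{x}_\perp)$ is a genuine constant and Theorem~\ref{theorem-1-standard} applies verbatim — and then reassemble using that both sides are decomposable operators agreeing on almost every fibre, which by the uniqueness of the direct integral decomposition forces them to coincide. One must be slightly careful that the essential self-adjointness of $\log(\Delta_{H_{U(1)}(\RR_+)}) + 2\pi C(\pmb{x}_\perp) P_{\pmb{x}_\perp}$ on a common core persists under the direct integral; this follows because the bound $\sup_{\pmb{x}_\perp}\|\,\cdot\,\|$ is controlled on the relevant Gårding-type domain, exactly as in the essential self-adjointness argument used in the proof of Theorem~\ref{theorem-1-standard}.
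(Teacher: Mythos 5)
Your proposal is correct, and your ``alternative, perhaps cleaner, route'' is in fact the paper's own proof: the paper writes $N_C = T_C N_0$, uses covariance to get $\Delta_{H(N_C)}^{it} = \Ad T_C\bigl(\Delta_{H(N_0)}^{it}\bigr)$, notes that $T_C$ acts fibre-wise through $Q_{\mathrm S}$, applies the HSMI transformation law of Theorem \ref{theorem-1-standard} on each fibre to identify the fibre unitary group with $\exp\bigl(it\bigl(\log(\Delta_{H_{U(1)}(\RR_+)}) + 2\pi C(\pmb{x}_\perp)P_{\pmb{x}_\perp}\bigr)\bigr)$, and then passes from the unitary groups to the generators via Theorem \ref{prop:functional-calculus-decomposable} and the Stone theorem. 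Your primary route --- decomposing $H(N_C)$ via Proposition \ref{decomposition-subspace-null-general} and invoking the general principle that the modular operator of a direct integral of standard subspaces is the direct integral of the fibre modular operators --- is also viable, but note that the paper asserts that general principle only in a remark (via the KMS condition and the appendix) and never proves it in detail; taking this route you would have to supply that argument yourself, whereas the covariance route gets everything from facts the paper actually establishes (Proposition \ref{pr:modular-operator-zero-space}, Proposition \ref{pr:null-covariance}, Lemma \ref{lem:cov}). What your primary route buys in exchange is generality: it applies to any measurable direct-integral family of standard subspaces (e.g.\! the measurable-$C$ extension in Section \ref{concluding}), not just translates of a fixed one.

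One technical point in your last paragraph is off. You propose to secure essential self-adjointness of the direct integral of the fibre generators by a bound of the form $\sup_{\pmb{x}_\perp}\|\cdot\|$ on a G\aa rding-type domain; but a continuous $C$ on $\RR^{D-1}$ need not be bounded, so no such uniform control over fibres exists --- and none is needed. Each fibre operator $\log(\Delta_{H_{U(1)}(\RR_+)}) + 2\pi C(\pmb{x}_\perp)P_{\pmb{x}_\perp}$ is already self-adjoint (as the closure of the sum, by Theorem \ref{theorem-1-standard} it is the modular generator of the shifted subspace), and the correct reassembly device is exactly the one the paper uses: work with the \emph{bounded} modular unitaries, observe that decomposability is preserved under conjugation by the decomposable unitary $T_C$, and then transfer decomposability from the one-parameter group to its generator by the equivalence \ref{2} $\Leftrightarrow$ \ref{1} of Theorem \ref{prop:functional-calculus-decomposable}. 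This sidesteps all common-core and domain-matching issues, which is precisely why the paper's appendix exists.
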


\begin{figure}[t]
	\centering
	\includegraphics[width=0.6\textwidth]{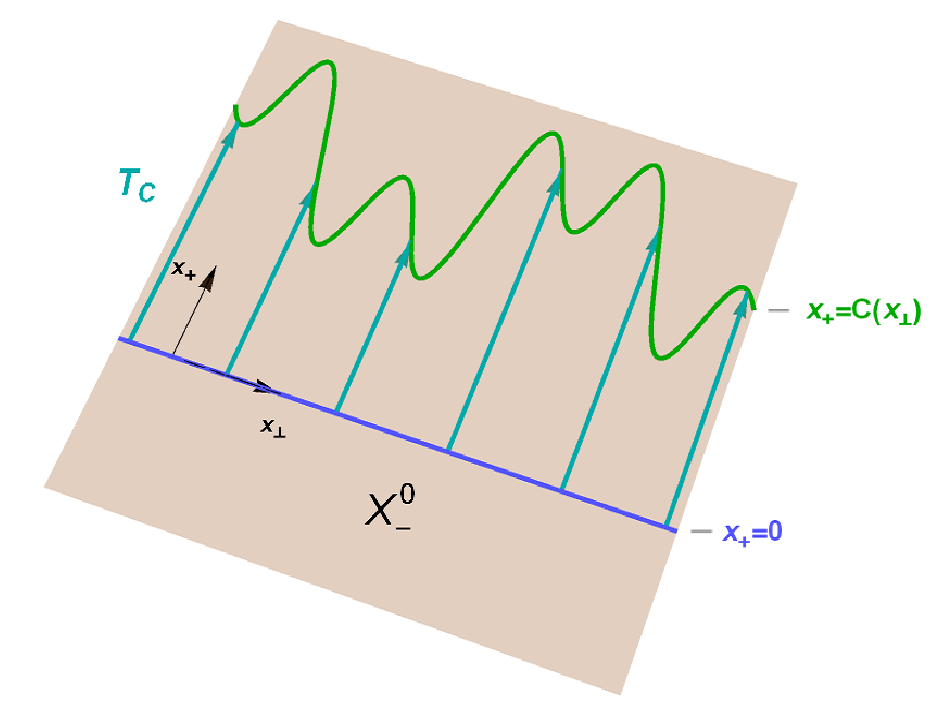}
	\caption{Geometric action of distorted lightlike translations.}
	\label{fig:modular-action}
\end{figure}
\begin{proof}
As we have $N_C = T_C N_0$ (with the notations of Lemma \ref{lm:isotony-null-space-region}),
we have $\Delta_{H(N_C)}^{it} = \Ad T_C (\Delta_{H(N_0)}^{it})$. Note that $T_C$ does the fibre-wise translation,
therefore by the decomposition through $Q_\mathrm{S}$,
\begin{align*}
 \Delta_{H(N_C)}^{it} &\simeq \int^\oplus_{\RR^{D-1}}
                                 \Ad U_{\uone}(0,\mathfrak{t}(C(\pmb{x}_\perp))(\Delta_{H_{U(1)}}(\RR_+)^{it}) \,d\pmb{x}_\perp\\
 &=\int^\oplus_{\RR^{D-1}}\exp\left(it\left(\log(\Delta_{H_{U(1)}}(\RR_+))+ 2\pi C(\pmb{x}_\perp) P_{\pmb{x}_\perp}\right)\right)\,d\pmb{x}_\perp,
\end{align*}
where the second equation is the fibre-wise transformation law of a simple HSMI (Theorem \ref{theorem-1-standard}).
By Proposition \ref{prop:functional-calculus-decomposable} and the Stone theorem,
this relation passes to the generators.

\end{proof}

Let us comment on the second quantized net $(\A, \mathrm{\Gamma}_+(U_m), \mathcal{F}_+(\hm))$.
By Proposition \ref{prop:secquant}, a HSMI of standard subspaces promotes to a HSMI of von Neumann algebras,
hence the following is an immediate consequence of Proposition \ref{pr:hsmi} and Theorem \ref{thm:completeness}.
\begin{corollary}
 Let $C_1, C_2:\RR^{D-1}\rightarrow \RR$ be continuous functions such that $C_1(\pmb{x}_\perp) < C_2(\pmb{x}_\perp)$.
Then the inclusion $\A(N_{C_2}'')\subset \A(N_{C_1}'')$ is a HSMI.   
\end{corollary}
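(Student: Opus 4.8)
The plan is to reduce the statement entirely to the one-particle level, where the half-sided modular inclusion has already been established in Proposition \ref{pr:hsmi}, and then transport it to the second-quantized algebras by functoriality of the second quantization. First I would note that, by Theorem \ref{thm:completeness}, $H(N_C'') = H(N_C)$, so that $\A(N_C'') = R(H(N_C'')) = R(H(N_C))$ for $C = C_1, C_2$; in particular both algebras are the second quantizations of the nested standard subspaces $H(N_{C_2}) \subset H(N_{C_1})$ studied above, and the Fock vacuum $\Omega$ is cyclic and separating for each of them by Proposition \ref{prop:secquant}. The goal is then to verify the algebraic HSMI condition $\Delta_{\A(N_{C_1}''),\Omega}^{-it}\,\A(N_{C_2}'')\,\Delta_{\A(N_{C_1}''),\Omega}^{it} \subset \A(N_{C_2}'')$ for all $t \ge 0$.

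The key identity is Proposition \ref{prop:secquant}(1), which gives $\Delta_{\A(N_{C_1}''),\Omega} = \Gamma_+(\Delta_{H(N_{C_1})})$, hence $\Delta_{\A(N_{C_1}''),\Omega}^{-it} = \Gamma_+\big(\Delta_{H(N_{C_1})}^{-it}\big)$ by multiplicativity of $\Gamma_+$. I would then use the covariance of the Weyl operators under the second quantization, $\Gamma_+(U)\,\mathrm{w}(\xi)\,\Gamma_+(U)^* = \mathrm{w}(U\xi)$ for a unitary $U$, which yields for any real subspace $K$ the relation $\Ad\Gamma_+(U)\,R(K) = R(UK)$, together with the obvious monotonicity $K_1 \subset K_2 \Rightarrow R(K_1) \subset R(K_2)$. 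Applying this with $U = \Delta_{H(N_{C_1})}^{-it}$ and $K = H(N_{C_2})$ gives $\Ad\Gamma_+(\Delta_{H(N_{C_1})}^{-it})\,\A(N_{C_2}'') = R\big(\Delta_{H(N_{C_1})}^{-it}H(N_{C_2})\big)$. By Proposition \ref{pr:hsmi}, $\Delta_{H(N_{C_1})}^{-it}H(N_{C_2}) \subset H(N_{C_2})$ for $t \ge 0$, so monotonicity of $R$ closes the argument: $R\big(\Delta_{H(N_{C_1})}^{-it}H(N_{C_2})\big) \subset R(H(N_{C_2})) = \A(N_{C_2}'')$.

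There is no substantial obstacle here: all the genuine content sits at the one-particle level, carried by Proposition \ref{pr:hsmi} (the standard-subspace HSMI) and Theorem \ref{thm:completeness} (duality, which identifies $\A(N_C'')$ with $R(H(N_C))$). The only point requiring care is the passage from the standard-subspace inclusion to the algebraic one, i.e.\ checking that $\Gamma_+$ intertwines the modular flow on the one-particle space with the modular flow on $R(H)$ and respects inclusions; this is precisely the content of Proposition \ref{prop:secquant}, so the corollary follows formally. This is also exactly the sense in which \emph{a HSMI of standard subspaces promotes to a HSMI of von Neumann algebras}.
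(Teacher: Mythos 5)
Your proposal is correct and follows essentially the same route as the paper: the paper's own proof is precisely the observation that, by Proposition \ref{prop:secquant}, a HSMI of standard subspaces promotes to a HSMI of von Neumann algebras, so the claim is an immediate consequence of Proposition \ref{pr:hsmi} and Theorem \ref{thm:completeness}. Your write-up simply makes explicit the intermediate steps (the identification $\A(N_C'')=R(H(N_C))$, the relation $\Delta_{\A(N_{C_1}''),\Omega}^{-it}=\Gamma_+\bigl(\Delta_{H(N_{C_1})}^{-it}\bigr)$, Weyl covariance, and monotonicity of $R$) that the paper leaves implicit.
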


It is not easy to write the second-quantized version of \eqref{null-cut-modular-spatial-decom}:
while the second quantization of the left-hand side is straightforward
$\mathrm{d\Gamma}_+(\log \Delta_{H(N_C)}) = \log \Delta_{\A(N_C''),\Omega}$,
where $\mathrm{d\Gamma}_+(A) = 0\oplus A \oplus (A\otimes \1 \oplus \1\otimes A)\cdots $ is the additive second quantization,
the direct-integral structure on the right-hand side translates to the \textit{continuous tensor product} structure
in the second quantization \cite{AW66, Napiorkowski71}.
Conceptually, it should be an ``integral'' of the second-quantized generators on fibres,
but we do not know how to formulate such an integral corresponding to \eqref{eq:set}.
Instead, we were able to formulate the HSMI without reference to the (undefined) stress-energy tensor
smeared on the null plane.

On the other hand, for general interacting models, we do not have this simple second-quantization structure,
and the validity of HSMI for the inclusions of null cut regions remains open, see Section \ref{concluding}.

\section{Relative entropy and Energy bounds}\label{entropy}

\subsection{Decomposition of the relative entropy for coherent states}\label{section-Entropy-NEC}

We recall that the formula for Araki relative entropy of a von Neumann algebra $\A\subset\mathcal B(\K)$ with respect to two vector states $\omega_1$ and $\omega_2$ is given by
$$S(\omega_1||\omega_2)=-\langle\xi,\log\Delta_{\eta,\xi} \xi\rangle$$
where $\xi,\eta\in\K$ implement $\omega_1$ and $\omega_2$ on $\A$, respectively and $S_{\eta,\xi} = J_{\eta,\xi}\Delta_{\eta,\xi}^\frac12$ is the relative Tomita opereator:
$$S_{\eta,\xi}:\A\xi\ni a\xi\mapsto a^*\eta\in\A\eta.$$
It is immediate to see that if $\eta=\xi$, then $\Delta_{\xi,\xi}$ is the modular operator of $\A$ with respect to $\xi$ and $S(\omega_1||\omega_1)=0$.

Let $H \subset \H$ be a standard subspace and $\mathrm{w(\psi)}$ be the Weyl operator of $\psi\in\H$ on $\Gamma_+(\H)$.
A state on a von Neumann algebra $\A \subset \B(\Gamma_+(\H))$ is called \textbf{coherent}
if it is given by $\omega_\psi(\cdot):=\omega(\mathrm w(\psi)\cdot \mathrm w(\psi)^*)$,
where $\omega$ is the Fock vacuum.

Following \cite{CLR20}, given a standard subspace $H\subset\H$, we can consider the following quantity
\begin{align*}
S_H(\psi)=\Im\langle \psi,P_Hi\log\Delta_H \psi\rangle,
\end{align*}
where $P_H$ is an unbounded real projection called \textit{cutting projection}. The cutting projection is determined by the modular operator and modular conjugation of $H$. Its explicit form is:$$P_H=a(\Delta_H)+ J_Hb(\Delta_H)$$
where $a(\lambda)=\lambda^{-1/2}(\lambda^{-1/2}-\lambda^{1/2})^{-1}$ and $b=(\lambda^{-1/2}-\lambda^{1/2})^{-1}$.\\
If $h\in H$, then 
	\begin{align*}
		S_H(\psi)=-\langle \psi,\log\Delta_H \psi\rangle.
	\end{align*}
We have the following relation with the relative entropy of second quantization algebra:
\[
 S_{R(H)}(\omega_\psi ||\omega)=S_H(\psi).
\]
where $S_{R(H)}(\omega_\psi||\omega)$ is the relative entropy of $\omega_\psi$ and $\omega$ with respect to $R(H)$
\cite[Proposition 4.2]{CLR20}.
\begin{lemma}\label{lem:relative-entropy-decomposition}
	Consider a standard subspace $H$ that is a direct integral of standard subspaces $H=\int_X^{\oplus_\RR} H(x)d\nu(\l)\subset \int_X^\oplus \K_\l d\nu(\l)$ with decomposable modular generator $\log(\Delta_H)=\int^\oplus_X \log(\Delta_{H(\l)})d\nu(\l)$. Moreover, let $\psi=\int^\oplus_X \psi(\l)d\nu(\l)\in \K$. Then the relative entropy of $\omega_\psi$ and the vacuum with respect to $H$ decomposes: 
		\begin{align*}
			S_{R(H)}(\omega_\psi||\omega)&=\int_X S_{R(H(\l))}(\omega_{_\Psi(\l)}(\l)||\omega(\l))d\nu(\l),
		\end{align*}
	where $\omega(\l)$ denotes the vacuum on the Fock space of $\K_\l$.
\end{lemma}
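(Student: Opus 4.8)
The plan is to reduce the statement to the cutting-projection formula for the relative entropy of coherent states recalled above, and then to observe that every operator entering that formula is decomposable along the fibres. By \cite[Proposition 4.2]{CLR20} we have $S_{R(H)}(\omega_\psi||\omega)=S_H(\psi)=\Im\langle\psi,P_H\,i\log\Delta_H\,\psi\rangle$, and the identical formula holds on each fibre, $S_{R(H(\l))}(\omega_{\psi(\l)}||\omega(\l))=\Im\langle\psi(\l),P_{H(\l)}\,i\log\Delta_{H(\l)}\,\psi(\l)\rangle$. Hence it suffices to show that $\langle\psi,P_H\,i\log\Delta_H\,\psi\rangle=\int_X\langle\psi(\l),P_{H(\l)}\,i\log\Delta_{H(\l)}\,\psi(\l)\rangle\,d\nu(\l)$ and that the imaginary part commutes with the integral.

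First I would establish that the full modular data of $H$, and not only $\log\Delta_H$, decomposes along the fibres. The Tomita operator $S_H$ acts as $\xi+i\eta\mapsto\xi-i\eta$ on $H+iH$, and since $H=\int^{\oplus_\RR}_X H(\l)\,d\nu(\l)$ one has $\zeta\in H\iff\zeta(\l)\in H(\l)$ for a.e.\! $\l$, so this operation is carried out fibre-wise and $S_H=\int^\oplus_X S_{H(\l)}\,d\nu(\l)$ as a closed anti-linear operator. Taking polar decompositions fibre-wise and using their uniqueness for decomposable operators recovers the assumed $\log\Delta_H=\int^\oplus_X\log\Delta_{H(\l)}\,d\nu(\l)$ and, in addition, $J_H=\int^\oplus_X J_{H(\l)}\,d\nu(\l)$. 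The cutting projection is $P_H=a(\Delta_H)+J_H\,b(\Delta_H)$ with $a,b$ the (unbounded) Borel functions recalled above; by the functional calculus for decomposable operators (Proposition \ref{prop:functional-calculus-decomposable}) one gets $a(\Delta_H)=\int^\oplus_X a(\Delta_{H(\l)})\,d\nu(\l)$ and $b(\Delta_H)=\int^\oplus_X b(\Delta_{H(\l)})\,d\nu(\l)$, and composing with the decomposable anti-unitary $J_H$ yields $P_H=\int^\oplus_X P_{H(\l)}\,d\nu(\l)$. Since also $i\log\Delta_H=\int^\oplus_X i\log\Delta_{H(\l)}\,d\nu(\l)$ and the product of decomposable operators is decomposable with fibre-wise product, we obtain $P_H\,i\log\Delta_H=\int^\oplus_X P_{H(\l)}\,i\log\Delta_{H(\l)}\,d\nu(\l)$.

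With this decomposition in hand, the defining property of the direct integral gives $\langle\psi,P_H\,i\log\Delta_H\,\psi\rangle=\int_X\langle\psi(\l),P_{H(\l)}\,i\log\Delta_{H(\l)}\,\psi(\l)\rangle\,d\nu(\l)$, and because $\nu$ is a real measure the imaginary part passes through the integral. Identifying each fibre integrand with $S_{R(H(\l))}(\omega_{\psi(\l)}||\omega(\l))$ by the fibre-wise cutting-projection formula then produces the asserted equality.

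The hard part will be the analytic bookkeeping forced by the fact that $P_H$ and $\log\Delta_H$ are unbounded (indeed $a,b$ already blow up at $\Delta_H=1$). I would need to check that $\l\mapsto P_{H(\l)}\,i\log\Delta_{H(\l)}$ is a genuine measurable field of unbounded operators so that the direct-integral statements are meaningful, that $\psi(\l)\in\mathrm{Dom}(P_{H(\l)}\,i\log\Delta_{H(\l)})$ for $\nu$-almost every $\l$, and that $\l\mapsto\langle\psi(\l),P_{H(\l)}\,i\log\Delta_{H(\l)}\,\psi(\l)\rangle$ is $\nu$-integrable. The last point should follow once $\psi\in\mathrm{Dom}(P_H\,i\log\Delta_H)$ and $S_H(\psi)$ is finite, which is precisely the hypothesis making the relative entropy well defined; this is where that finiteness must be invoked to justify the interchange of integration and the inner product.
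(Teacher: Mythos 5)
Your proposal follows essentially the same route as the paper's own proof: reduce to $S_H(\psi)$ via \cite[Proposition 4.2]{CLR20}, decompose $J_H$ fibre-wise (your polar-decomposition argument for this is what the paper calls ``straightforward to check''), apply Proposition \ref{prop:functional-calculus-decomposable} to decompose the cutting projection $P_H$, and then exchange $\Im$ with the $\nu$-integral to identify the fibre integrands with the fibre-wise relative entropies. Your closing remarks on measurable fields of unbounded operators, domains, and integrability address bookkeeping that the paper itself leaves implicit, so the proposal is correct and matches the paper's argument.
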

\begin{proof}
	Following the previous arguments, we have $S_{R(H)}(\omega_\psi||\omega)=S_H(\psi)$. By assumption the modular generator decomposes. It is straightforward to check that $J_H=\int^\oplus_X J_{H(\l)}d\nu(\l)$ decomposes as well. Applying \autoref{prop:functional-calculus-decomposable}, the cutting projection decomposes accordingly $P_H=\int^\oplus_X P_{H(\l)}\mu(\l)$. Thus
		$$\langle \psi,P_Hi\log\Delta_H \psi\rangle=\int_X \braket{\psi(\l),P_{H(\l)}i\log(\Delta_{H(\l)})\psi(\l)} d\nu(\l),$$ in particular
		\begin{align*}
			S_H(\psi)&=\im\int_X \braket{\psi(\l),P_{H(\l)}i\log(\Delta_{H(\l)})\psi(\l)} d\nu(\l)\\&=\int_X \im\braket{\psi(\l),P_{H(\l)}i\log(\Delta_{H(\l)})\psi(\l)} d\nu(\l)=\int_X S_{H(\l)}(\psi(\l)).
		\end{align*}
Since $S_{H(\l)}(\psi(\l))=S_{R(H(\l))}(\omega_{\psi(\l)}(\l)||\omega(\l))$, we conclude the argument.
\end{proof}
\begin{remark}
	The modular operator of the direct integral of standard subspaces always decomposes into the direct integral of modular operators of the fibre subspaces. This can be seen from the KMS condition for standard subspaces (see \cite[Proposition 2.1.8]{Longo08}) and the results in \autoref{appendix}.
\end{remark}

\subsection{Relative entropy for coherent states on the null plane}

For a single $U(1)$-current net, an explicit formula for the relative entropy was computed in \cite{LongoLocalised},
where an equivalent definition of the $U(1)$-current mentioned in Section \ref{sec:U(1)-current} is used.
We translate the results to the definition of the $U(1)$-current that we are working with. \\
Let $I_t=(t,+\infty)\subset\RR$, $H_\uone(I_t)$ be its standard subspace  and $\A(I_t)=R(H_\uone(I_t))$ be the second quantization von Neumann algebra. 
Consider the map $\beta_k(\mathrm w(\xi))=e^{-i\int_\RR k(x)L(x)\,dx}\mathrm w(\xi)$ with  $k,l\in C_0^\infty(\RR,\RR)$, $\hat{l}(0)=0$, $C_0^\infty(\RR,\RR)\ni L(x)=\int_{-\infty}^xl(s)ds$ the primitive of $l$ and $\xi\in H_\uone(I_t)$ being the one-particle vector of $l$ . Then $\b_k$ extends to an automorphism of the local algebra $\A(I_t)$, which is called a Buchholz-Mack-Todorov (BMT) automorphism \cite{BMT88}.
We denote this extension by $\b_k$ as well. One has \cite[Theorem 4.7]{LongoLocalised}:
\begin{align*}
S_{\A(I_t)}(\omega\circ\beta_k||\omega)=\pi\int_t^{+\infty}(x-t) k^2(x)dx,
\end{align*}
where $\omega$ is the vacuum state on the $U(1)$-current.

We can generalize BMT automorphisms to the direct integral of $U(1)$-currents.
Let $h,g\in C_0^\infty(X_-^0,\RR)$ with $\hat{g}(0)=0$ and $\xi=V_SE\delta(x_-)g$, we define the automorphism
\begin{align*}
\beta_h(\mathrm w(\xi))=e^{-i\int_{X_-^0} h(x)G(x)\,dx}\mathrm w(\xi). 
\end{align*}
Let us consider the relative entropy between $\omega$ and $\omega\circ \beta_h$ with respect to local algebras of null cuts:	
	\begin{proposition}\label{prop:rel-entr-coherent-state}
    We have
			\begin{align*}
				S_{R(H(N_C))}(\omega \circ \beta_h||\omega)=\pi \int_{\RR^{D-1}}\int_{C(\tx_\perp)}^\infty (x_+-C(\tx_\perp)) h(x_+,\tx_\perp)^2 dx_+ d\tx_\perp.
			\end{align*}
	\end{proposition}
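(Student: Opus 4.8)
The plan is to reduce the multi-dimensional null-cut computation to the one-dimensional $U(1)$-current result of Longo via the direct-integral machinery developed earlier in the paper. The key structural ingredients are already in place: Proposition \ref{decomposition-subspace-null-general} gives the fibre-wise decomposition $H(N_C) \simeq \int^{\oplus_\RR}_{\RR^{D-1}} H_{U(1)}((C(\pmb{x}_\perp),\infty))\,d^{D-1}\pmb{x}_\perp$, and Lemma \ref{lem:relative-entropy-decomposition} tells us the relative entropy of a coherent state decomposes accordingly as a direct integral over the fibres. So the proof should be essentially a matter of identifying, on each fibre, the restriction of the state $\omega \circ \beta_h$ with a $U(1)$-current BMT state and then invoking Longo's formula \cite[Theorem 4.7]{LongoLocalised}.

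First I would verify that the decomposition $H(N_C) = \int^{\oplus_\RR}_{\RR^{D-1}} H_{U(1)}(I_{C(\pmb{x}_\perp)})\,d\pmb{x}_\perp$ together with the decomposable modular generator from Theorem \ref{main-thm} satisfies the hypotheses of Lemma \ref{lem:relative-entropy-decomposition}. This requires checking that the one-particle vector $\psi = V_{\mathrm S} E\delta(x_-)g$ associated with the test function $g$ defining $\beta_h$ decomposes as $\psi = \int^\oplus \psi(\pmb{x}_\perp)\,d\pmb{x}_\perp$ with each fibre $\psi(\pmb{x}_\perp) \in H_{U(1)}(I_{C(\pmb{x}_\perp)})$, so that on each fibre the coherent-state-relative-entropy reduces to the ``$S_H(\psi) = -\langle \psi, \log\Delta_H\,\psi\rangle$'' form valid for vectors lying in the standard subspace. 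Then Lemma \ref{lem:relative-entropy-decomposition} yields
\begin{align*}
 S_{R(H(N_C))}(\omega \circ \beta_h || \omega) = \int_{\RR^{D-1}} S_{R(H_{U(1)}(I_{C(\pmb{x}_\perp)}))}(\omega_{\psi(\pmb{x}_\perp)} || \omega)\,d\pmb{x}_\perp.
\end{align*}

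Next I would identify the fibre integrand. On the fibre over $\pmb{x}_\perp$ the BMT automorphism $\beta_h$ restricts to the one-dimensional BMT automorphism $\beta_{k}$ with $k(\cdot) = h(\cdot,\pmb{x}_\perp)$ acting on the single $U(1)$-current localized in the interval $I_{C(\pmb{x}_\perp)} = (C(\pmb{x}_\perp),\infty)$; this is exactly the correspondence set up just before the Proposition, where $\beta_h(\mathrm w(\xi)) = e^{-i\int_{X_-^0} h(x)G(x)\,dx}\mathrm w(\xi)$ is built so that the phase factors out over the transverse integral. Applying Longo's formula with interval endpoint $t = C(\pmb{x}_\perp)$ gives
\begin{align*}
 S_{R(H_{U(1)}(I_{C(\pmb{x}_\perp)}))}(\omega \circ \beta_{h(\cdot,\pmb{x}_\perp)} || \omega) = \pi \int_{C(\pmb{x}_\perp)}^\infty (x_+ - C(\pmb{x}_\perp))\,h(x_+,\pmb{x}_\perp)^2\,dx_+,
\end{align*}
and substituting this into the transverse integral produces the claimed double-integral formula by Fubini.

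The main obstacle I expect is the careful bookkeeping in the second step: one must confirm that the fibre-wise restriction of the global BMT automorphism $\beta_h$ genuinely coincides with the one-dimensional $\beta_k$ on each fibre, which requires matching the definition of $\beta_h$ (smeared against the primitive $G$ over all of $X_-^0$) with Longo's one-dimensional convention (smeared against the primitive $L$ on a single ray) and tracking the intertwining unitary $Q_{\mathrm S}$ and the primitive/derivative convention noted in Section \ref{sec:U(1)-current}. In particular I would need to ensure the vector $\psi(\pmb{x}_\perp)$ is precisely the one-particle vector of $l = h(\cdot,\pmb{x}_\perp)$ under the identification used in \cite{LongoLocalised}, so that the endpoint $t$ in Longo's formula is $C(\pmb{x}_\perp)$ and the integrand is $h^2$ and not a reparametrized version of it. Once this identification is pinned down, the remaining steps are routine applications of the already-established decomposition lemmas.
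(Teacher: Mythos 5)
Your overall route is the same as the paper's: decompose $H(N_C)$ fibre-wise (Proposition \ref{decomposition-subspace-null-general}), decompose the relative entropy (Lemma \ref{lem:relative-entropy-decomposition}), identify the fibre states with BMT states of single $U(1)$-currents, and insert Longo's formula. But there is a genuine gap at the very first step, where you form ``the one-particle vector $\psi=V_{\mathrm S}E\delta(x_-)g$ associated with the test function defining $\beta_h$'' (note the function defining $\beta_h$ is $h$, not $g$). By Lemma \ref{lm:zero}, $E\delta(x_-)h$ has finite norm only if $\int h(x_+,\tx_\perp)\,dx_+=0$ for every $\tx_\perp$, and the definition of $\beta_h$ imposes no such zero-mode condition on $h$ --- only the Weyl test functions $g$ are required to satisfy $\hat g(0)=0$. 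For generic $h\in C_0^\infty(X_-^0,\RR)$ your vector $\psi$ does not exist in $\hm$, $\beta_h$ is not the adjoint action of any Weyl operator, $\omega\circ\beta_h$ is not manifestly a coherent state, and Lemma \ref{lem:relative-entropy-decomposition} (stated for coherent states $\omega_\psi$ implemented by an actual vector $\psi$) cannot be invoked. The paper's proof begins precisely by repairing this: since $\supp h$ is compact and $C$ is continuous, hence bounded on the $\tx_\perp$-projection of $\supp h$, one may add to $h$ a smooth function supported in $N_C^\dagger$ so that the modified function has vanishing $x_+$-integrals; and because the primitive $G$ of a test function $g$ localized in $N_C$ with $\hat g(0)=0$ is itself supported in $N_C$, this modification does not change the phase $e^{-i\int hG}$, hence neither the automorphism on $R(H(N_C))$ nor the state $\omega\circ\beta_h$ there. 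Only after this replacement is $\mathfrak h=V_{\mathrm S}E\delta(x_-)h$ well defined and $\omega\circ\beta_h=\omega_{\mathfrak h}$ on $R(H(N_C))$, so that your reduction can start.

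A secondary point: your requirement that each fibre vector lie in $H_{U(1)}((C(\tx_\perp),\infty))$ is neither needed nor available. The proposition does not assume $\supp h\subset N_C$, and after the zero-mode correction above it certainly fails, so $\mathfrak h(\tx_\perp)$ generally has components on both sides of $C(\tx_\perp)$; consequently the fibre entropies do \emph{not} reduce to the form $-\langle\psi,\log\Delta_H\psi\rangle$. This is harmless: Lemma \ref{lem:relative-entropy-decomposition} is proved via the cutting-projection expression $S_H(\psi)=\Im\langle\psi,P_H\,i\log\Delta_H\,\psi\rangle$, valid for arbitrary $\psi$, and Longo's formula $\pi\int_t^{\infty}(x-t)k^2(x)\,dx$ likewise holds for $k$ not supported in $(t,\infty)$ --- the weight $(x-t)$ automatically discards the part of $h$ outside the cut. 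So you should drop that condition rather than attempt to verify it; with the zero-mode step added and this condition removed, the rest of your plan coincides with the paper's proof.
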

	\begin{proof}
		
    As the support of $h$ is compact, $C$ is bounded on the restriction to $\tx_\perp$ of the support of $h$,
    and by adding an appropriate smooth function supported in $N_C^\dagger$, we may assume that
    $\int h(x_+,\tx_\perp)dx_+=0$ for every $\tx_\perp$ without changing the state $\omega\circ \beta_h$ on $R(H(N_C))$. Then $\delta(x_-)h$ is a thin test function and we call $\mathfrak{h}:=V_SE\delta(x_-)h$ its one-particle vector in the spatial decomposition.
    
	In this case, the BMT automorphism $\beta_h$ is generated by the adjoint action of the Weyl operator $\mathrm w(\mathfrak{h})$. The symplectic form can be computed fibrewise using \eqref{eq:U(1)-symplectic-form}
			\begin{align}
				\Ad \mathrm w(\mathfrak{h}) \left(\mathrm w(\xi)\right)&=e^{2i \im \braket{\mathfrak{h},\xi}}\mathrm w(\xi).\label{eq:BMT auto}\\
				\Im \braket{\mathfrak{h},\xi}= \int_{\RR^{D-1}} \Im\braket{\mathfrak{h}(\tx_\perp),\xi(\tx_\perp)}_{U(1)}d\tx_\perp&=\frac{1}{2}\int_{\RR^{D}} h(x_+,\tx_\perp)G(x_+,\tx_\perp)dx_+d\tx_\perp\label{eq:symplectic-form}.
			\end{align}
		Thus $\omega \circ \b_h$ is the coherent state $\omega_{\mathfrak{h}}$ and we can apply \autoref{lem:relative-entropy-decomposition}:
			\begin{align*}
				S_{R(H(N_C))}(\omega_{\mathfrak{h}}||\omega)=\int_{\RR^{D-1}}S_{\A(I_{C(\tx_\perp)})}(\omega_{\mathfrak{h}(\tx_\perp)}(\tx_\perp)||\omega_{\tx_\perp})d\tx_\perp.
			\end{align*}
		The coherent state $\omega_{\mathfrak{h}(\tx_\perp)}(\tx_\perp)$ on the $U(1)$ current is generated by the adjoint action of $\mathrm w_{U(1)}(\mathfrak{h}(\tx_\perp))$ on the fibrewise vacuum $\omega(\tx_\perp)$. This in turn coincides with the action of the BMT automorphism $\b_{h(\tx_\perp)}$ on the $U(1)$ vacuum $\omega({\tx_\perp})$.
		Therefore, we can insert the results for a single $U(1)$-current and have:
			\begin{align*}
					S_{R(H(N_C))}(\omega \circ 		\beta_h||\omega)&=\int_{\RR^{D-1}}S_{\A(I_{C(\tx_\perp)})}(\omega({\tx_\perp}) \circ \beta_{h(\tx_\perp)}||\omega({\tx_\perp}))d\tx_\perp\\
					&=\pi\int_{\RR^{D-1}}\int_{C(\tx_\perp)}^\infty 		(x_+-C(\tx_\perp)) h(x_+,\tx_\perp)^2 dx_+ d\tx_\perp.
			\end{align*}
	\end{proof}

\subsection{The ANEC and the QNEC}\label{sec:QNEC-ANEC}

\paragraph{Energy conditions in QFT}
One of the motivations to study quantum energy conditions comes from energy constraints in General Relativity, see e.g.~\cite{Few12}. Indeed, pointlike positivity of the stress energy tensor, assumed in general relativity to avoid singular spacetimes, is not possible in QFT.
It is not difficult to see that the stress energy smeared by a positive test function can not be positive but only bounded from below, see e.g.\! \cite{FH05}.
On the other hand, the ANEC claims that the the integral of the expectation value of the energy-momentum tensor in any physical state, along any complete, lightlike geodesic $\gamma$ is always non-negative (see e.g.\! \cite[(3)]{FLPW16},\cite{Verch00}):
\begin{align}\label{def:ANEC}
	\braket{\Phi|\int_\RR ds \,T_{\mu,\nu}(\gamma(s))k^\mu k^\nu \Phi }\geq 0.
\end{align}where  $\k^\mu$ is  the tangent of $\gamma.$ We can write $T_{++}$ when we consider the null $x_+$-direction $(x_+,0,\pmb{0})$.
It is expected that  \eqref{def:ANEC} is satisfied for a dense set of vectors in a certain formal sense.

The Quantum Null Energy (QNEC) is a local bound on the expectation value of the null energy density given by the von Neumann entropy of
the null cut (the following is only symbolic and we do not attempt to justify the expressions):
	\begin{align*}
		\braket{T_{++}(x)}_\Phi\geq \frac{1}{2\pi}S''_{\A(N_C),\Phi},
	\end{align*}
where $N_C$ is a null cut and $x$ is on the boundary of the cut $C$, and the derivative is taken
in the sense of the deformation of $C$ at the point $x$ (see \cite{BFKW19}).

In \cite{BFLW16} for null cuts $N_C$ and in \cite{LLS18} for deformed wedges $W_C$ it is claimed
that the QNEC is equivalent to the positivity of the second derivative of the relative entropy between the vacuum and
another state with respect to null-deformations:
\begin{align}\label{def:QNEC2}
	S''\geq 0.
\end{align}
The equivalence is reached through the expression of the modular operator as in  \eqref{eq:set} .
 With this motivation the positivity of $S''$ has been checked rigorously in some families of AQFT models (see e.g.~\cite{LongoLocalised,CLR20,Longocoherent, Pan19})
 and is expected to be a hold in general in the AQFT context.

In \cite{CF18}, the authors assume that the inclusions of null cut algebras are HSMI. For continuous functions $C_1\geq C_2$ and $A\geq 0$,
they showed
\begin{align}\label{def:QNEC3}
	\left.\partial_t^+ S_{R(N_{C_1+tA})}(\Psi||\Omega)\right|_{t=0}-\left.\partial_t^- S_{R(N_{C_2+tA})}(\Psi||\Omega)\right|_{t=0}\geq 0,
\end{align}
where $\partial_t^\pm$ denote half-sided partial derivatives.
This statement was proved for the class of vectors satisfying \eqref{def:ANEC} with respect to $T_{++}$ and having finite relative entropy with respect to the vacuum and the algebra of the zero null cut.

We will verify the QNEC \eqref{def:QNEC2} for the free scalar field and for some states explicitly.
In order to do this, we take the decomposition of the modular operator (cf. \autoref{main-thm}) and exploit the result by \cite{LongoLocalised} on each fiber.
In this sense, our findings constitute a generalization of the results to a continuum family of  $U(1)$-currents.

\paragraph{The ANEC}

To make contact with the physical literature,
let us consider distorted lightlike translations $T_A$ by $A\geq 0$, which generate half-sided modular inclusions of standard subspaces $H(N_{C+A})\subset H(N_C)$ of null-cuts.
It is claimed, e.g.\! \cite[(5.6) and (5.18)]{CTT17}, that the following operators are positive.
\begin{align*}
	P_{\tx_\perp}&=\int dx_+ T_{++}(x_+,\tx_\perp) \\
	H_A&=\int d^{D-1}\tx_\perp A(\tx_\perp)\int dx_+ T_{++}(x_+,\tx_\perp) = \int d^{D-1}\tx_\perp A(\tx_\perp)P_{\tx_\perp}.
\end{align*} 
Furthermore, it is argued that positivity of $H_A$ should imply ANEC \eqref{def:ANEC} in \cite[Section 3.2]{FLPW16}.

Note that the last expression does not involve $T_{++}$, and we can make sense of it in the free field if we interpret these relations at the one-particle level.
Indeed, it is simply a weighted integral of the generator $P_{\tx_\perp}$ of translations in the $\uone$-current,
which is positive. The operator $H_A$ is also positive as it is the second quantization of this positive operator.

Following \cite{LongoLocalised}, we define the vacuum energy associated to null deformations by $A \geq 0$ of a normal and faithful state $\varphi$,
that has a vector representative $\eta$, by
\begin{align}
	E_A(\eta)=\braket{\eta,H_{A}\eta}\label{ANEC},
\end{align}
where $H_A=\frac{1}{2\pi}\big(\log(\Delta_{R(H(N_{C+A})),\Omega})$ $-\log(\Delta_{R(H(N_C)),\Omega})\big)$.
The expression \eqref{ANEC} is positive for the dense set of vectors in $\D(H_A)$ (cf. \autoref{appendix} for explicit form of $\D(H_A)$) since $H_A$ is a positive operator. In \cite{LongoLocalised} and \cite{CF18}, the positivity of \eqref{ANEC} is considered as form of the ANEC.

\paragraph{The QNEC}
A BMT automorphism of the form \eqref{eq:BMT auto} generates a coherent state $\omega_h(\cdot)=\omega(\mathrm w(\mathfrak h)\cdot \mathrm w(\mathfrak h)^*)$
when restricted to a null cut. The GNS representation space of $\omega\circ \beta_h$ is the Fock space. The representation of $\tP$ transforms by the adjoint action of $\mathrm w(\mathfrak h)$. We call this representation the $\beta_h$-representation.

Let $\omega\circ \beta_{h_1}$ and $\omega\circ \beta_{h_2}$ be two coherent states that are unitarily generated by the adjoint action of $\mathrm w(\mathfrak{h}_1)$ and $\mathrm w(\mathfrak{h}_2)$, respectively. The relative entropy is between these states is:
\begin{align*}
	S(\omega\circ \beta_{h_1}||\omega\circ \beta_{h_2})&=S(\omega\circ \beta_{h_1}||\omega \cdot \Ad \mathrm w(\mathfrak{h}_2))=S(\omega\cdot \Ad \mathrm w(\mathfrak{h}_1-\mathfrak{h}_2)||\omega)=S(\omega\circ \beta_{h_1-h_2}||\omega).
\end{align*}
To study the relative entropy between these states, we can therefore restrict to $h_2=0$, i.e. $\omega \circ \beta_{h_2}$ being the vacuum state.

A distorted lightlike translation $T_{tA}$ by $A$ maps $H(N_C)$ to $H(N_{C+tA})$ and the relative entropy of $\omega\circ \b_h$ and $\omega$ changes accordingly (cf. \autoref{prop:rel-entr-coherent-state}). The differentiation with respect to the deformation parameter $t$ gives:
\begin{align}
	\frac{d}{dt}S_{R(N_{C+tA})}(\omega\circ\beta_h||\omega)&=-\pi \int_{\RR^{D-1}}\int_{C(\tx_\perp)+tA(\tx_\perp)}^\infty A(\tx_\perp)h(x_+,\tx_\perp)^2 dx_+d\tx_\perp\label{eq:first-derivative-relative-entropy}\\ 
	\frac{d^2}{dt^2}S_{R(N_{C+tA})} (\omega\circ\beta_h||\omega)&=\pi \int_{\RR^{D-1}} A(\tx_\perp)^2 h(C(\tx_\perp)+tA(\tx_\perp),\tx_\perp)^2 d\tx_\perp\geq 0
	\nonumber
\end{align}
We applied the differentiation fibrewise and used 
\begin{align}
	\frac{d}{dt}\int_{C+tA(\tx_\perp)}^{\infty}(x_+-(C+tA(\tx_\perp)) h(x_+,\tx_\perp)^2 dx_+ &=\int_{C+tA(\tx_\perp)}^\infty A(\tx_\perp)h(x_+,\tx_\perp)^2 dx_+\label{eq:first-fibre-t-derivative}\\
	\frac{d^2}{dt^2}\int_{C+tA(\tx_\perp)}^\infty A(\tx_\perp)h(x_+,\tx_\perp)^2 dx_+&= A(\tx_\perp)^2 h(C+tA(\tx_\perp),\tx_\perp)^2 \label{eq:second-fibre-t-derivative}.		
\end{align}
To justify the interchange of the $\tx_\perp$-integral and the $\frac{d}{dt}$-derivative, we verify that $\eqref{eq:first-fibre-t-derivative}, \eqref{eq:second-fibre-t-derivative}\in L^1(\RR^{D-1},d^{D-1}\tx_\perp)$. We estimate $\eqref{eq:first-fibre-t-derivative}\leq \norm{A(\tx_\perp)h(x_+,\tx_\perp)^2}_1$. Since $A$ is continuous and $h$ is compactly supported and smooth, the estimate is bounded and compactly supported in $\RR^{D-1}$ and as such it is integrable. For $\eqref{eq:second-fibre-t-derivative}$, we note that $C+tA(\tx_\perp)$ is continuous on $\RR^{D-1}$ and accordingly maps compact sets to compact sets. Hence, the product in $\eqref{eq:second-fibre-t-derivative}$ is compactly supported in $\RR^{D-1}$ and bounded and therefore integrable.  

In summary, we have proven the following form of the Quantum Null Energy Condition:
\begin{corollary}
	For the coherent states $\omega\circ\beta_h$ considered here, the QNEC holds:
		\begin{align*}
			\frac{1}{2\pi}\frac{d^2}{dt^2}S_{R(N_{C+tA})} (\omega\circ\beta_h||\omega)\geq 0.
		\end{align*}
	This inequality is not saturated at every point of positive energy density. Following the arguments in \autoref{thm:completeness} and \autoref{prop:N_C=W_C}, we can replace the region $N_{C+tA}$ with $N_{C+tA}''$ and $W_{C+tA}$, respectively.
\end{corollary}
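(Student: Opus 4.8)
The plan is to reduce everything to the closed-form relative entropy of Proposition \ref{prop:rel-entr-coherent-state} and then differentiate twice in the deformation parameter $t$. First I would note that the distorted lightlike translation $T_{tA}$ sends $H(N_C)$ to $H(N_{C+tA})$ by Proposition \ref{pr:null-covariance}, so the relative entropy along the deformation is obtained from Proposition \ref{prop:rel-entr-coherent-state} by the substitution $C\mapsto C+tA$:
\[
S_{R(N_{C+tA})}(\omega\circ\beta_h||\omega)=\pi\int_{\RR^{D-1}}\int_{C(\tx_\perp)+tA(\tx_\perp)}^{\infty}\big(x_+-C(\tx_\perp)-tA(\tx_\perp)\big)\,h(x_+,\tx_\perp)^2\,dx_+\,d\tx_\perp.
\]

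Next I would differentiate under the transverse integral. Fibrewise the weight $(x_+-C-tA)$ vanishes at the moving lower limit $x_+=C+tA$, so the first $t$-derivative carries no boundary contribution and is given by \eqref{eq:first-fibre-t-derivative}; a second differentiation then localizes entirely to the boundary and produces \eqref{eq:second-fibre-t-derivative}. Assembling the fibres gives
\[
\frac{d^2}{dt^2}S_{R(N_{C+tA})}(\omega\circ\beta_h||\omega)=\pi\int_{\RR^{D-1}}A(\tx_\perp)^2\,h\big(C(\tx_\perp)+tA(\tx_\perp),\tx_\perp\big)^2\,d\tx_\perp,
\]
whose integrand is manifestly non-negative; dividing by $2\pi$ yields the asserted inequality.

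The single genuine technical point is to justify exchanging the $\tx_\perp$-integration with the $t$-derivatives. I would control this with the $L^1(\RR^{D-1})$ bounds indicated after \eqref{eq:second-fibre-t-derivative}: since $A$ is continuous and $h$ is smooth with compact support, and $\tx_\perp\mapsto C(\tx_\perp)+tA(\tx_\perp)$ maps compact sets to compact sets, both fibre derivatives are bounded and compactly supported in $\tx_\perp$, hence dominated uniformly in $t$ over compact $t$-intervals, so differentiation under the integral sign is legitimate. This is the main, though mild, obstacle.

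Finally, for the two supplementary claims: saturation fails pointwise because the integrand $A^2\,h(C+tA,\cdot)^2$ vanishes only where $A=0$ or where $h$ is zero on the shifted boundary, so a generic $h$ carrying positive energy density contributes strictly positively. The region replacement is immediate from the identifications $H(N_{C+tA})=H(N_{C+tA}'')=H(W_{C+tA})$ supplied by Theorem \ref{thm:completeness} and Proposition \ref{prop:N_C=W_C}: these coincide as standard subspaces, so the second quantized algebras $R(H(\slot))$ and hence the relative entropies are unchanged under the replacement.
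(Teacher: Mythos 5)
Your proposal is correct and follows essentially the same route as the paper: the paper also obtains the deformed entropy by substituting $C+tA$ into Proposition \ref{prop:rel-entr-coherent-state}, differentiates fibrewise (no boundary term at first order since the weight vanishes at the moving lower limit, a pure boundary term at second order), justifies the interchange of $\frac{d}{dt}$ with the $\tx_\perp$-integral by the same compact-support/boundedness bounds, and handles the replacement $N_{C+tA}\rightsquigarrow N_{C+tA}''$, $W_{C+tA}$ via the identifications of Theorem \ref{thm:completeness} and Proposition \ref{prop:N_C=W_C}. The only cosmetic difference is sign bookkeeping in the first derivative (the paper's \eqref{eq:first-fibre-t-derivative} omits a minus sign that reappears in \eqref{eq:first-derivative-relative-entropy}), which does not affect the non-negativity of the second derivative.
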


As expected in \cite{BFKLW16}, we recover the ANEC by integrating the QNEC along a null-direction for coherent states by \eqref{ANEC}: 
	\begin{align*}
		&\frac{1}{2\pi}\int_\RR dt\frac{d^2}{dt^2}S_{R(N_{C+tA})} (\omega\circ\beta_h||\omega) \\
		&=\frac{1}{2}\int_\RR dt \int_{\RR^{D-1}}A(\tx_\perp)^2h(C(\tx_\perp)+tA(\tx_\perp),\tx_\perp)^2d\tx_\perp\\
		&=\frac{1}{2}\int_\RR dx_+ \int_{\RR^{D-1}}A(\tx_\perp)h(x_+,\tx_\perp)^2d\tx_\perp\\
	&=\im\braket{\mathfrak{h},iPA\mathfrak{h}} = i \left.\frac{d}{ds}\right|_{s=0}\braket{\mathfrak h,U_A(s)\mathfrak h} \\
	&=i\left.\frac{d}{ds}\right|_{s=0}\left(e^{-\frac{1}{2}(\norm{\mathfrak h}^2+\norm{U_A(s)\mathfrak h}^2)} e^{\braket{\mathfrak h,U_A(s)\mathfrak h}} \right)
	= i\left.\frac{d}{ds}\right|_{s=0}\braket{\mathrm w(\mathfrak h)\Omega,\Gamma(U_A(s))\mathrm w(\mathfrak h)\Omega } \\
		&=\braket{\mathrm w(\mathfrak{f})\Omega,H_A\mathrm w(\mathfrak{h})\Omega},
	\end{align*}
where we used that the self-adjoint fibrewise "momentum operator" $\int_{\RR^{D-1}}^\oplus A(\tx_\perp)P_{\tx_\perp}d\tx_\perp$ is the generator of $U_A(s)$,
similar arguments as in \eqref{eq:U(1)-symplectic-form} and \eqref{eq:symplectic-form} and the general identity \eqref{eq:second-quantization-identity} for the Weyl unitaries.
This is in agreement with \cite[Corollary 3.10 and (46)]{LongoLocalised}).

We interpret $A(\tx_\perp)h(x_+,\tx_\perp)^2$ as vacuum energy density at the point $(x_+,\tx_\perp) \subset X_-^0$ of the state $\mathrm w(\mathfrak h) \Omega$ with respect to null-deformations $A$ . This is the vacuum energy density of the state $\mathrm w(\mathfrak h(\tx_\perp))\Omega_{\tx_\perp}$ of the $U(1)$-current at $\tx_\perp$-fibre with respect to deformations by $A(\tx_\perp)$ (cf. \cite{LongoLocalised}).
With this interpretation, the energy averaged over the a null-cut $N_C$ is:
	\begin{align}
		E_{A,\b_h}( N_C)=\int_{\RR^{D-1}}\int_{C(\tx_\perp)}^\infty A(\tx_\perp)h(x_+,\tx_\perp)^2d\tx_\perp\label{eq:energy-null-cut},
	\end{align}
and we have the following connection between the first derivative of the relative entropy and the energy localised in the null-cut by comparing \eqref{eq:first-derivative-relative-entropy} and \eqref{eq:energy-null-cut}:
	\begin{align*}
		\frac{d}{dt}S_{R(N_{C+tA})}(\omega\circ\beta_h||\omega)=E_{A,\b_h}(N_{C+tA}).
	\end{align*}

\paragraph{Strong superadditivity of relative entropy}

Consider two null-cuts $N_{C_1}$ and $N_{C_2}$ and the null-cuts $N_{C_\cup}$ and $N_{C_\cap}$ generated by $C_\cup(\tx_\perp)=\min\{C_1(\tx_\perp),C_2(\tx_\perp)\}$ and $C_\cap(\tx_\perp)=\max\{C_1(\tx_\perp),C_2(\tx_\perp)\}$ respectively. 
Then the strong superadditivity of relative entropy is: 
	\begin{align*}
		S_{R(N_{C_\cup})}(\Psi||\Omega) + S_{R(N_{C_\cap})}(\Psi||\Omega) 	\geq S_{R(N_{C_1})}(\Psi||\Omega) + S_{R(N_{C_2})}(\Psi||\Omega).
	\end{align*}
As proven in \cite[Section 3.2]{CF18} the QNEC \eqref{def:QNEC3} implies this strong superadditivity
for a state $\Psi$ with finite QNEC \eqref{def:QNEC3}.
We can show that the states considered in Section \ref{sec:QNEC-ANEC} saturate the strong superadditivtiy of relative entropy.
Indeed, we apply \autoref{prop:rel-entr-coherent-state} and have:
	\begin{align*}
			S_{R(H(N_{C_\cup}))}&(\omega \circ \beta_h||\omega)+S_{R(H(N_{C_\cap}))}(\omega \circ \beta_h||\omega)-S_{R(H(N_{C_1}))}(\omega \circ \beta_h||\omega)-S_{R(H(N_{C_2}))}(\omega \circ \beta_h||\omega)\\
			&=\pi \int_{\RR^{D-1}}\Big(\int_{C_\cup(\tx_\perp)}^\infty (x_+-C_\cup(\tx_\perp)) +\int_{C_\cap(\tx_\perp)}^\infty (x_+-C_\cap(\tx_\perp))\\
			&\qquad-\int_{C_1(\tx_\perp)}^\infty (x_+-C_1(\tx_\perp))-\int_{C_2(\tx_\perp)}^\infty (x_+-C_2(\tx_\perp)) \Big)  h(x_+,\tx_\perp)^2 dx_+ d\tx_\perp=0
	\end{align*}
since $C_\cup(\tx_\perp)+C_\cap(\tx_\perp)-C_1(\tx_\perp)-C_2(\tx_\perp)=0$ for all $\tx_\perp$.

\section{Concluding remarks}\label{concluding}
Let us close this paper with a few more remarks.
\begin{itemize}
 \item
 It is possible to extend the definition of local subspaces on the null plane to measurable functions $C$. The strategy is to loosen the assumption on "thin test functions" (cf.\! Section \ref{sec:local-subspace-of-null-plane}) in the sense that $\tx_\perp \mapsto g(x_+,\tx_\perp) \in L^1(\RR^{D-1})\cap L^2(\RR^{D-1})$ for almost every $x_+$. The real subspaces are covariant with respect to distorted lightlike translations and dilations (cf.\! \eqref{eq:non-const-translation} and \eqref{eq:non-const-dilations}) for measurable functions $C$. Also in  this case $T_C$ and $D_C$ are determined by measurable vector fields of bounded operators since Remark \ref{rem:TCDC} obviously extends. It constitutes an extension of the present analysis in the sense that for continuous functions the definitions coincide.
Many results such as the decomposition of the subspaces (\autoref{decomposition-subspace-null-general}) and modular operators (\autoref{main-thm}) hold for the resulting subspaces as well. However, isotony (\autoref{lm:isotony-null-space-region}) and duality for null-cuts (\autoref{thm:completeness}) do not hold for arbitrary measurable functions.
\item 
The modular operator of a direct integral of standard subspaces decomposes into the direct integral of modular operators.
Therefore, the modular operator of $H(R)$ from \autoref{decomposition-subspace-null-general} decomposes into the direct integral of $U(1)$-modular operators of the interval $(C_1(\tx_\perp),C_2(\tx_\perp))$.
\item The fact that there are observables that can be restricted to null plane shows
that the minimal localization region in the sense of \cite{Kuckert00} can have empty interior.
\item In a general Haag-Kastler net, we do not expect that there are sufficiently many observables
(e.g.\! in the sense of the Reeh-Schlieder property) that can be restricted to the null plane.
With additional assumptions (including the $\CC$-number commutation relations),
it is shown that only free fields can be directly restricted on the null plane \cite{Driessler76-2},
cf.\! \cite[Section 5]{Wall12} \cite[A]{BCFM15}.
In addition, in the two-dimensional spacetime, there are interacting Haag-Kastler nets \cite{Tanimoto14-1}
where observables on the lightray generate a proper subspace of the Hilbert space from the vacuum
\cite[Section 5.3, Trivial examples]{BT15}.
In these cases, different ideas are required to justify \eqref{eq:set}.

On the other hand, it has been suggested that there could be bounded operators localized on the null plane
in the sense of Haag duality \cite{Schroer05}. If so, two-dimensional conformal field theory is hidden
on each lightlike fibre \cite{BLM11}.
\end{itemize}

\subsubsection*{Acknowledgements}
We thank Roberto Longo and Aron Wall for inspiring discussions.

VM is supported by Alexander-von-Humboldt Foundation and was supported  by the European Research Council Advanced Grant 669240 QUEST until March 2021. BW is supported by the INdAM Doctoral programme and has received funding from the European Union's 2020 research and innovation programme under the Marie Sklodowska-Curie grant agreement No 713485.

We acknowledge the MIUR Excellence Department Project awarded to
the Department of Mathematics, University of Rome ``Tor Vergata'' CUP E83C18000100006 and the University of
Rome ``Tor Vergata'' funding scheme ``Beyond Borders'' CUP E84I19002200005.

\appendix
\section{Decomposable Functional calculus}\label{appendix}

	Assume the same notations as in Section \ref{direct}. We call an (possibly unbounded) operator $T$ on the direct integral of Hilbert spaces $\K=\int^\oplus_X \K_\l d\nu(\l)$ decomposable when there is a function $\l \mapsto T_\l$, such that $T_\l$ is an (possibly unbounded) operator on $\K_\l$, and for each $\xi \in \D(T)$, $\xi(\l) \in \D(T_\l)$ and $(T\xi)(\l)=T_\l\xi(\l)$ for $\nu$-almost every $\l$. 
		
		\begin{theorem}\label{prop:functional-calculus-decomposable}
			Let $T$ be a self-adjoint operator on the direct integral of Hilbert spaces $\K=\int^\oplus_X K_\l d\nu(\l)$.
			The following are equivalent:	
				\begin{enumerate}
					\item\label{4} The projection-valued measure $E^T$ associated to $T$ decomposes in the direct integral of projection-valued measures:
						\begin{align}\label{eq:decomposed-spectral-measure}
							E^T=\int^\oplus_X E^{T_\l}d\nu(\l)=:E
						\end{align}
					for some self-adjoint operators $T_\l$ on $\K_\l$.
					\item\label{3} The Borel functional calculus of $T$ decomposes in the direct sum of Borel functional calculi of self-adjoint operators $T_\l$ on $\K_\l$:
						\begin{align*}
							f(T)=\int^\oplus_X f(T_\l)d\nu(\l).
						\end{align*}
					\item\label{1} $T=\int^\oplus_X T_\l d\nu(\l)$ is decomposable. Each $T_\l$ is a self-adjoint operator on $\K_\l$.
					\item\label{2} The one-parameter group of unitaries $U(t):=e^{itT}$ decomposes:
						\begin{align*}
							U(t)=\int^\oplus_X U_\l(t)d\nu(\l),
						\end{align*}
					where $U_\l(t)$ is a one-parameter group of unitaries on $\K_\l$.

				\end{enumerate}
		\end{theorem}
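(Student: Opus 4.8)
The plan is to reduce all four statements to the bounded-operator dictionary recalled in Section~\ref{direct}: a bounded operator on $\K$ is decomposable exactly when it commutes with every diagonalizable operator, i.e.\ when it lies in the commutant $\mathcal Z'$ of the maximal abelian algebra $\mathcal Z\cong L^\infty(X,\nu)$ of multiplication operators, and $\mathcal Z'$ coincides with the von Neumann algebra $\mathcal D:=\int^\oplus_X\B(\K_\l)\,d\nu(\l)$ of decomposable bounded operators. I would then prove the equivalences by the cycle \ref{4}$\Rightarrow$\ref{3}$\Rightarrow$\ref{2}$\Rightarrow$\ref{4}, together with \ref{4}$\Leftrightarrow$\ref{1}. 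In each of \ref{4}, \ref{3}, \ref{2} the content is that the spectral projections, the bounded Borel functions, respectively the unitaries $e^{\ima tT}$ of $T$ all lie in $\mathcal D$; the passage from this to the fibrewise objects $E^{T_\l}$, $T_\l$, $U_\l(t)$ is a single measurable-selection step, which I isolate once and for all.

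Three of the implications are essentially formal. For \ref{4}$\Rightarrow$\ref{3}, any bounded Borel $f$ is a bounded pointwise limit of simple functions $\sum_j c_j\chi_{B_j}$; since each $\int^\oplus_X E^{T_\l}(B_j)\,d\nu(\l)$ is decomposable and $\mathcal D$ is $\sigma$-weakly closed, $f(T)=\int_\RR f\,dE^T$ lies in $\mathcal D$ and decomposes as $\int^\oplus_X f(T_\l)\,d\nu(\l)$. Choosing $f(\lambda)=e^{\ima t\lambda}$ gives \ref{3}$\Rightarrow$\ref{2}. For \ref{4}$\Leftrightarrow$\ref{1}: a decomposed $E^T$ yields $T_\l:=\int_\RR\lambda\,dE^{T_\l}(\lambda)$, self-adjoint for $\nu$-almost every $\l$, and a short computation identifies $T=\int^\oplus_X T_\l\,d\nu(\l)$ on the natural domain $\{\xi:\int_X\|T_\l\xi(\l)\|^2\,d\nu(\l)<\infty\}$; conversely, given self-adjoint fibres, $\l\mapsto E^{T_\l}(B)$ is a measurable field for each Borel $B$ and assembles to $E^T$.

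The substantial implication is \ref{2}$\Rightarrow$\ref{4}. Assuming the unitary group decomposes, every $U(t)=e^{\ima tT}$ is decomposable and hence commutes with $\mathcal Z$; therefore so does the resolvent, recovered as a strong-operator Laplace transform $(T-z)^{-1}=-\ima\int_0^\infty e^{-\ima zt}U(t)\,dt$ for $\Im z<0$, because $\mathcal D=\mathcal Z'$ is strongly closed. Equivalently, Stone's formula exhibits each spectral projection $E^T((a,b))$ as a strong limit of integrals of the $U(t)$, so that $E^T(B)\in\mathcal D$ for every Borel $B$. It then remains to upgrade the pointwise statement ``each $E^T(B)$ is decomposable'' to the existence of a single measurable field $\l\mapsto E^{T_\l}$ of projection-valued measures with $E^T=\int^\oplus_X E^{T_\l}\,d\nu(\l)$.

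The hard part will be exactly this upgrade, which is a von Neumann reduction-theory argument. I would fix a countable Boolean algebra $\mathcal R$ of Borel subsets generating the Borel structure of $\RR$ (say intervals with rational endpoints), decompose each $E^T(B)$ with $B\in\mathcal R$ by the bounded decomposition theorem, and then select one $\nu$-conull set on which all of the countably many almost-everywhere identities hold simultaneously together with the lattice relations (orthogonality for disjoint sets, finite additivity, and $\sigma$-additivity tested against a countable dense set of vectors). On that set $B\mapsto E^{T_\l}(B)$ is a finitely additive projection-valued set function on $\mathcal R$ that extends, by a monotone-class argument, to a genuine projection-valued measure $E^{T_\l}$; separability of $\K$, which forces $\K_\l$ to be separable for $\nu$-almost every $\l$, and $\sigma$-compactness of $X$ are what make both the common conull set and the extension available. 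Setting $T_\l:=\int_\RR\lambda\,dE^{T_\l}$ then produces self-adjoint fibres with $E^T=\int^\oplus_X E^{T_\l}\,d\nu(\l)$, closing the cycle. Beyond this bookkeeping the only genuine subtlety is the joint measurability in $\l$ and $B$, which the countable-generation device is designed precisely to control.
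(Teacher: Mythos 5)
Your proposal is correct in outline, but it takes a genuinely different route from the paper, most visibly in the hard implication from the decomposition of $U(t)=e^{\ima tT}$ to the decomposition of the spectral measure. The paper's appendix proof never leans on the commutant picture: it uses the fibre groups $U_\l(t)$ that are \emph{part of the hypothesis}, smears them against Schwartz functions via Fubini, so that the fibre operators are simply the Stone generators of the given $U_\l$ and $g(T)=\int^\oplus_X g(T_\l)\,d\nu(\l)$ for $g\in\mathscr{S}(\RR)$; it then identifies the spectral measures by approximating characteristic functions with uniformly bounded Schwartz sequences and invoking uniqueness in the Riesz--Markov theorem (and it also tracks domains of unbounded Borel functions, which your sketch leaves implicit). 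You instead discard the fibre group structure and use only that each $U(t)$ commutes with the diagonalizable operators, recovering the resolvent by the Laplace transform $(T-z)^{-1}=-\ima\int_0^\infty e^{-\ima zt}U(t)\,dt$ (correct for $\Im z<0$) and strong closedness of the decomposable algebra; this is formally stronger (you never need joint measurability of $(t,\l)\mapsto U_\l(t)$, which the paper's Fubini step tacitly requires) and it is closer to standard reduction theory. The price is that all the real work lands in your ``upgrade'' step, and there your sketch is too optimistic in one place: $\sigma$-additivity of a set function on the countable algebra $\mathcal R$ is a condition on \emph{uncountably} many sequences, so it cannot literally be arranged on a single conull set by ``testing against a countable dense set of vectors''; you must first reduce it to countably many conditions, e.g.\ right-continuity at rational endpoints of the monotone functions $q\mapsto\langle\xi_k(\l),P_\l((-\infty,q])\xi_k(\l)\rangle$ (each of which holds a.e.\ because $E^T$ is $\sigma$-additive upstairs), or else run the paper's Riesz--Markov identification fibre-wise. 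With that repair, your cycle, together with the resolvent argument linking decomposability of $T$ itself to the decomposed spectral measure, closes correctly: the paper's route is the more elementary and self-contained one, yours the more general and more structural one.
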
	
		\begin{proof}
			\ref{3} $\Rightarrow$ \ref{1},\ref{2}: clear.
			
			\ref{1} $\Leftrightarrow$ \ref{4} $\Rightarrow$ \ref{3}:
			We show that $E$ (cf.\! \eqref{eq:decomposed-spectral-measure}) is the spectral measure associated to $\int^\oplus_X T_\l d\nu(\l)$. It is a projection-valued measure defined by:
				\begin{align*}
					\Omega \mapsto E(\Omega):=\int^\oplus_X E^{T_\l}(\Omega)d\nu(\l).
				\end{align*}
			That is because the multiplication of decomposable operators is defined fibre-wise. It is clear that each element is a projection and that $E(\emptyset)=0$, $E(\RR)=1$ and $E(\Omega_1 \cap \Omega_2)=E(\Omega_1)E(\Omega_2)$.
			For mutually disjoint $\Omega_n$ the associated projections are orthogonal. It follows that their sum is again a projection. Hence, we can use dominated convergence, to deduce the convergence $E(\cup^N \Omega_n)\xrightarrow{N\rightarrow \infty}E(\Omega)$ for a countable union $\Omega=\cup \Omega_n$ from the fibre-wise convergence. The latter is true because $E^{T_\l}$ is a spectral measure.
			The spectral measure $E_\varphi$ for $\varphi\in \K$ is for a Borel set $\Omega$:
				\begin{align*}
					E_\varphi(\Omega)=\braket{\varphi,E(\Omega)\varphi}
					&=\int_X E^{T_\l}_{\varphi_\l}(\Omega) d\nu(\l).
				\end{align*}
			For integration, we denote it by $dE_\varphi(x)$. For a simple
			function $s=\sum a_i \chi_{\Omega_i}$ on a compact set $\Omega=\cup_{i} \Omega_i$ it holds:
				\begin{align*}
					\int_\RR s(x) dE_\varphi(x)&=\sum\limits_{i}a_i\int_X E_{\varphi_\l}^{T_\l}(\Omega_i)d\nu(\l)\\
					&=\int_X \int_\RR  s(x) dE^{T_\l}_{\varphi_\l}(x) 	d\nu(\l)=\braket{\varphi,\left(\int^\oplus_X s(T_\l)d\nu(\l)\right) \varphi}.
				\end{align*}
			The integral of a general bounded Borel function is the difference of the integrals over the positive and negative part.
			The integral over a positive bounded Borel function $f$ with respect to $dE_\varphi$ is defined as the supremum of integral over simple functions $s$ that are locally bounded by $f$:
				\begin{align*}
					\int_\RR f(x) dE_\varphi(x)&=\sup\limits_{0\leq s\leq f}\int_X \int_\RR s(x) dE^{T_\l}_{\varphi_\l}(x) d\nu(\l)\leq \int_X \sup\limits_{0\leq s\leq f}\int_\RR s(x) dE^{T_\l}_{\varphi_\l}(x) d\nu(\l)
				\end{align*}
			For a positive, Borel measurable function $f$, there is a non-decreasing sequence $s_n$ of simple functions converging from below pointwise. We can utilize dominated convergence (spectral measures are finite) to show 	
			$$\lim\limits_{n\rightarrow \infty}\int_X \int_\RR s_n(x) dE_{\varphi_\l}(x) d\nu(\l)=\int_X \int_\RR f(x) dE_{\varphi_\l}(x) d\nu(\l).$$ 
			Hence, we have for bounded Borel functions:
				\begin{align}\label{eq:Riesz-Markov-functionals}
					\int_X \int_\RR f(x) dE^{T_\l}_{\varphi_\l}(x) d\nu(\l)
				=\int_\RR f(x) dE_\varphi(x).
				\end{align}
			For each $\varphi \in \K$, the measure $E_\varphi$ is inner regular and countably additive. The second property was already discussed. The inner regularity follows from the finiteness of $E_\varphi$. Hence, the two integrals in \eqref{eq:Riesz-Markov-functionals} define the same bounded, linear functional on the compactly supported, continuous functions. From the uniqueness in the Riesz-Markov theorem (cf.\! \autoref{lem:Riesz-Markov}), we deduce the equality of the measures. This means for each $\varphi \in \K$ and Borel function $h$:
				\begin{align*}
					\int_\RR h(x)dE_\varphi=\int_X\int_\RR h(x)dE_{\varphi_\l}^{T_\l}d\nu(\l)=\braket{\varphi, \left( \int^\oplus_X h(T_\l)d\nu(\l) \right)\varphi}.
				\end{align*}
			In the usual notation, this amounts to:
				\begin{align*}
					\int_\RR h(x)dE(x)=\int^\oplus_X h(T_\l) d\nu(\l).
				\end{align*}
			The domain of the operator $h(T)$ is:
				\begin{align*}
					\D(h(T))&=\{\varphi\in \K; \int_\RR |f(k)|^2 dE_{\varphi}^T(k)<\infty   \}\\
					&=\{\varphi\in \K; \int_X\int_\RR |f(k)|^2 dE_{\varphi(\l)}^{T_\l}(k)d\nu(\l)<\infty   \}\\
					&= \{\varphi\in \K; \int_X\norm{h(T_\l)\varphi(\l)}^2 d\nu(\l)<\infty   \}  
					=\int^\oplus_X \D(h(T_\l))d\nu(\l).
				\end{align*}
			For a closed operator $A$ on a Hilbert space $\H$ its domain $\D(A)$ is a Hilbert space   w.r.t. the Graph inner product:
				\begin{align*}
					\braket{\xi,\eta}_{A}:=\braket{\xi,\eta}_{\H}+\braket{A\xi,A\eta}_{\H}.
				\end{align*}
			Since $T_\l$ is self-adjoint for ($\nu$-almost) every $\l$, the domain of $h(T_\l)$ is a Hilbert space w.r.t. the Graph inner product for almost every $\l$.  The domain $\D(h(T))$ embeds in $\K$ trivially. 
			If $T$ is decomposable, then we have for all vectors $\varphi \in \K$:
				\begin{align*}
					\int_\RR \l dE_\varphi^T(\l)=\int_X\int_\RR \l dE_{\varphi_\l}^{T_\l}d\nu(\l).
				\end{align*}
			
			\ref{2} $\Rightarrow$ \ref{4}:
				Let $\xi, \eta \in \K$ and $g\in \mathscr{S}(\RR)$. Then by Fubini's theorem the operator $g(T)$ decomposes:
			\begin{align*}
				\braket{\xi, g(T)\eta}
				&=\int_\RR \tilde{g}(t)\braket{\xi,U(t)\eta} dt\\
				&=\int_X \braket{\xi(\l),\left(\int_\RR \tilde{g}(t)U_\l(t)dt  \right)\eta(\l)} d\nu(\l)\\
				&=\braket{\xi,\left(\int^{\oplus}_X g(T_\l)d\nu(\l)  \right)\eta}.
			\end{align*}
			Every characteristic function $\chi_S$ of a set $S$ with finite measure can be expressed as the pointwise limit of a sequence $g_n\subset \mathscr{S}(\RR)$ of uniformly bounded functions. In the functional calculus this amounts to the strong convergence:
			\begin{align*}
				g_n(T)\xrightarrow{n\rightarrow \infty}\chi_{S}(T)=E^T(S),
			\end{align*}
			where $E^T(S)$ denotes the spectral projection of $T$ with respect to the set $S$.

			Since $g_n$ is uniformly bounded, we can apply the theorem of dominated convergence:
			\begin{align*}
				\braket{\xi,E^T_{S}\eta}&=\lim\limits_{n\rightarrow \infty}\int_X\braket{\xi(\l),g_n(T_\l)\eta(\l)}d\nu(\l)\\
				&=\int_X\braket{\xi(\l),\chi_{S}(T_\l)\eta(\l)}d\nu(\l)\\
				&=\braket{\xi,\left(\int^\oplus_X E^{T_\l}(S)d\nu(\l)\right)\eta}.
			\end{align*}
			Hence, the spectral measure $E_{\varphi}^{T}$ associated to any vector $\varphi \in \K$ decomposes in the following sense for any bounded Borel function $f$:
			\begin{align}
				\braket{\varphi,f(T)\varphi}&=\int_\RR f(k) dE_{\varphi}^T(k)\label{Riesz-Markov-functional-1}\\&=\int_X\int_\RR f(k) dE_{\varphi(\l)}^{T_\l}(k)
				=\braket{\varphi,\left(\int^\oplus_X f(T_\l)d\nu(\l)\right) \varphi}\label{Riesz-Markov-functional-2}.
			\end{align}
			The two expression \eqref{Riesz-Markov-functional-1} and \eqref{Riesz-Markov-functional-2} define the same linear functional on the continuous, compactly supported functions. By the Riesz-Markov theorem such a functional is determined a unique countably additive, inner regular measure on $\RR$. Since both measure in \eqref{Riesz-Markov-functional-1} and \eqref{Riesz-Markov-functional-2} are countably additive and inner regular, they coincide.
		\end{proof}
	
		\begin{lemma}\label{lem:Riesz-Markov}[Riesz-Markov theorem]
			Let $X$ be a locally compact Hausdorff space and $\Psi$ a bounded linear functional on $\C_c(X)$, then there exists a unique inner regular and countably additive measure $\mu$ on $X$ such that:	
				\begin{align*}
					\Psi(f)=\int_X f d\mu.
				\end{align*}
		\end{lemma}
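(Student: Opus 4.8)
The plan is to follow the classical Riesz--Markov--Kakutani construction, which it suffices to carry out for a \emph{positive} functional: an arbitrary bounded $\Psi$ splits as $\Psi = \Psi^+ - \Psi^-$ into positive functionals by a Jordan-type decomposition, and (as in the application above, where $\Psi(f) = \braket{\varphi, f(T)\varphi}$ is manifestly nonnegative on $f \ge 0$ because $f(T) \ge 0$) the measures obtained for each piece combine linearly. So I assume from now on that $\Psi(f) \ge 0$ whenever $f \ge 0$.

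First I would define a candidate set function on open sets $U \subset X$ by
\[
\mu(U) = \sup\{\Psi(f) : f \in \C_c(X),\ 0 \le f \le 1,\ \supp f \subset U\},
\]
and extend it to arbitrary $E \subset X$ by outer regularity, $\mu(E) = \inf\{\mu(U) : E \subset U,\ U\text{ open}\}$. The first technical step is to verify that $\mu$ is an outer measure: monotonicity is immediate, while countable subadditivity reduces to the case of finite open covers and is obtained by choosing a partition of unity subordinate to the cover, which exists on a locally compact Hausdorff space by Urysohn's lemma.

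Next I would apply the Carath\'eodory criterion to show that every Borel set is $\mu$-measurable; the crux is the splitting identity for open sets, again proved by using Urysohn functions to separate supports. After this, $\mu$ restricts to a countably additive Borel measure, inner regular on open sets by construction. The integral representation $\Psi(f) = \int_X f \, d\mu$ is then established for $0 \le f \le 1$ (the general case following by linearity and the positive/negative decomposition of $f$) by sandwiching $\Psi(f)$ between upper and lower sums associated with the level sets $\{f > k/n\}$: each such sum is controlled by the definition of $\mu$ on the relevant open sets, and letting $n \to \infty$ squeezes $\Psi(f)$ onto $\int_X f\, d\mu$. Inner regularity on every Borel set of finite measure then follows from regularity on open sets together with this representation.

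Finally, for uniqueness, suppose two inner regular, countably additive measures $\mu_1, \mu_2$ both represent $\Psi$. For any compact $K$ and open $U \supset K$, Urysohn's lemma provides $f \in \C_c(X)$ with $\chi_K \le f \le \chi_U$ and $\supp f \subset U$, whence $\mu_i(K) \le \Psi(f) \le \mu_i(U)$ for $i = 1, 2$; taking the infimum over such $U$ forces $\mu_1(K) = \mu_2(K)$ on every compact set, and hence $\mu_1 = \mu_2$ by inner regularity. I expect the main obstacle to be the Carath\'eodory measurability step together with the sandwiching estimate in the representation, since both rest on delicately chosen Urysohn functions and partitions of unity that must be available in the locally compact Hausdorff (not necessarily metrizable) setting; the $\sigma$-compactness of $X$ assumed throughout the paper is what guarantees that the resulting measure is genuinely $\sigma$-finite and countably additive.
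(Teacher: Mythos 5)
There is no proof in the paper to compare against: this lemma is stated, and labelled, as the classical Riesz--Markov theorem, and the paper only invokes its uniqueness clause in the proof of Theorem~\ref{prop:functional-calculus-decomposable}. Your sketch is the standard Riesz--Markov--Kakutani construction (Jordan decomposition into positive functionals, the outer measure $\mu(U)=\sup\{\Psi(f):0\le f\le 1,\ \supp f\subset U\}$ on open sets, Carath\'eodory measurability via Urysohn functions and partitions of unity, sandwiching over level sets for the representation, Urysohn functions for uniqueness), and it is essentially correct; it is precisely the argument the authors implicitly defer to. Two points need tightening. First, in the uniqueness step, passing from $\mu_i(K)\le\Psi(f)\le\mu_i(U)$ to $\mu_1(K)=\mu_2(K)$ by taking the infimum over open $U\supset K$ uses \emph{outer} regularity at compact sets, which is not among the hypotheses; you must first observe that any inner regular representing measure is finite (for compact $K$ and Urysohn $f\ge\chi_K$ one has $\mu_i(K)\le\Psi(f)\le\|\Psi\|$, hence $\mu_i(X)\le\|\Psi\|$ by inner regularity), and that a finite, inner regular Borel measure is automatically outer regular (pass to complements); for genuinely signed $\Psi$ the uniqueness also needs the small extra reduction of comparing $\mu_1^{+}+\mu_2^{-}$ with $\mu_2^{+}+\mu_1^{-}$, though in the paper's application $\Psi(f)=\braket{\varphi,f(T)\varphi}$ is positive, as you note. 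Second, your closing appeal to $\sigma$-compactness is misplaced: the measure is finite because $\Psi$ is bounded, and countable additivity comes from the construction itself; $\sigma$-compactness of $X$ plays no role in this lemma.
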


{\small
\newcommand{\etalchar}[1]{$^{#1}$}
\def\cprime{$'$} \def\polhk#1{\setbox0=\hbox{#1}{\ooalign{\hidewidth
  \lower1.5ex\hbox{`}\hidewidth\crcr\unhbox0}}} \def\cprime{$'$}
}


\begin{thebibliography}{KLLSM18}

\bibitem[Ara63]{Araki63}
Huzihiro Araki.
\newblock A lattice of von {N}eumann algebras associated with the quantum
  theory of a free {B}ose field.
\newblock {\em J. Mathematical Phys.}, 4:1343--1362, 1963.
\newblock \url{https://doi.org/10.1063/1.1703912}.

\bibitem[AW66]{AW66}
Huzihiro Araki and E.~J. Woods.
\newblock Complete {B}oolean algebras of type {I} factors.
\newblock {\em Publ. Res. Inst. Math. Sci. Ser. A}, 2:157--242, 1966.
\newblock \url{https://doi.org/10.2977/prims/1195195888}.

\bibitem[AZ05]{AZ05}
Huzihiro Araki and L{\'a}szl{\'o} Zsid{\'o}.
\newblock Extension of the structure theorem of {B}orchers and its application
  to half-sided modular inclusions.
\newblock {\em Rev. Math. Phys.}, 17(5):491--543, 2005.
\newblock \url{https://arxiv.org/abs/math/0412061}.

\bibitem[BCFM15]{BCFM15}
Raphael Bousso, Horacio Casini, Zachary Fisher, and Juan Maldacena.
\newblock Entropy on a null surface for interacting quantum field theories and
  the {B}ousso bound.
\newblock {\em Phys. Rev. D}, 91(8):084030, 17, 2015.
\newblock \url{https://arxiv.org/abs/1406.4545}.

\bibitem[BFK{\etalchar{+}}16]{BFKLW16}
Raphael Bousso, Zachary Fisher, Jason Koeller, Stefan Leichenauer, and Aron~C.
  Wall.
\newblock Proof of the quantum null energy condition.
\newblock {\em Physical Review D}, 93(2), Jan 2016.
\newblock \url{https://arxiv.org/abs/1509.02542}.

\bibitem[BFKW19]{BFKW19}
Srivatsan Balakrishnan, Thomas Faulkner, Zuhair~U. Khandker, and Huajia Wang.
\newblock A general proof of the quantum null energy condition.
\newblock {\em Journal of High Energy Physics}, 2019(9), Sep 2019.
\newblock \url{https://arxiv.org/abs/1706.09432}.

\bibitem[BFLW16]{BFLW16}
Raphael Bousso, Zachary Fisher, Stefan Leichenauer, and Aron~C. Wall.
\newblock Quantum focusing conjecture.
\newblock {\em Physical Review D}, 93(6), Mar 2016.
\newblock \url{https://arxiv.org/abs/1506.02669}.

\bibitem[BGL02]{BGL02}
R.~Brunetti, D.~Guido, and R.~Longo.
\newblock Modular localization and {W}igner particles.
\newblock {\em Rev. Math. Phys.}, 14(7-8):759--785, 2002.
\newblock \url{https://arxiv.org/abs/math-ph/0203021}.

\bibitem[BLM11]{BLM11}
Henning Bostelmann, Gandalf Lechner, and Gerardo Morsella.
\newblock Scaling limits of integrable quantum field theories.
\newblock {\em Rev. Math. Phys.}, 23(10):1115--1156, 2011.
\newblock \url{https://arxiv.org/abs/1105.2781}.

\bibitem[BMT88]{BMT88}
Detlev Buchholz, Gerhard Mack, and Ivan Todorov.
\newblock The current algebra on the circle as a germ of local field theories.
\newblock {\em Nuclear Phys. B Proc. Suppl.}, 5B:20--56, 1988.
\newblock \url{https://www.researchgate.net/publication/222585851}.

\bibitem[Bor92]{Borchers92}
H.-J. Borchers.
\newblock The {CPT}-theorem in two-dimensional theories of local observables.
\newblock {\em Comm. Math. Phys.}, 143(2):315--332, 1992.
\newblock \url{http://projecteuclid.org/euclid.cmp/1104248958}.

\bibitem[BT15]{BT15}
Marcel Bischoff and Yoh Tanimoto.
\newblock Integrable {QFT} and {L}ongo-{W}itten endomorphisms.
\newblock {\em Ann. Henri Poincar\'e}, 16(2):569--608, 2015.
\newblock \url{https://arxiv.org/abs/1305.2171}.

\bibitem[BW76]{BW76}
Joseph~J. Bisognano and Eyvind~H. Wichmann.
\newblock On the duality condition for quantum fields.
\newblock {\em J. Mathematical Phys.}, 17(3):303--321, 1976.
\newblock \url{https://doi.org/10.1063/1.522898}.

\bibitem[Cas08]{Casini08}
H.~Casini.
\newblock Relative entropy and the {B}ekenstein bound.
\newblock {\em Classical Quantum Gravity}, 25(20):205021, 12, 2008.
\newblock \url{https://arxiv.org/abs/0804.2182}.

\bibitem[CF20]{CF18}
Fikret Ceyhan and Thomas Faulkner.
\newblock Recovering the {QNEC} from the {ANEC}.
\newblock {\em Comm. Math. Phys.}, 377(2):999--1045, 2020.
\newblock \url{https://arxiv.org/abs/1812.04683}.

\bibitem[CLR20]{CLR20}
Fabio Ciolli, Roberto Longo, and Giuseppe Ruzzi.
\newblock The {I}nformation in a {W}ave.
\newblock {\em Comm. Math. Phys.}, 379(3):979--1000, 2020.
\newblock \url{https://arxiv.org/abs/1703.10656}.

\bibitem[CTT17a]{CTT17'}
Horacio Casini, Eduardo Test\'{e}, and Gonzalo Torroba.
\newblock Markov property of the conformal field theory vacuum and the {$a$}
  theorem.
\newblock {\em Phys. Rev. Lett.}, 118(26):261602, 5, 2017.
\newblock \url{https://arxiv.org/abs/1704.01870}.

\bibitem[CTT17b]{CTT17}
Horacio Casini, Eduardo Test\'{e}, and Gonzalo Torroba.
\newblock Modular {H}amiltonians on the null plane and the {M}arkov property of
  the vacuum state.
\newblock {\em J. Phys. A}, 50(36):364001, 29, 2017.
\newblock \url{https://arxiv.org/abs/1703.10656}.

\bibitem[Dix81]{Dixmier81}
Jacques Dixmier.
\newblock {\em von {N}eumann algebras}, volume~27 of {\em North-Holland
  Mathematical Library}.
\newblock North-Holland Publishing Co., Amsterdam, 1981.
\newblock \url{https://books.google.com/books?id=8xSoAAAAIAAJ}.

\bibitem[Dri77a]{Driessler76-2}
W.~Driessler.
\newblock On the structure of fields and algebras on null planes. {I}. {L}ocal
  algebras.
\newblock {\em Acta Phys. Austriaca}, 46(2):63--96, 1976/77.

\bibitem[Dri77b]{Driessler76-1}
W.~Driessler.
\newblock On the structure of fields and algebras on null planes. {II}. {F}ield
  structure.
\newblock {\em Acta Phys. Austriaca}, 46(3):163--196, 1976/77.

\bibitem[Few12]{Few12}
C.~Fewster.
\newblock Lectures on quantum energy inequalities.
\newblock 2012.
\newblock \url{https://arxiv.org/abs/1208.5399}.

\bibitem[FH05]{FH05}
Christopher~J. Fewster and Stefan Hollands.
\newblock Quantum energy inequalities in two-dimensional conformal field
  theory.
\newblock {\em Rev. Math. Phys.}, 17(5):577--612, 2005.
\newblock \url{https://arxiv.org/abs/math-ph/0412028}.

\bibitem[Flo98]{Florig98}
Martin Florig.
\newblock On {B}orchers' theorem.
\newblock {\em Lett. Math. Phys.}, 46(4):289--293, 1998.
\newblock \url{https://dx.doi.org/10.1023/A:1007546507392}.

\bibitem[FLPW16]{FLPW16}
Thomas Faulkner, Robert~G. Leigh, Onkar Parrikar, and Huajia Wang.
\newblock Modular {H}amiltonians for deformed half-spaces and the averaged null
  energy condition.
\newblock {\em Journal of High Energy Physics}, 2016(9), Sep 2016.
\newblock \url{https://arxiv.org/abs/1605.08072}.

\bibitem[FR03]{FR03}
Christopher~J. Fewster and Thomas~A. Roman.
\newblock Null energy conditions in quantum field theory.
\newblock {\em Phys. Rev. D (3)}, 67(4):044003, 11, 2003.
\newblock \url{https://arxiv.org/abs/gr-qc/0209036}.

\bibitem[GLRV01]{GLRV01}
D.~Guido, R.~Longo, J.~E. Roberts, and R.~Verch.
\newblock Charged sectors, spin and statistics in quantum field theory on
  curved spacetimes.
\newblock {\em Rev. Math. Phys.}, 13(2):125--198, 2001.
\newblock \url{https://arxiv.org/abs/math-ph/9906019}.

\bibitem[Hol20]{Hollands19}
Stefan Hollands.
\newblock Relative entropy for coherent states in chiral {CFT}.
\newblock {\em Lett. Math. Phys.}, 110(4):713--733, 2020.
\newblock \url{https://arxiv.org/abs/1903.07508}.

\bibitem[KLLSM18]{KLLS18}
Jason Koeller, Stefan Leichenauer, Adam Levine, and Arvin Shahbazi-Moghaddam.
\newblock Local modular {H}amiltonians from the quantum null energy condition.
\newblock {\em Phys. Rev. D}, 97(6):065011, 6, 2018.
\newblock \url{https://arxiv.org/abs/1702.00412}.

\bibitem[KR97]{KR97-2}
Richard~V. Kadison and John~R. Ringrose.
\newblock {\em Fundamentals of the theory of operator algebras. {V}ol. {II}},
  volume~16 of {\em Graduate Studies in Mathematics}.
\newblock American Mathematical Society, Providence, RI, 1997.
\newblock \url{https://books.google.com/books?id=6eorDAAAQBAJ}.

\bibitem[Kuc00]{Kuckert00}
Bernd Kuckert.
\newblock Localization regions of local observables.
\newblock {\em Comm. Math. Phys.}, 215(1):197--216, 2000.
\newblock \url{https://arxiv.org/abs/math-ph/0002040}.

\bibitem[LLSM18]{LLS18}
Stefan Leichenauer, Adam Levine, and Arvin Shahbazi-Moghaddam.
\newblock Energy density from second shape variations of the von neumann
  entropy.
\newblock {\em Physical Review D}, 98(8), Oct 2018.
\newblock \url{https://arxiv.org/abs/1802.02584}.

\bibitem[LMR16]{LMR16}
Roberto Longo, Vincenzo Morinelli, and Karl-Henning Rehren.
\newblock Where infinite spin particles are localizable.
\newblock {\em Comm. Math. Phys.}, 345(2):587--614, 2016.
\newblock \url{https://arxiv.org/abs/1505.01759}.

\bibitem[Lon08]{Longo08}
Roberto Longo.
\newblock Real {H}ilbert subspaces, modular theory, {${\rm SL}(2,{\bf R})$} and
  {CFT}.
\newblock In {\em Von {N}eumann algebras in {S}ibiu: {C}onference
  {P}roceedings}, pages 33--91. Theta, Bucharest, 2008.
\newblock
  \url{https://www.mat.uniroma2.it/longo/Lecture-Notes_files/LN-Part1.pdf}.

\bibitem[Lon19]{Longocoherent}
Roberto Longo.
\newblock Entropy of coherent excitations.
\newblock {\em Lett. Math. Phys.}, 109(12):2587--2600, 2019.
\newblock \url{https://arxiv.org/abs/1901.02366}.

\bibitem[Lon20]{LongoLocalised}
Roberto Longo.
\newblock Entropy distribution of localised states.
\newblock {\em Comm. Math. Phys.}, 373(2):473--505, 2020.
\newblock \url{https://arxiv.org/abs/1809.03358}.

\bibitem[LRT78]{LRT78}
Pen Leyland, John Roberts, and Daniel Testard.
\newblock Duality for quantum free fields.
\newblock {\em unpublished manuscript, Marseille}, 1978.

\bibitem[LX18]{LX18}
Roberto Longo and Feng Xu.
\newblock Relative entropy in {CFT}.
\newblock {\em Adv. Math.}, 337:139--170, 2018.
\newblock \url{https://arxiv.org/abs/1712.07283}.

\bibitem[Mor18]{Morinelli18}
Vincenzo Morinelli.
\newblock The {B}isognano-{W}ichmann property on nets of standard subspaces,
  some sufficient conditions.
\newblock {\em Ann. Henri Poincar\'e}, 19(3):937--958, 2018.
\newblock \url{https://arxiv.org/abs/1703.06831}.

\bibitem[MT18]{MT18}
Vincenzo Morinelli and Yoh Tanimoto.
\newblock Scale and {M}\"obius covariance in two-dimensional {H}aag-{K}astler
  net.
\newblock 2018.
\newblock \url{https://arxiv.org/abs/1807.04707}.

\bibitem[Nap71]{Napiorkowski71}
Kazimierz Napi\'{o}rkowski.
\newblock Continuous tensor products of {H}ilbert spaces and product operators.
\newblock {\em Studia Math.}, 39:307--327. (errata insert), 1971.
\newblock \url{http://matwbn.icm.edu.pl/ksiazki/sm/sm39/sm39122.pdf}.

\bibitem[OP04]{OP04}
M.~Ohya and D.~Petz.
\newblock {\em Quantum Entropy and Its Use}.
\newblock Theoretical and Mathematical Physics. Springer Berlin Heidelberg,
  2004.
\newblock \url{https://books.google.co.jp/books?id=r2ullNVyESQC}.

\bibitem[Pan20]{Pan19}
Lorenzo Panebianco.
\newblock A formula for the relative entropy in chiral {CFT}.
\newblock {\em Lett. Math. Phys.}, 110(9):2363--2381, 2020.
\newblock \url{https://arxiv.org/abs/1911.10136}.

\bibitem[Sch05]{Schroer05}
Bert Schroer.
\newblock Constructive proposals for {QFT} based on the crossing property and
  on lightfront holography.
\newblock {\em Ann. Physics}, 319(1):48--91, 2005.
\newblock \url{https://arxiv.org/abs/hep-th/0406016}.

\bibitem[SS72]{SS72}
S.~Schlieder and E.~Seiler.
\newblock Some remarks on the ``null plane development'' of a relativistic
  quantum field theory.
\newblock {\em Comm. Math. Phys.}, 25:62--72, 1972.
\newblock \url{http://projecteuclid.org/euclid.cmp/1103857838}.

\bibitem[Tak03]{TakesakiIII}
M.~Takesaki.
\newblock {\em Theory of operator algebras. {III}}, volume 127 of {\em
  Encyclopaedia of Mathematical Sciences}.
\newblock Springer-Verlag, Berlin, 2003.
\newblock \url{https://books.google.com/books?id=MGGhL15Ggg4C}.

\bibitem[Tan14]{Tanimoto14-1}
Yoh Tanimoto.
\newblock Construction of two-dimensional quantum field models through
  {L}ongo-{W}itten endomorphisms.
\newblock {\em Forum Math. Sigma}, 2:e7, 31, 2014.
\newblock \url{https://arxiv.org/abs/1301.6090}.

\bibitem[Ull04]{Ullrich04}
Peter Ullrich.
\newblock On the restriction of quantum fields to a lightlike surface.
\newblock {\em J. Math. Phys.}, 45(8):3109--3145, 2004.
\newblock \url{https://doi.org/10.1063/1.1765746}.

\bibitem[Var85]{Var85}
V.S. Varadarajan.
\newblock {\em {Geometry of quantum theory}}.
\newblock Springer-Verlag, New York, second edition edition, 1985.
\newblock \url{https://books.google.com/books?id=AbbMLuXC9tsC}.

\bibitem[Ver00]{Verch00}
Rainer Verch.
\newblock The averaged null energy condition for general quantum field theories
  in two dimensions.
\newblock {\em J. Math. Phys.}, 41(1):206--217, 2000.
\newblock \url{https://arxiv.org/abs/math-ph/9904036}.

\bibitem[Wal12]{Wall12}
Aron~C. Wall.
\newblock Proof of the generalized second law for rapidly changing fields and
  arbitrary horizon slices.
\newblock {\em Phys. Rev. D}, 85:104049, May 2012.
\newblock \url{https://arxiv.org/abs/1105.3445}.

\bibitem[Wie93]{Wiesbrock93-1}
Hans-Werner Wiesbrock.
\newblock Half-sided modular inclusions of von-{N}eumann-algebras.
\newblock {\em Comm. Math. Phys.}, 157(1):83--92, 1993.
\newblock \url{https://projecteuclid.org/euclid.cmp/1104253848}.

\end{thebibliography}
\end{document}